\newtheorem{theorem}{Theorem}
\newtheorem{lemma}{Lemma}
\newtheorem{problem}{Problem}
\newtheorem{corollary}{Corollary}
\newtheorem{myclaim}{Claim}
\newcommand{\ival}[2]{\llbracket #1, #2\rrbracket}
\newcommand{\xch}{\textrm{XCH}}
\newcommand{\lic}{\textrm{LIC}}
\newcommand{\nat}{\mathbb{N}}
\newcommand{\natp}{\mathbb{N}^{+}}
\newcommand{\rels}{\mathbb{N}^{*}}
\newcommand{\rel}{\mathbb{Z}}
\newcommand{\reel}{\mathbb{R}}
\newcommand{\reelp}{\mathbb{R}^{+}}
\newcommand{\reels}{\mathbb{R}^{+}}
\newlength{\legendelength}
\newlength{\legendemaxlength}
\newcommand{\llegende}[2]{
\settowidth{\legendelength}{\small\emph{#2}}
\ifthenelse{\lengthtest{\legendelength<\legendemaxlength}}
{\caption[#1]{\small \emph{#2}}}
{\caption[#1]{\small \protect\parbox[t]{0.8\textwidth}{\emph{#2}}}}
}
\newcommand{\legende}[1]{\llegende{}{#1}}
\begin{document}

\title{The hardness of routing two pairs on one face}
\author{Guyslain Naves}

\date{Received: date / Accepted: date}

\maketitle

\begin{abstract}
We prove the NP-completeness of the integer multiflow problem in planar graphs, with the following restrictions: there are only two demand edges, both lying on the infinite face of the routing graph. This was one of the open challenges concerning disjoint paths, explicitly asked by M\"{u}ller~\cite{Muller}. It also strengthens Schw\"arzler's recent proof of one of the open problems of Schrijver's book~\cite{Schrijverbook}, about the complexity of the edge-disjoint paths problem with terminals on the outer boundary of a planar graph. We also give a directed acyclic reduction. This proves that the arc-disjoint paths problem is NP-complete in directed acyclic graphs, even with only two demand arcs.
\end{abstract}

\section{Introduction}

The multiflow problem has been studied in combinatorial optimization for many years, both because of its theoretical interests and applications. Basically, we want to find integer flows between pairs of terminals, respecting capacity constraints.\\
The general problem is NP-complete, with different types of constraints, see e.g. the survey of Frank~\cite{Frank}. The most general way to define constraints is to put capacities on the edges of the graph, in the same way as for the classical flow problem. When these capacities are $1$ everywhere, this defines the edge-disjoint (or arc-disjoint) paths problem. Robertson and Seymour~\cite{Robertson95} proved that the multiflow problem is polynomial for undirected graphs, assuming that the total demand is fixed.\\ 
Special interest has been shown for solving the problem in planar graphs (directed or not). Kramer and Van Leeuwen~\cite{Kramer84} have shown that the undirected planar multiflow problem is NP-complete in the general case. Nevertheless, a good characterization theorem has been proved by Okamura and Seymour~\cite{Okamura} for the edge-disjoint paths problem in planar Eulerian graphs, under the assumption that all terminals are on the boundary of a unique face of the graph. Despite sharpenings (see~\cite{Frank85},~\cite{Okamura83}), the non-Eulerian case remained open until 2007, when Schw\"{a}rzler~\cite{Schwarzler} proved the NP-completeness of the edge-disjoint paths problem in planar graphs with all terminals on the boundary of the same face of the graph.\\
Between Robertson and Seymour's result, and Schw\"{a}rzler's result, one could ask if there is a polynomial-time algorithm for the edge-disjoint paths problem in planar graphs, when the number of different pairs of terminals is fixed, but each may be repeated many times. Especially when there are only two parallel classes of demands, with terminals lying on a single face of the graph. Actually, Schw\"{a}rzler's proof can easily be modified to prove that the problem is still NP-complete with three pairs of terminals. In this paper we give a new reduction, proving that the problem is NP-complete with only two pairs of terminals. This solves a question of M\"{u}ller~\cite{Muller}.\\
We will also give a directed version of our proof, showing that the arc-disjoint paths problem is NP-complete, even if $G$ is planar with two opposite parallel classes of demand edges $st$ and $ts$ where vertices $s$ and $t$ belong to the boundary of the same face of $G$. Both results strengthen~\cite{Muller}. Finally, we prove the NP-completeness when $G$ is a planar acyclic digraph and $H$ consists of two pairs of terminals lying on the outer face of $G$.\\

\section{Definitions}\label{sec:def}

Let $G=(V,E)$ be an undirected graph, and let $c : E \to \nat$ be a \emph{capacity function} on the edges of $G$. Let $H=(T,D)$ be an undirected graph with $T \subseteq V$, and $r : D \to \nat$ a \emph{demand function} (or \emph{request}). The \emph{multiflow problem} is to find a multiset $\mathcal{C}$ of cycles of $G+H$ satisfying the following conditions :
\begin{description}
\item[$(i)$] Each cycle of $\mathcal{C}$ contains exactly one edge of $H$.
\item[$(ii)$] For each edge of $G$, the number of cycles in $\mathcal{C}$ using it is less than its capacity.
\item[$(iii)$] For each edge of $H$, the number of cycles in $\mathcal{C}$ using it is exactly its request.
\end{description}
$H$ is usually called the \emph{demand graph}, $T$ is the set of \emph{terminals}. By cycle, we mean a closed sequence of distinct edges that are consecutive in the graph, or equivalently a connected Eulerian subgraph. The problem can easily be defined in digraphs, by replacing every occurrence of ``cycle'' by ``directed cycle''. Thus, an instance of the multiflow problem consists of a quadruple $(G,H,r,c)$. In the following, $c$ will always be supposed to be equal to $1$. We will note $\mathcal{P}$ the set of paths obtained from $\mathcal{C}$ by ignoring the demand edges, and we will mainly speak about these paths instead of the cycles.\\
A \emph{path} is formally a sequence of distinct edges that are consecutive. Thus, cycles are closed paths. Two paths are \emph{edge-disjoint} if their edge-sets are disjoint. We define \emph{directed paths} and \emph{arc-disjoint paths} analogously. When $c$ is constantly equal to $1$, the multiflow problem is known as the \emph{edge-disjoint paths problem} (respectively the \emph{arc-disjoint paths problem}).\\
The graphs considered in this paper are always without loops, but parallel edges are allowed. Actually, whenever an edge $e \in E(G)$ has a capacity greater than $1$, we replace it by $c(e)$ parallel edges. Let $U \in V$ be a subset of a vertex set of the graph. We note $\delta(U)$ the set of edges having exactly one extremity in $U$. Every set of edges that can be written as $\delta(U)$ for some $U$ is called a \emph{cut} of the graph. In directed graphs, $\delta^-(U)$ is the set of arcs entering $U$, $\delta^+(U)$ is the set of arcs leaving $U$. When $\delta(U) = \delta^+(U)$, we say that $U$ is a \emph{directed cut}. We define $d(U) := |\delta(U)|$ and similarly $d^+(U)$ and $d^-(U)$.\\
Let $C$ be a cut of $G+H$. $C$ is a \emph{tight cut} if $c(C \cap E) - r(C \cap D) = 0$. If this difference is negative, the multiflow problem is not feasible. When $H$ is reduced to a single edge, the famous \emph{max-flow-min-cut} result states that a multiflow exists if and only if this difference is never negative (see Menger's theorem~\cite{Menger}). When $C$ is a tight cut, each edge of $C$ is entirely used in any solution of the multiflow problem : there are as many paths through each edge as its capacity. In directed graphs, tight cuts are the cuts $\delta(U)$ of $G+H$ with $c(\delta^+_G(U)) - r(\delta^-_H(U)) = 0$.\\
A \emph{planar graph} is a graph that has an embedding in the plane without intersection of the edges (or arcs). Let $P_1$ and $P_2$ be two edge-disjoint paths in $G$. $P_1$ and $P_2$ \emph{cross} at vertex $v \in V(G)$ if there are four edges $e_1,\ldots,e_4$ incident to $v$, appearing in this order around $v$, such that $e_1$ and $e_3$ are consecutive in $P_1$ and $e_2$, $e_4$ are consecutive in $P_2$. However, two paths may have a common vertex without crossing, see Figure~\ref{fig:biflot-croise}. A \emph{crossing} is a triple $(P,Q,u)$ such that $P$ and $Q$ crosses at vertex $u$.\\

\begin{figure}
\begin{center}
\begin{tikzpicture}[x=0.5cm,y=0.5cm,>=latex]
\node[circle,inner sep = 0pt,minimum size =3pt,fill = black] (nx15y17) at (15,17) {};
\node[circle,inner sep = 0pt,minimum size =3pt,fill = black] (nx8y17) at (8,17) {};
\node[circle,inner sep = 0pt,minimum size =0pt,fill = black] (nx17y19) at (17,19) {};
\node[circle,inner sep = 0pt,minimum size =0pt,fill = black] (nx13y15) at (13,15) {};
\node[circle,inner sep = 0pt,minimum size =0pt,fill = black] (nx17y15) at (17,15) {};
\node[circle,inner sep = 0pt,minimum size =0pt,fill = black] (nx13y19) at (13,19) {};
\node[circle,inner sep = 0pt,minimum size =0pt,fill = black] (nx10y19) at (10,19) {};
\node[circle,inner sep = 0pt,minimum size =0pt,fill = black] (nx10y15) at (10,15) {};
\node[circle,inner sep = 0pt,minimum size =0pt,fill = black] (nx6y19) at (6,19) {};
\node[circle,inner sep = 0pt,minimum size =0pt,fill = black] (nx6y15) at (6,15) {};
\draw[dashed] (nx13y15) -- (15,17) -- (nx17y19);
\draw[dotted] (nx13y19) -- (15,17) -- (nx17y15);
\draw[dashed, rounded corners=6pt] (nx10y15) -- (8,17) -- (nx10y19);
\draw[dotted, rounded corners=6pt] (nx6y15) -- (8,17) -- (nx6y19);
\end{tikzpicture}
\end{center}
\legende{The two paths on the left do not cross, those on the right cross each other.}
\label{fig:biflot-croise}
\end{figure}

\section{Outline of the proof}\label{sec:outline}

We will prove the NP-completeness of the edge-disjoint paths problem with only two pairs of terminals, by reduction from {\sc 3-Sat}. Before giving the full proof, which is quite technical, we explain the main ideas of the reduction.\\
Consider an instance of {\sc 3-Sat}, consisting of clauses over a set of variables. We build a graph in the form of a grid, with as many columns as there are clauses, and as many rows as twice the number of variables. In this grid, the intersection of each row with each column is one of two basic graphs. These two graphs, called $\xch$ and $\lic$ should have the following properties. They have two vertices of degree one in their left and right sides, and four --- actually two pairs --- at their top and their bottom. We suppose that there will always be two paths either from the two left top vertices or from the two right ones, going through them from top to bottom (\emph{vertical paths}). We will also always route one or two paths from left to right (\emph{horizontal paths}). If there are two horizontal paths, then the two vertical paths can only be routed diagonally. The behaviours of $\xch$ and $\lic$ are different if there is only one horizontal paths. In that case, in $\xch$, the vertical paths are still forced to go diagonally through the graph, but in $\lic$, they can be routed vertically. Figures~\ref{fig:biflot-pres1} and~\ref{fig:biflot-pres2} illustrate the possible routing through the two graphs (and what we mean by \emph{diagonally} and \emph{vertically}).

\begin{figure}[htb]
\begin{center}
\begin{tikzpicture}[x=0.3cm,y=0.3cm,>=latex]
\definecolor{lightg}{rgb}{0.8,0.8,0.8}
\fill[lightg,rounded corners=2pt] (17.5,6.5) rectangle (27.5,11.5);
\fill[lightg,rounded corners=2pt] (2.5,6.5) rectangle (12.5,11.5);
\fill[lightg,rounded corners=2pt] (17.5,17.5) rectangle (27.5,22.5);
\fill[lightg,rounded corners=2pt] (2.5,17.5) rectangle (12.5,22.5);
\node[circle,inner sep = 0pt,minimum size =5pt,fill = black] (nx11y13) at (11,13) {};
\node[circle,inner sep = 0pt,minimum size =5pt,fill = black] (nx9y13) at (9,13) {};
\node[circle,inner sep = 0pt,minimum size =5pt,fill = black] (nx6y13) at (6,13) {};
\node[circle,inner sep = 0pt,minimum size =5pt,fill = black] (nx4y13) at (4,13) {};
\node[circle,inner sep = 0pt,minimum size =5pt,fill = black] (nx1y10) at (1,10) {};
\node[circle,inner sep = 0pt,minimum size =5pt,fill = black] (nx1y8) at (1,8) {};
\node[circle,inner sep = 0pt,minimum size =5pt,fill = black] (nx4y5) at (4,5) {};
\node[circle,inner sep = 0pt,minimum size =5pt,fill = black] (nx6y5) at (6,5) {};
\node[circle,inner sep = 0pt,minimum size =5pt,fill = black] (nx9y5) at (9,5) {};
\node[circle,inner sep = 0pt,minimum size =5pt,fill = black] (nx11y5) at (11,5) {};
\node[circle,inner sep = 0pt,minimum size =5pt,fill = black] (nx14y8) at (14,8) {};
\node[circle,inner sep = 0pt,minimum size =5pt,fill = black] (nx14y10) at (14,10) {};
\node[circle,inner sep = 0pt,minimum size =5pt,fill = black] (nx16y10) at (16,10) {};
\node[circle,inner sep = 0pt,minimum size =5pt,fill = black] (nx16y8) at (16,8) {};
\node[circle,inner sep = 0pt,minimum size =5pt,fill = black] (nx19y5) at (19,5) {};
\node[circle,inner sep = 0pt,minimum size =5pt,fill = black] (nx21y5) at (21,5) {};
\node[circle,inner sep = 0pt,minimum size =5pt,fill = black] (nx24y5) at (24,5) {};
\node[circle,inner sep = 0pt,minimum size =5pt,fill = black] (nx26y5) at (26,5) {};
\node[circle,inner sep = 0pt,minimum size =5pt,fill = black] (nx29y8) at (29,8) {};
\node[circle,inner sep = 0pt,minimum size =5pt,fill = black] (nx29y10) at (29,10) {};
\node[circle,inner sep = 0pt,minimum size =5pt,fill = black] (nx26y13) at (26,13) {};
\node[circle,inner sep = 0pt,minimum size =5pt,fill = black] (nx24y13) at (24,13) {};
\node[circle,inner sep = 0pt,minimum size =5pt,fill = black] (nx21y13) at (21,13) {};
\node[circle,inner sep = 0pt,minimum size =5pt,fill = black] (nx19y13) at (19,13) {};
\node[circle,inner sep = 0pt,minimum size =5pt,fill = black] (nx16y21) at (16,21) {};
\node[circle,inner sep = 0pt,minimum size =5pt,fill = black] (nx16y19) at (16,19) {};
\node[circle,inner sep = 0pt,minimum size =5pt,fill = black] (nx19y16) at (19,16) {};
\node[circle,inner sep = 0pt,minimum size =5pt,fill = black] (nx21y16) at (21,16) {};
\node[circle,inner sep = 0pt,minimum size =5pt,fill = black] (nx24y16) at (24,16) {};
\node[circle,inner sep = 0pt,minimum size =5pt,fill = black] (nx26y16) at (26,16) {};
\node[circle,inner sep = 0pt,minimum size =5pt,fill = black] (nx29y19) at (29,19) {};
\node[circle,inner sep = 0pt,minimum size =5pt,fill = black] (nx29y21) at (29,21) {};
\node[circle,inner sep = 0pt,minimum size =5pt,fill = black] (nx26y24) at (26,24) {};
\node[circle,inner sep = 0pt,minimum size =5pt,fill = black] (nx24y24) at (24,24) {};
\node[circle,inner sep = 0pt,minimum size =5pt,fill = black] (nx21y24) at (21,24) {};
\node[circle,inner sep = 0pt,minimum size =5pt,fill = black] (nx19y24) at (19,24) {};
\node[circle,inner sep = 0pt,minimum size =5pt,fill = black] (nx4y24) at (4,24) {};
\node[circle,inner sep = 0pt,minimum size =5pt,fill = black] (nx6y24) at (6,24) {};
\node[circle,inner sep = 0pt,minimum size =5pt,fill = black] (nx9y24) at (9,24) {};
\node[circle,inner sep = 0pt,minimum size =5pt,fill = black] (nx11y24) at (11,24) {};
\node[circle,inner sep = 0pt,minimum size =5pt,fill = black] (nx14y21) at (14,21) {};
\node[circle,inner sep = 0pt,minimum size =5pt,fill = black] (nx14y19) at (14,19) {};
\node[circle,inner sep = 0pt,minimum size =5pt,fill = black] (nx11y16) at (11,16) {};
\node[circle,inner sep = 0pt,minimum size =5pt,fill = black] (nx9y16) at (9,16) {};
\node[circle,inner sep = 0pt,minimum size =5pt,fill = black] (nx6y16) at (6,16) {};
\node[circle,inner sep = 0pt,minimum size =5pt,fill = black] (nx4y16) at (4,16) {};
\node[circle,inner sep = 0pt,minimum size =5pt,fill = black] (nx1y19) at (1,19) {};
\node[circle,inner sep = 0pt,minimum size =5pt,fill = black] (nx1y21) at (1,21) {};
\node[circle,inner sep = 0pt,minimum size =0pt,fill = black] (nx19y11) at (19,11) {};
\node[circle,inner sep = 0pt,minimum size =0pt,fill = black] (nx21y11) at (21,11) {};
\node[circle,inner sep = 0pt,minimum size =0pt,fill = black] (nx24y11) at (24,11) {};
\node[circle,inner sep = 0pt,minimum size =0pt,fill = black] (nx26y11) at (26,11) {};
\node[circle,inner sep = 0pt,minimum size =0pt,fill = black] (nx27y10) at (27,10) {};
\node[circle,inner sep = 0pt,minimum size =0pt,fill = black] (nx27y8) at (27,8) {};
\node[circle,inner sep = 0pt,minimum size =0pt,fill = black] (nx26y7) at (26,7) {};
\node[circle,inner sep = 0pt,minimum size =0pt,fill = black] (nx24y7) at (24,7) {};
\node[circle,inner sep = 0pt,minimum size =0pt,fill = black] (nx21y7) at (21,7) {};
\node[circle,inner sep = 0pt,minimum size =0pt,fill = black] (nx19y7) at (19,7) {};
\node[circle,inner sep = 0pt,minimum size =0pt,fill = black] (nx18y8) at (18,8) {};
\node[circle,inner sep = 0pt,minimum size =0pt,fill = black] (nx18y10) at (18,10) {};
\node[circle,inner sep = 0pt,minimum size =0pt,fill = black] (nx3y10) at (3,10) {};
\node[circle,inner sep = 0pt,minimum size =0pt,fill = black] (nx3y8) at (3,8) {};
\node[circle,inner sep = 0pt,minimum size =0pt,fill = black] (nx4y7) at (4,7) {};
\node[circle,inner sep = 0pt,minimum size =0pt,fill = black] (nx6y7) at (6,7) {};
\node[circle,inner sep = 0pt,minimum size =0pt,fill = black] (nx9y7) at (9,7) {};
\node[circle,inner sep = 0pt,minimum size =0pt,fill = black] (nx11y7) at (11,7) {};
\node[circle,inner sep = 0pt,minimum size =0pt,fill = black] (nx12y8) at (12,8) {};
\node[circle,inner sep = 0pt,minimum size =0pt,fill = black] (nx12y10) at (12,10) {};
\node[circle,inner sep = 0pt,minimum size =0pt,fill = black] (nx11y11) at (11,11) {};
\node[circle,inner sep = 0pt,minimum size =0pt,fill = black] (nx9y11) at (9,11) {};
\node[circle,inner sep = 0pt,minimum size =0pt,fill = black] (nx6y11) at (6,11) {};
\node[circle,inner sep = 0pt,minimum size =0pt,fill = black] (nx4y11) at (4,11) {};
\node[circle,inner sep = 0pt,minimum size =0pt,fill = black] (nx18y19) at (18,19) {};
\node[circle,inner sep = 0pt,minimum size =0pt,fill = black] (nx19y18) at (19,18) {};
\node[circle,inner sep = 0pt,minimum size =0pt,fill = black] (nx21y18) at (21,18) {};
\node[circle,inner sep = 0pt,minimum size =0pt,fill = black] (nx24y18) at (24,18) {};
\node[circle,inner sep = 0pt,minimum size =0pt,fill = black] (nx26y18) at (26,18) {};
\node[circle,inner sep = 0pt,minimum size =0pt,fill = black] (nx27y19) at (27,19) {};
\node[circle,inner sep = 0pt,minimum size =0pt,fill = black] (nx27y21) at (27,21) {};
\node[circle,inner sep = 0pt,minimum size =0pt,fill = black] (nx26y22) at (26,22) {};
\node[circle,inner sep = 0pt,minimum size =0pt,fill = black] (nx24y22) at (24,22) {};
\node[circle,inner sep = 0pt,minimum size =0pt,fill = black] (nx21y22) at (21,22) {};
\node[circle,inner sep = 0pt,minimum size =0pt,fill = black] (nx19y22) at (19,22) {};
\node[circle,inner sep = 0pt,minimum size =0pt,fill = black] (nx18y21) at (18,21) {};
\node[circle,inner sep = 0pt,minimum size =0pt,fill = black] (nx6y22) at (6,22) {};
\node[circle,inner sep = 0pt,minimum size =0pt,fill = black] (nx9y22) at (9,22) {};
\node[circle,inner sep = 0pt,minimum size =0pt,fill = black] (nx11y22) at (11,22) {};
\node[circle,inner sep = 0pt,minimum size =0pt,fill = black] (nx12y21) at (12,21) {};
\node[circle,inner sep = 0pt,minimum size =0pt,fill = black] (nx12y19) at (12,19) {};
\node[circle,inner sep = 0pt,minimum size =0pt,fill = black] (nx11y18) at (11,18) {};
\node[circle,inner sep = 0pt,minimum size =0pt,fill = black] (nx9y18) at (9,18) {};
\node[circle,inner sep = 0pt,minimum size =0pt,fill = black] (nx6y18) at (6,18) {};
\node[circle,inner sep = 0pt,minimum size =0pt,fill = black] (nx4y18) at (4,18) {};
\node[circle,inner sep = 0pt,minimum size =0pt,fill = black] (nx3y19) at (3,19) {};
\node[circle,inner sep = 0pt,minimum size =0pt,fill = black] (nx3y21) at (3,21) {};
\node[circle,inner sep = 0pt,minimum size =0pt,fill = black] (nx4y22) at (4,22) {};
\draw[black] (nx21y7) -- (nx26y11);
\draw[black] (nx24y11) -- (nx19y7);
\draw[black] (nx18y10) -- (nx27y8);
\draw[black] (nx19y13) -- (nx19y11);
\draw[black] (nx21y11) -- (nx21y13);
\draw[black] (nx24y13) -- (nx24y11);
\draw[black] (nx26y11) -- (nx26y13);
\draw[black] (nx29y10) -- (nx27y10);
\draw[black] (nx27y8) -- (nx29y8);
\draw[black] (nx26y5) -- (nx26y7);
\draw[black] (nx24y7) -- (nx24y5);
\draw[black] (nx21y5) -- (nx21y7);
\draw[black] (nx19y7) -- (nx19y5);
\draw[black] (nx18y8) -- (nx16y8);
\draw[black] (nx16y10) -- (nx18y10);
\draw[black] (nx6y11) -- (nx11y7);
\draw[black] (nx4y11) -- (nx9y7);
\draw[black] (nx3y10) -- (nx12y10);
\draw[black] (nx1y10) -- (nx3y10);
\draw[black] (nx3y8) -- (nx1y8);
\draw[black] (nx4y5) -- (nx4y7);
\draw[black] (nx6y7) -- (nx6y5);
\draw[black] (nx9y5) -- (nx9y7);
\draw[black] (nx11y7) -- (nx11y5);
\draw[black] (nx14y8) -- (nx12y8);
\draw[black] (nx12y10) -- (nx14y10);
\draw[black] (nx11y13) -- (nx11y11);
\draw[black] (nx9y11) -- (nx9y13);
\draw[black] (nx6y13) -- (nx6y11);
\draw[black] (nx4y13) -- (nx4y11);
\draw[black] (nx27y19) -- (nx18y19);
\draw[black] (nx18y21) -- (nx27y21);
\draw[black] (nx26y22) -- (nx21y18);
\draw[black] (nx24y22) -- (nx19y18);
\draw[black] (nx18y19) -- (nx16y19);
\draw[black] (nx19y16) -- (nx19y18);
\draw[black] (nx21y18) -- (nx21y16);
\draw[black] (nx24y16) -- (nx24y18);
\draw[black] (nx26y18) -- (nx26y16);
\draw[black] (nx29y19) -- (nx27y19);
\draw[black] (nx27y21) -- (nx29y21);
\draw[black] (nx26y24) -- (nx26y22);
\draw[black] (nx24y22) -- (nx24y24);
\draw[black] (nx21y24) -- (nx21y22);
\draw[black] (nx19y22) -- (nx19y24);
\draw[black] (nx16y21) -- (nx18y21);
\draw[black] (nx6y22) -- (nx11y18);
\draw[black] (nx4y22) -- (nx9y18);
\draw[black] (nx12y19) -- (nx3y19);
\draw[black] (nx3y21) -- (nx12y21);
\draw[black] (nx6y22) -- (nx6y24);
\draw[black] (nx9y22) -- (nx9y24);
\draw[black] (nx11y22) -- (nx11y24);
\draw[black] (nx14y21) -- (nx12y21);
\draw[black] (nx12y19) -- (nx14y19);
\draw[black] (nx11y16) -- (nx11y18);
\draw[black] (nx9y18) -- (nx9y16);
\draw[black] (nx6y16) -- (nx6y18);
\draw[black] (nx4y18) -- (nx4y16);
\draw[black] (nx1y19) -- (nx3y19);
\draw[black] (nx3y21) -- (nx1y21);
\draw[black] (nx4y24) -- (nx4y22);
\end{tikzpicture}
\end{center}
\legende{Possible routings through $\xch$ and $\lic$.}
\label{fig:biflot-pres1}
\end{figure}

\begin{figure}[htb]
\begin{center}
\begin{tikzpicture}[x=0.3cm,y=0.3cm,>=latex]
\definecolor{lightg}{rgb}{0.8,0.8,0.8}
\fill[lightg,rounded corners=2pt] (17.5,17.5) rectangle (27.5,22.5);
\fill[lightg,rounded corners=2pt] (2.5,17.5) rectangle (12.5,22.5);
\node[circle,inner sep = 0pt,minimum size =5pt,fill = black] (nx1y21) at (1,21) {};
\node[circle,inner sep = 0pt,minimum size =5pt,fill = black] (nx1y19) at (1,19) {};
\node[circle,inner sep = 0pt,minimum size =5pt,fill = black] (nx4y16) at (4,16) {};
\node[circle,inner sep = 0pt,minimum size =5pt,fill = black] (nx6y16) at (6,16) {};
\node[circle,inner sep = 0pt,minimum size =5pt,fill = black] (nx9y16) at (9,16) {};
\node[circle,inner sep = 0pt,minimum size =5pt,fill = black] (nx11y16) at (11,16) {};
\node[circle,inner sep = 0pt,minimum size =5pt,fill = black] (nx14y19) at (14,19) {};
\node[circle,inner sep = 0pt,minimum size =5pt,fill = black] (nx14y21) at (14,21) {};
\node[circle,inner sep = 0pt,minimum size =5pt,fill = black] (nx16y21) at (16,21) {};
\node[circle,inner sep = 0pt,minimum size =5pt,fill = black] (nx16y19) at (16,19) {};
\node[circle,inner sep = 0pt,minimum size =5pt,fill = black] (nx19y16) at (19,16) {};
\node[circle,inner sep = 0pt,minimum size =5pt,fill = black] (nx21y16) at (21,16) {};
\node[circle,inner sep = 0pt,minimum size =5pt,fill = black] (nx24y16) at (24,16) {};
\node[circle,inner sep = 0pt,minimum size =5pt,fill = black] (nx26y16) at (26,16) {};
\node[circle,inner sep = 0pt,minimum size =5pt,fill = black] (nx29y19) at (29,19) {};
\node[circle,inner sep = 0pt,minimum size =5pt,fill = black] (nx29y21) at (29,21) {};
\node[circle,inner sep = 0pt,minimum size =5pt,fill = black] (nx26y24) at (26,24) {};
\node[circle,inner sep = 0pt,minimum size =5pt,fill = black] (nx24y24) at (24,24) {};
\node[circle,inner sep = 0pt,minimum size =5pt,fill = black] (nx21y24) at (21,24) {};
\node[circle,inner sep = 0pt,minimum size =5pt,fill = black] (nx19y24) at (19,24) {};
\node[circle,inner sep = 0pt,minimum size =5pt,fill = black] (nx11y24) at (11,24) {};
\node[circle,inner sep = 0pt,minimum size =5pt,fill = black] (nx9y24) at (9,24) {};
\node[circle,inner sep = 0pt,minimum size =5pt,fill = black] (nx6y24) at (6,24) {};
\node[circle,inner sep = 0pt,minimum size =5pt,fill = black] (nx4y24) at (4,24) {};
\node[circle,inner sep = 0pt,minimum size =0pt,fill = black] (nx3y21) at (3,21) {};
\node[circle,inner sep = 0pt,minimum size =0pt,fill = black] (nx4y22) at (4,22) {};
\node[circle,inner sep = 0pt,minimum size =0pt,fill = black] (nx6y22) at (6,22) {};
\node[circle,inner sep = 0pt,minimum size =0pt,fill = black] (nx9y22) at (9,22) {};
\node[circle,inner sep = 0pt,minimum size =0pt,fill = black] (nx11y22) at (11,22) {};
\node[circle,inner sep = 0pt,minimum size =0pt,fill = black] (nx12y21) at (12,21) {};
\node[circle,inner sep = 0pt,minimum size =0pt,fill = black] (nx12y19) at (12,19) {};
\node[circle,inner sep = 0pt,minimum size =0pt,fill = black] (nx11y18) at (11,18) {};
\node[circle,inner sep = 0pt,minimum size =0pt,fill = black] (nx9y18) at (9,18) {};
\node[circle,inner sep = 0pt,minimum size =0pt,fill = black] (nx6y18) at (6,18) {};
\node[circle,inner sep = 0pt,minimum size =0pt,fill = black] (nx4y18) at (4,18) {};
\node[circle,inner sep = 0pt,minimum size =0pt,fill = black] (nx3y19) at (3,19) {};
\node[circle,inner sep = 0pt,minimum size =0pt,fill = black] (nx18y21) at (18,21) {};
\node[circle,inner sep = 0pt,minimum size =0pt,fill = black] (nx19y22) at (19,22) {};
\node[circle,inner sep = 0pt,minimum size =0pt,fill = black] (nx21y22) at (21,22) {};
\node[circle,inner sep = 0pt,minimum size =0pt,fill = black] (nx24y22) at (24,22) {};
\node[circle,inner sep = 0pt,minimum size =0pt,fill = black] (nx26y22) at (26,22) {};
\node[circle,inner sep = 0pt,minimum size =0pt,fill = black] (nx27y21) at (27,21) {};
\node[circle,inner sep = 0pt,minimum size =0pt,fill = black] (nx27y19) at (27,19) {};
\node[circle,inner sep = 0pt,minimum size =0pt,fill = black] (nx26y18) at (26,18) {};
\node[circle,inner sep = 0pt,minimum size =0pt,fill = black] (nx24y18) at (24,18) {};
\node[circle,inner sep = 0pt,minimum size =0pt,fill = black] (nx21y18) at (21,18) {};
\node[circle,inner sep = 0pt,minimum size =0pt,fill = black] (nx19y18) at (19,18) {};
\node[circle,inner sep = 0pt,minimum size =0pt,fill = black] (nx18y19) at (18,19) {};
\draw[black] (nx1y21) -- (nx3y21);
\draw[black] (nx4y22) -- (nx4y24);
\draw[black] (nx6y24) -- (nx6y22);
\draw[black] (nx9y22) -- (nx9y24);
\draw[black] (nx11y24) -- (nx11y22);
\draw[black] (nx12y21) -- (nx14y21);
\draw[black] (nx14y19) -- (nx12y19);
\draw[black] (nx11y18) -- (nx11y16);
\draw[black] (nx9y16) -- (nx9y18);
\draw[black] (nx6y18) -- (nx6y16);
\draw[black] (nx4y16) -- (nx4y18);
\draw[black] (nx3y19) -- (nx1y19);
\draw[black] (nx16y21) -- (nx18y21);
\draw[black] (nx19y22) -- (nx19y24);
\draw[black] (nx21y24) -- (nx21y22);
\draw[black] (nx24y22) -- (nx24y24);
\draw[black] (nx26y24) -- (nx26y22);
\draw[black] (nx27y21) -- (nx29y21);
\draw[black] (nx29y19) -- (nx27y19);
\draw[black] (nx26y18) -- (nx26y16);
\draw[black] (nx24y16) -- (nx24y18);
\draw[black] (nx21y18) -- (nx21y16);
\draw[black] (nx19y16) -- (nx19y18);
\draw[black] (nx18y19) -- (nx16y19);
\draw[black] (nx3y21) -- (nx12y19);
\draw[black] (nx4y22) -- (nx4y18);
\draw[black] (nx6y22) -- (nx6y18);
\draw[black] (nx18y19) -- (nx27y19);
\draw[black] (nx24y22) -- (nx24y18);
\draw[black] (nx26y22) -- (nx26y18);
\end{tikzpicture}
\end{center}
\legende{Additional routings for $\lic$. When there is only one horizontal path, the vertical paths are not forced to be diagonal.}
\label{fig:biflot-pres2}
\end{figure}

The typical behaviour for these gadgets is to change the vertical paths from one side to the other (Lemmas~\ref{lemme:xch-22} and~\ref{lemme:lic-22}). We say that they \emph{shift} the paths. The only special case when the two vertical paths can stay on the same side is the following : there is a $\lic$ and a single horizontal path. In that special case, the gadget \emph{keeps} the paths (Lemma~\ref{lemme:lic-12}). The reduction is basically the following : there are two consecutive rows for each variable, and we route three paths along these two rows. Thus, one path will follow one row, and the two other paths will follow the other row, \emph{deciding a variable assignment}. We also route two paths in each column, and we ask that \emph{these two paths are kept an odd number of times}. Because of the properties of $\lic$ and $\xch$, the two vertical paths can be kept in a particular gadget if and only if this gadget is a $\lic$ and there is only one horizontal path in the corresponding row. We place $\lic$ on those particular intersections for which the literal associated with the row appears in the clause encoded by the column. Thus, we ensure that whenever vertical paths are kept, the corresponding clause is made valid by the chosen assignment. Then we must guarantee that the two paths associated with each column are kept at least once (actually an odd number of times). This is done by asking these two paths to be routed from the two left pair of uppermost vertices of the column, to the two rightmost lower vertices. Because the number of rows is even (twice the number of variables), paths must indeed be kept an odd number of times.\\
Finally, we can add two common terminals for the vertical paths, and two others for the horizontal paths, achieving the desired restriction to two edges of demand. Unfortunately, there are several difficulties in implementing this reduction. Mainly, $\xch$ and $\lic$ does not exist, with the prescribed properties. Actually, the main difficulty is to enforce that the horizontal paths stay on their respective rows.  We will prove that with our gadgets, \emph{the horizontal paths cannot go through more than three rows in each column} (Lemma~\ref{lemme:decalage-nc}). Our solution is then to create a \emph{buffer consisting of a large number of rows}, between the rows encoding the variables. Similarly, we must prove that \emph{the vertical paths stay in their columns} (Lemmas~\ref{mainlemma} and~\ref{lemme:verticalpaths}). This  will be done, by using the property that our graph is \emph{nearly Eulerian} (almost each vertex has even degree). This fact will help us to prove that the edges not in a solution induce cycles and a small set of paths, called \emph{no-paths}. One of the main lemmas states that the extremities of each no-path are determined (Lemma~\ref{lemme:non-chemins}). More exactly, there will be exactly one no-path per variable. Considering horizontal paths and no-paths together, many cuts are tight for them. Consequently, vertical paths cannot use the edges contained in these cuts : each vertical path intersects only one column. Then, local properties of $\xch$ and $\lic$ allow us to conclude the proof.

\section{Preliminaries}

\subsection{Uncrossing the paths}

 We can suppose without loss of generality that each pair of paths induces at most one crossing :

\begin{lemma}\label{lemme:uncrossing}
Let $(G,H)$ be an instance of the edge-disjoint paths problem with $G$ planar. There is a solution for $(G,H)$ if and only if there is a solution such that each pair of paths crosses at most once, and two paths with the same extremities do not cross. 
\end{lemma}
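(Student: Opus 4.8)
The plan is to prove this by a standard uncrossing argument on a minimal solution. Concretely, among all solutions to $(G,H)$ I would pick one minimizing the total number of crossings, counted as a sum over all unordered pairs of paths $(P,Q)$ of the number of vertices at which $P$ and $Q$ cross. The claim is that in such a minimal solution, every pair of paths crosses at most once, and two paths with the same pair of endpoints do not cross at all. The forward implication of the lemma is trivial (a solution satisfying the extra property is in particular a solution), so the whole content is producing, from an arbitrary solution, one with no superfluous crossings, and the natural device is local rerouting that strictly decreases the crossing count, contradicting minimality.

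First I would treat the case of two paths $P$ and $Q$ that cross at least twice. Let $u$ and $v$ be two crossing vertices, and consider the subpaths of $P$ and of $Q$ between $u$ and $v$. The swap operation is to exchange these two subpaths: replace $P$ by following $P$ up to $u$, then the $Q$-subpath from $u$ to $v$, then $P$ from $v$ onward, and symmetrically for $Q$. Because $P$ and $Q$ are edge-disjoint, the resulting two walks are again edge-disjoint and use exactly the same edge-set as before, so capacities and demands are untouched and the endpoints of the demand edges are preserved (this is where I must be careful: the swap must pair the endpoints correctly, which it does since each of $P,Q$ keeps its original two ends). After decomposing the resulting closed/however walks back into simple paths (discarding any cycles, which only helps), the crossings at $u$ and at $v$ are both removed, while no new crossing between $P$ and $Q$ is created along the swapped portions — here planarity is what guarantees that exchanging the two arcs between two common crossing points cancels the parity of crossings rather than merely relocating them. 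The delicate bookkeeping is that the rerouting can change crossings of $P$ or $Q$ with \emph{third} paths $R$; I would argue that each such third-party crossing is merely transferred from $P$ to $Q$ or vice versa and hence the total count does not increase, while the two eliminated crossings make it strictly decrease.

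Next, for two paths with the same pair of endpoints that cross, essentially the same exchange applies at a single crossing vertex $u$: swapping the two tails after $u$ again preserves the edge multiset and the endpoint pairing (since the endpoints coincide), and removes the crossing at $u$; iterating handles all such crossings. Combining the two reductions and appealing to minimality of the crossing count yields the statement. The main obstacle, and the step I would spend the most care on, is the third-path bookkeeping together with the reduction back to simple edge-disjoint \emph{paths}: after an exchange the two resulting walks may fail to be simple or may create a cycle, and I must verify that re-extracting simple paths (and absorbing leftover cycles into the multiset $\mathcal{C}$, or discarding them) neither violates feasibility nor increases the crossing total. I would phrase this using the figure already referenced (Figure~\ref{fig:biflot-croise}) to justify that a common vertex which is not a crossing is untouched by the argument, so only genuine crossings are counted and cancelled.
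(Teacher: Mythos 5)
Your proposal is correct and follows essentially the same route as the paper: take a solution minimizing the total number of crossings, swap the subpaths of a twice-crossing pair between two crossing vertices $u$ and $v$ (respectively swap tails at the single crossing when the two paths share their extremities), and check that the exchange kills the crossings at $u$ and $v$ while third-party crossings are only redistributed between the two new paths, contradicting minimality. The ``transfer'' claim you flag as the delicate step is exactly what the paper verifies by its local analysis at $u$ and $v$ (the four edges of the two paths split the neighbourhood into four sectors, and a third path through consecutive or opposite sectors crosses the same total number of paths before and after the swap); note also that under the paper's definition of a path as a sequence of distinct edges, edge-disjointness of $P$ and $Q$ already makes the swapped objects legitimate paths, so your extra re-extraction of simple paths is harmless but not needed.
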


\begin{proof}
Let $\mathcal{P}$ be a solution minimizing the number of crossings. We can assume that every path is simple. Suppose that two paths $P_1$ and $P_2$ induce more than one crossing. Let $u$ and $v$ be the first and second crossings between $P_1$ and $P_2$, starting from one extremity of $P_1$.\\
We decompose $P_i$ in three paths $Q_i \cup R_i \cup S_i$ where $R_i$ has extremities $u$ and $v$. Then replacing $P_1$ and $P_2$ by $P'_1 = Q_1 \cup R_2 \cup S_1$ and $P'_2 = Q_2 \cup R_1 \cup S_2$, we show that the number of crossings is reduced, contradicting the minimality of $\mathcal{P}$.\\
In every vertex except $u$ and $v$, the paths are not locally modified, thus the number of crossings is not changed. The number of crossings between $P'_1$ and $P'_2$ is reduced by at least $2$. Then, the neighbourhood of vertex $u$ (and symmetrically $v$) is divided into four parts. A path going through two consecutive parts crosses one of $P_1$ and $P_2$, and one of $P'_1$ and $P'_2$. A path going through two opposite parts crosses both $P_1$ and $P_2$. This proves the number of crossing is decreased.\\
Finally, if $P_1$ and $P_2$ have a common extremity $u$, and crosses at $v$, the same transformation applies again, decreasing the number of crossing by at least one.
\end{proof}

Solutions will always be supposed \emph{uncrossed} (each pair of paths with different extremities induces at most one crossing) and simple. The following is an easy consequence of uncrossing :

\begin{lemma}\label{lemme:uncross-biflot}
Let $G$ be a planar graph, $a$, $b$, $c$ and $d$ four vertices on the boundary of the infinite face of $G$, occuring in this order. Let $\mathcal{P}$ be an uncrossed set of $(a,c)$-paths and $(b,d)$-paths mutually edge-disjoint. Then, all $(a,c)$-paths cross the $(b,d)$-paths in the same order.
\end{lemma}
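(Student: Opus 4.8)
The plan is to pass to the disk picture and reduce everything to a Jordan-curve argument about nested regions. Since $G$ is planar, embed it in the sphere so that the infinite face becomes an ordinary face $F_\infty$ whose boundary is a closed walk through $a,b,c,d$ in this cyclic order; all paths of $\mathcal{P}$ then live in the closed disk $D$ obtained by deleting the open face $F_\infty$, with their extremities on $\partial D$. First I would record that every $(a,c)$-path crosses every $(b,d)$-path exactly once. Indeed, two such paths are edge-disjoint with distinct extremities, so by Lemma~\ref{lemme:uncrossing} they cross at most once; on the other hand, because $a,b,c,d$ interleave on $\partial D$, an $(a,c)$-arc and a $(b,d)$-arc in $D$ must meet an odd number of times (mod-$2$ linking, i.e. the Jordan curve theorem), so they cross exactly once.

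Next I would exploit that the $(b,d)$-paths, sharing both extremities, are pairwise non-crossing by Lemma~\ref{lemme:uncrossing}. After a small perturbation near their common vertices they become pairwise disjoint simple arcs $Q_1,\dots,Q_k$ from $b$ to $d$, and such a family of disjoint chords splits $D$ into $k+1$ regions $D_0,D_1,\dots,D_k$, indexed so that $D_0$ is bounded by $Q_1$ together with the boundary arc from $b$ to $d$ through $a$, $D_k$ is bounded by $Q_k$ together with the arc through $c$, and $D_i$ is bounded by $Q_i$ and $Q_{i+1}$ for $0<i<k$. In particular $a\in\partial D_0$ and $c\in\partial D_k$, and crossing $Q_i$ is the only way to pass between $D_{i-1}$ and $D_i$. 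Now fix any $(a,c)$-path $P$: it starts in $\overline{D_0}$, ends in $\overline{D_k}$, and by the previous paragraph crosses each $Q_i$ exactly once. Reading $P$ from $a$ to $c$ as a walk on the path $D_0\!-\!D_1\!-\!\cdots\!-\!D_k$ that traverses each of the $k$ separating arcs exactly once and runs from the end $D_0$ to the end $D_k$, the only possibility is the monotone walk $D_0,D_1,\dots,D_k$. Hence $P$ meets $Q_1,Q_2,\dots,Q_k$ in exactly this order, and since this order is intrinsic to the family of $(b,d)$-paths and does not depend on $P$, all $(a,c)$-paths cross the $(b,d)$-paths in the same order.

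The step I expect to be the most delicate is the passage between the combinatorial notion of crossing used in the statement (four edges in prescribed rotational order at a shared vertex) and the topological notion of transversal intersection used in the region argument, since distinct paths may share vertices without crossing. The point to nail down is that a generic perturbation turns each combinatorial crossing into exactly one transversal intersection and each mere contact into none, so that the combinatorial crossing counts and their parities coincide with the topological ones; and that the pairwise non-crossing $(b,d)$-paths can indeed be perturbed simultaneously into disjoint arcs, their rotations at each common vertex being non-interleaving so that they can be pushed apart consistently. Once this correspondence is in place, the nested-region argument applies verbatim, and everything else is routine planarity.
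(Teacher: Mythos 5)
Your proposal is correct and is in substance the same argument as the paper's: the paper chooses the outermost $(b,d)$-path, closes it through the infinite face, applies the Jordan curve theorem to conclude that every $(a,c)$-path must cross it before the remaining $(b,d)$-paths, and then inducts on the rest of the family --- which is exactly your nested-region decomposition $D_0,\dots,D_k$ unrolled one region at a time. The two ingredients you make explicit --- the parity argument that each $(a,c)$-path crosses each $(b,d)$-path exactly once, and the perturbation matching combinatorial crossings with transversal intersections --- are left implicit in the paper's proof, so your write-up is, if anything, more complete on those points.
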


\begin{proof}
There is a crossing neither between the $(b,d)$-paths, nor between the $(a,c)$-paths, so we can choose $P$ among the $(b,d)$-paths such that all the other $(b,d)$-paths are on the same side. We prove the lemma by induction on the number of $(b,d)$-paths. $P$ can be closed to a cycle by adding a curve on the infinite face of $G$. Then by Jordan's theorem, $P$ separates the $(b,d)$-paths from $a$ (say). All $(a,c)$-paths, starting from $a$ must cross $P$ before the other $(b,d)$-paths. Using induction on the $(b,d)$-paths minus $P$ concludes the proof.
\end{proof}

\subsection{Forbidding crossing}

It happens to be useful to have in the proof an Eulerian graph. This is not possible since the edge-disjoint paths problem with terminals on the boundary of the infinite face is polynomially solvable as soon as all the inner vertices have even degrees~\cite{Okamura},~\cite{Frank85}. We introduce the following restriction : every vertex will be of degree four, but in some of them, paths will not be allowed to cross each other. This effect can easily be achieved by replacing those special vertices by a cycle of length $4$, as described in Figure~\ref{fig:biflot-ext} 

\begin{figure}
\begin{center}
\begin{tikzpicture}[x=0.5cm,y=0.5cm,>=latex]
\node[circle,inner sep = 0pt,minimum size =5pt,fill = black] (nx16y17) at (16,17) {};
\node[circle,inner sep = 0pt,minimum size =5pt,fill = black] (nx15y18) at (15,18) {};
\node[circle,inner sep = 0pt,minimum size =5pt,fill = black] (nx15y16) at (15,16) {};
\node[circle,inner sep = 0pt,minimum size =5pt,fill = black] (nx14y17) at (14,17) {};
\node[circle,inner sep = 0pt,minimum size =0pt,fill = black] (nx15y15) at (15,15) {};
\node[circle,inner sep = 0pt,minimum size =0pt,fill = black] (nx17y17) at (17,17) {};
\node[circle,inner sep = 0pt,minimum size =0pt,fill = black] (nx15y19) at (15,19) {};
\node[circle,inner sep = 0pt,minimum size =0pt,fill = black] (nx13y17) at (13,17) {};
\node[circle,inner sep = 0pt,minimum size =0pt,fill = black] (nx7y17) at (7,17) {};
\node[circle,inner sep = 0pt,minimum size =0pt,fill = black] (nx5y15) at (5,15) {};
\node[circle,inner sep = 0pt,minimum size =0pt,fill = black] (nx3y17) at (3,17) {};
\node[circle,inner sep = 0pt,minimum size =0pt,fill = black] (nx5y19) at (5,19) {};
\node[circle,inner sep = 0pt,minimum size =5pt,fill = black] (nx5y17) at (5,17) {};
\draw[black] (nx15y16) -- (nx15y15);
\draw[black] (nx15y16) -- (nx14y17);
\draw[black] (nx16y17) -- (nx15y16);
\draw[black] (nx16y17) -- (nx17y17);
\draw[black] (nx15y18) -- (nx16y17);
\draw[black] (nx15y19) -- (nx15y18);
\draw[black] (nx15y18) -- (nx14y17);
\draw[black] (nx13y17) -- (nx14y17);
\draw[black] (nx5y17) -- (nx7y17);
\draw[black] (nx5y17) -- (nx5y15);
\draw[black] (nx5y17) -- (nx3y17);
\draw[black] (nx5y19) -- (nx5y17);
\draw (17,19) node {$G_v$};
\draw (3,19) node {$G$};
\draw (5,17) node[anchor = north west] {$v$};
\draw (10,17) node {$\Longrightarrow$};
\end{tikzpicture}
\end{center}
\legende{How to forbid the crossing of paths at vertex $v$.}
\label{fig:biflot-ext}
\end{figure}

Formally, we define the following problem, and show that it is equivalent to the original one :

\begin{problem}[Extended Planar Edge-Disjoint Paths]\hfill\\
{\sc Input}~: a planar graph $G$, a demand graph $H$ with $V(H) \subseteq V(G)$, and $U \subseteq V(G)$.\\
{\sc Output}~: Is there a solution to the edge-disjoint paths problem $(G,H)$ such that for every $u \in U$, there is no crossing at $u$ ? 
\end{problem}

Let $G_v$ be the graph obtained from $G$ by replacing a vertex $v$ of degree $4$ by a cycle of length $4$, according to Figure~\ref{fig:biflot-ext}. The equivalence with the planar edge-disjoint paths problem is a consequence of the following obvious lemma :
\begin{lemma}
Let $(G,H,U)$ be an instance of the extended planar edge-disjoint path problem. Let $v$ be a vertex of $G$ of degree $4$. Then there is a solution to $(G_v,H,U)$ if and only if there is a solution to $(G,H,U\cup \{v\})$.\qedhere
\end{lemma}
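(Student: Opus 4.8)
The plan is to set up an explicit correspondence between solutions of the two instances that exchanges the degree-$4$ vertex $v$ with the $4$-cycle, and to extract the forbidden-crossing condition from a counting argument on the four pendant edges. First I would fix notation: let $e_1,e_2,e_3,e_4$ be the edges of $G$ incident to $v$, listed in the cyclic order given by the planar embedding, and let $w_1w_2w_3w_4w_1$ be the cycle $C$ of $G_v$, with $e_i$ re-attached to $w_i$ so that the rotation system is preserved (Figure~\ref{fig:biflot-ext}). Each $w_i$ then has degree $3$, its incident edges being $e_i$ and the two cycle edges $c_{i-1}=w_{i-1}w_i$ and $c_i=w_iw_{i+1}$ (indices mod $4$); the four edges $e_1,\dots,e_4$ form a cut separating $C$ from the rest of $G_v$, and they correspond bijectively to the edges of $G$ at $v$. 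I will assume $v\notin V(H)$, which is the case in the application, since one never forbids crossings at a terminal.

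The crux is an observation I would isolate first: \emph{two edge-disjoint paths cannot both traverse $C$ diagonally}. Indeed, a path joining $w_1$ to $w_3$ inside $C$ must use one of the arcs $\{c_1,c_2\}$ or $\{c_3,c_4\}$, and a path joining $w_2$ to $w_4$ must use $\{c_2,c_3\}$ or $\{c_1,c_4\}$; any choice of one from each shares a cycle edge. This is precisely what will forbid a crossing at $v$ once the cycle is collapsed.

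For the forward direction I would take a solution of $(G_v,H,U)$, collapse $C$ to the single vertex $v$ (deleting the cycle edges, which become loops), and replace each resulting closed or non-simple walk by a simple subpath with the same endpoints. Edge-disjointness is preserved because the cycle edges disappear and the pendant edges $e_i$ are each used at most once; since $v$ is not a terminal the demands are untouched, so this is a solution of $(G,H)$. It remains to check there is no crossing at $v$, and no new crossing at the other vertices of $U$, whose neighbourhoods are unchanged by the collapse. A crossing at $v$ would require two final paths $P_1,P_2$ with $P_1$ using $e_1,e_3$ and $P_2$ using $e_2,e_4$, that is, all four pendant edges, two per path. By the cut property this forces each of the two preimage paths to cross $C$ in a single arc using exactly two pendant edges, namely an arc from $w_1$ to $w_3$ and an arc from $w_2$ to $w_4$; these are edge-disjoint, contradicting the crux. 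Hence there is no crossing at $v$.

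For the converse I would take a solution of $(G,H,U\cup\{v\})$ and expand $v$ back to $C$. At most two paths pass through $v$, and since there is no crossing at $v$ the corresponding pairing of $\{e_1,e_2,e_3,e_4\}$ is non-crossing, i.e. consists of cyclically adjacent pairs. I route a pair $\{e_i,e_{i+1}\}$ through the single cycle edge $c_i=w_iw_{i+1}$; the complementary adjacent pair then uses $c_{i+2}$, so the two paths stay edge-disjoint in $G_v$. A single through-path, or no through-path, is routed along any arc with no conflict. Nothing else changes, so this yields a solution of $(G_v,H,U)$, the cycle vertices $w_i$ not lying in $U$. The only delicate point throughout is the simplification step in the forward direction: passing to simple subpaths could a priori introduce a crossing at $v$, and the argument that rescues it is exactly the pendant-edge count above, which pins down both offending paths to single, necessarily overlapping, arcs of $C$.
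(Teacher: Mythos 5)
Your proof is correct. Note that the paper offers no argument at all here --- it labels the lemma ``obvious'' and states it with an immediate \verb|\qedhere| --- so there is no written proof to diverge from; your writeup is the natural formalization of the intended mechanism behind Figure~\ref{fig:biflot-ext}, and its heart is exactly the right observation: any $w_1$--$w_3$ arc and any $w_2$--$w_4$ arc of the $4$-cycle share a cycle edge, so two edge-disjoint paths cannot both traverse $C$ diagonally, which is precisely what collapsing $C$ to $v$ translates into the no-crossing condition at $v$. You were also right to flag the two points the paper silently relies on: $v\notin V(H)$ (forced anyway, since $v$ is deleted in $G_v$ and $V(H)\subseteq V(G_v)$ must hold) and the harmlessness of shortcutting non-simple collapsed walks, which your pendant-edge count handles and which is consistent with the paper's standing convention that solutions are taken simple.
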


In the following, we will always study instances of the extended problem. Vertices of $U$ are called \emph{non-crossing vertices}. In the figures, we will represent vertices not in $U$ (\emph{crossing vertices}) by bold points. Note that there is no restriction over no-paths, they can cross other no-paths or paths at every vertex.

\section{Implementing $\xch$ and $\lic$}

\subsection{Basic graphs}\label{sec:elem}

We give the graphs encoding the gadgets $\xch$ and $\lic$, and detail their respective properties. Let $\xch$ be the graph depicted in Figure~\ref{fig:biflot-xch}. The crossing vertices are $a$, $b$, $c$ and $d$. We note $S = \{s_1,s_2,s_3,s_4\}$, $S' = \{s'_1,s'_2,s'_3,s'_4\}$, $T = \{t_1, t_2\}$ and $T' = \{t'_1, t'_2\}$.

\begin{figure}[htb]
\begin{center}
\begin{tikzpicture}[x=0.4cm,y=0.4cm,>=latex]
\node[circle,inner sep = 0pt,minimum size =6pt,fill = black] (nx14y14) at (14,14) {};
\node[circle,inner sep = 0pt,minimum size =6pt,fill = black] (nx16y16) at (16,16) {};
\node[circle,inner sep = 0pt,minimum size =6pt,fill = black] (nx14y18) at (14,18) {};
\node[circle,inner sep = 0pt,minimum size =6pt,fill = black] (nx12y16) at (12,16) {};
\node[circle,inner sep = 0pt,minimum size = 3pt,fill = black] (nx4y10) at (4,10) {};
\node[circle,inner sep = 0pt,minimum size = 3pt,fill = black] (nx6y8) at (6,8) {};
\node[circle,inner sep = 0pt,minimum size = 3pt,fill = black] (nx10y8) at (10,8) {};
\node[circle,inner sep = 0pt,minimum size = 3pt,fill = black] (nx24y10) at (24,10) {};
\node[circle,inner sep = 0pt,minimum size = 3pt,fill = black] (nx22y8) at (22,8) {};
\node[circle,inner sep = 0pt,minimum size = 3pt,fill = black] (nx18y8) at (18,8) {};
\node[circle,inner sep = 0pt,minimum size = 3pt,fill = black] (nx22y10) at (22,10) {};
\node[circle,inner sep = 0pt,minimum size = 3pt,fill = black] (nx18y10) at (18,10) {};
\node[circle,inner sep = 0pt,minimum size = 3pt,fill = black] (nx10y10) at (10,10) {};
\node[circle,inner sep = 0pt,minimum size = 3pt,fill = black] (nx6y10) at (6,10) {};
\node[circle,inner sep = 0pt,minimum size = 3pt,fill = black] (nx24y22) at (24,22) {};
\node[circle,inner sep = 0pt,minimum size = 3pt,fill = black] (nx22y24) at (22,24) {};
\node[circle,inner sep = 0pt,minimum size = 3pt,fill = black] (nx18y24) at (18,24) {};
\node[circle,inner sep = 0pt,minimum size = 3pt,fill = black] (nx22y22) at (22,22) {};
\node[circle,inner sep = 0pt,minimum size = 3pt,fill = black] (nx18y14) at (18,14) {};
\node[circle,inner sep = 0pt,minimum size = 3pt,fill = black] (nx10y14) at (10,14) {};
\node[circle,inner sep = 0pt,minimum size = 3pt,fill = black] (nx18y18) at (18,18) {};
\node[circle,inner sep = 0pt,minimum size = 3pt,fill = black] (nx18y22) at (18,22) {};
\node[circle,inner sep = 0pt,minimum size = 3pt,fill = black] (nx10y18) at (10,18) {};
\node[circle,inner sep = 0pt,minimum size = 3pt,fill = black] (nx10y24) at (10,24) {};
\node[circle,inner sep = 0pt,minimum size = 3pt,fill = black] (nx10y22) at (10,22) {};
\node[circle,inner sep = 0pt,minimum size = 3pt,fill = black] (nx4y22) at (4,22) {};
\node[circle,inner sep = 0pt,minimum size = 3pt,fill = black] (nx6y22) at (6,22) {};
\node[circle,inner sep = 0pt,minimum size = 3pt,fill = black] (nx6y24) at (6,24) {};
\draw[black] (nx6y10) -- (nx4y10);
\draw[black] (nx6y8) -- (nx6y10);
\draw[black] (nx10y8) -- (nx10y10);
\draw[black] (nx22y10) -- (nx24y10);
\draw[black] (nx22y8) -- (nx22y10);
\draw[black] (nx18y10) -- (nx18y8);
\draw[black] (nx22y10) -- (nx18y10);
\draw[black] (nx18y14) -- (nx22y10);
\draw[black] (nx18y10) -- (nx18y14);
\draw[black] (nx14y14) -- (nx18y10);
\draw[black] (nx10y10) -- (nx14y14);
\draw[black] (nx10y10) -- (nx10y14);
\draw[black] (nx6y10) -- (nx10y10);
\draw[black] (nx10y14) -- (nx6y10);
\draw[black] (nx22y22) -- (nx24y22);
\draw[black] (nx22y24) -- (nx22y22);
\draw[black] (nx18y22) -- (nx18y24);
\draw[black] (nx22y22) -- (nx18y22);
\draw[black] (nx18y18) -- (nx22y22);
\draw[black] (nx18y14) -- (nx18y18);
\draw[black] (nx16y16) -- (nx18y14);
\draw[black] (nx14y14) -- (nx16y16);
\draw[black] (nx12y16) -- (nx14y14);
\draw[black] (nx10y14) -- (nx12y16);
\draw[black] (nx10y18) -- (nx10y14);
\draw[black] (nx12y16) -- (nx10y18);
\draw[black] (nx14y18) -- (nx12y16);
\draw[black] (nx16y16) -- (nx14y18);
\draw[black] (nx18y18) -- (nx16y16);
\draw[black] (nx18y22) -- (nx18y18);
\draw[black] (nx14y18) -- (nx18y22);
\draw[black] (nx10y22) -- (nx14y18);
\draw[black] (nx10y18) -- (nx10y22);
\draw[black] (nx6y22) -- (nx10y18);
\draw[black] (nx10y22) -- (nx10y24);
\draw[black] (nx6y22) -- (nx10y22);
\draw[black] (nx6y22) -- (nx4y22);
\draw[black] (nx6y24) -- (nx6y22);
\draw (14,14) node[anchor = south] {$d$};
\draw (16,16) node[anchor = south] {$c$};
\draw (12,16) node[anchor = south] {$b$};
\draw (14,18) node[anchor = south] {$a$};
\draw (22,10) node[anchor = north east] {$u_{12}$};
\draw (18,10) node[anchor = north east] {$u_{11}$};
\draw (10,10) node[anchor = north west] {$u_{10}$};
\draw (6,10) node[anchor = north west] {$u_9$};
\draw (18,14) node[anchor = south west] {$u_8$};
\draw (10,14) node[anchor = south east] {$u_7$};
\draw (18,18) node[anchor = north west] {$u_6$};
\draw (10,18) node[anchor = north east] {$u_5$};
\draw (22,22) node[anchor = south east] {$u_4$};
\draw (18,22) node[anchor = south east] {$u_3$};
\draw (10,22) node[anchor = south west] {$u_2$};
\draw (6,22) node[anchor = south west] {$u_1$};
\draw (24,10) node[anchor = west] {$t'_2$};
\draw (24,22) node[anchor = west] {$t'_1$};
\draw (4,10) node[anchor = east] {$t_2$};
\draw (4,22) node[anchor = east] {$t_1$};
\draw (22,8) node[anchor = north] {$s'_4$};
\draw (18,8) node[anchor = north] {$s'_3$};
\draw (10,8) node[anchor = north] {$s'_2$};
\draw (6,8) node[anchor = north] {$s'_1$};
\draw (22,24) node[anchor = south] {$s_4$};
\draw (18,24) node[anchor = south] {$s_3$};
\draw (10,24) node[anchor = south] {$s_2$};
\draw (6,24) node[anchor = south] {$s_1$};
\end{tikzpicture}
\end{center}
\legende{The graph $\xch$. There are only four crossing vertices $a$, $b$, $c$ and $d$. All the \emph{inner} vertices have degree $4$.}
\label{fig:biflot-xch}
\end{figure}

\begin{lemma}\label{lemme:xch-22}
Let $\mathcal{P} = \{S_1,S_2,T_1,T_2\}$ be an uncrossed set of edge-disjoint paths in $\xch$, satisfying :
\begin{enumerate}
\item[$(i)$] $S_1$ and $S_2$ are $(S,S')$-paths,
\item[$(ii)$] $T_1$ and $T_2$ are $(T,T')$-paths.
\end{enumerate}
Then $S_1$ and $S_2$ are either an $(s_1,s'_3)$-path and an $(s_2,s'_4)$-path, or an $(s_3,s'_1)$-path and an $(s_4,s'_2)$-path.
\end{lemma}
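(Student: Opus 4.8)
The plan is to run a global crossing count and then localize everything to the central diamond. First I would record the cyclic order of the degree-one terminals on the outer face of $\xch$, namely $s_1,s_2,s_3,s_4,t_1',t_2',s_4',s_3',s_2',s_1',t_2,t_1$. By hypothesis $(i)$ each $S_i$ joins a ``top'' terminal in $S$ to a ``bottom'' terminal in $S'$, while each $T_j$ joins a ``left'' terminal ($t_1$ or $t_2$) to a ``right'' terminal ($t_1'$ or $t_2'$). Closing any $T_j$ into a Jordan curve through the outer face, exactly as in the proof of Lemma~\ref{lemme:uncross-biflot}, shows that $T_j$ separates $\{s_1,\dots,s_4\}$ from $\{s_1',\dots,s_4'\}$; hence every $S_i$ must cross every $T_j$, which forces at least four crossings.

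The key observation is that crossings may occur only at the crossing vertices $a,b,c,d$ (every $u_i$ is a non-crossing vertex and the terminals have degree one), and that a vertex of degree $4$ can host at most one crossing, since at most two paths pass through it. With at least four crossings forced and only four crossing vertices available, I expect to deduce that there are \emph{exactly} four crossings, one at each of $a,b,c,d$, each between one $S$-path and one $T$-path; consequently $S_1,S_2$ do not cross each other, $T_1,T_2$ do not cross each other, each path passes through exactly two of $a,b,c,d$ (joined by one edge of the $4$-cycle $abdc$), and every edge incident to $a,b,c,d$ is used by some path.

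Next I would turn this into a rigid routing of the diamond. At a crossing the two paths use alternating edges in the local rotation, so computing the rotation at each of $a,b,c,d$ determines which incident edge-pairs are glued there; tracing the four edges $ab,bd,dc,ca$ then produces four path-segments across the diamond, each joining a left neighbour to a right neighbour, namely $u_3\!-\!a\!-\!b\!-\!u_7$, $u_2\!-\!a\!-\!c\!-\!u_8$, $u_5\!-\!b\!-\!d\!-\!u_{11}$ and $u_6\!-\!c\!-\!d\!-\!u_{10}$. Since each path contains exactly one such segment, \emph{every} path crosses from the left half of $\xch$ to the right half. As each $S_i$ also joins top to bottom, it must join a top-left terminal to a bottom-right one, or a bottom-left to a top-right one; that is, it is diagonal, which already delivers the qualitative content of the lemma.

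Finally I would pin down the exact pairing. Because all four crossings are consumed at $a,b,c,d$, the two ``half-graphs'' obtained by deleting the diamond carry no crossing, so the extensions of the four segments to the terminals form a non-crossing matching on each side, with $t_1,t_2$ (resp.\ $t_1',t_2'$) necessarily among the endpoints used. Enumerating the non-crossing matchings of the four segment-ends $\{u_2,u_5,u_7,u_{10}\}$ on the left and $\{u_3,u_6,u_8,u_{11}\}$ on the right, and retaining only those that agree on which two segments carry $t_1,t_2$ and $t_1',t_2'$, I expect exactly two surviving configurations: either the $T$-paths are $u_7\!-\!u_3$ and $u_{10}\!-\!u_6$, forcing $\{S_1,S_2\}=\{(s_1,s_3'),(s_2,s_4')\}$, or they are $u_2\!-\!u_8$ and $u_5\!-\!u_{11}$, forcing $\{S_1,S_2\}=\{(s_3,s_1'),(s_4,s_2')\}$. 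The conceptual heart of the argument is the crossing count of the first two paragraphs; the main obstacle I anticipate is the careful bookkeeping afterwards — verifying the local rotations at $a,b,c,d$ to extract the precise segments, and then matching the two half-graphs so that the left and right endpoint assignments are consistent.
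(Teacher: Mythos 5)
Your proposal is correct and follows essentially the same route as the paper's proof: both argue that the four forced $S$--$T$ crossings must occupy $a,b,c,d$ exactly once each (so parallel paths never cross), use the rotation at each crossing vertex to read off the four forced diamond segments $u_3\!-\!a\!-\!b\!-\!u_7$, $u_2\!-\!a\!-\!c\!-\!u_8$, $u_5\!-\!b\!-\!d\!-\!u_{11}$, $u_6\!-\!c\!-\!d\!-\!u_{10}$, and then conclude the terminal assignment from the absence of any further crossing. The only difference is organizational: the paper cases on whether the edge $ab$ lies on an $(S,S')$- or a $(T,T')$-path and propagates around the diamond, whereas you derive all four segments at once and finish with the non-crossing matching in the two halves — the same bookkeeping, packaged differently, and your two final configurations are exactly the paper's two cases.
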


\begin{proof}
Let $\mathcal{P}$ be as described in the lemma. As the $(S,S')$-paths must cross the $(T,T')$-paths, there are at least $4$ crossings in $\mathcal{P}$. We know that these crossings occur in vertices $a$, $b$, $c$ and $d$, and the $(T,T')$-paths (resp. the $(S,S')$-path) do not cross each other.\\
Suppose $ab$ is in an $(S,S')$-path, then $ac$ and $bd$ must belong to distinct $(T,T')$-paths, and $cd$ is in the second $(S,S')$-path. Then $au_2$, $bu_5$, $cu_8$, $du_{11}$ are in $(T,T')$ paths, and $au_3$, $bu_7$, $cu_6$ and $du_{10}$ are in $(S,S')$-paths. As there is no other crossing except in the four central vertices, the $(S,S')$-paths are connected to $s'_1$, $s'_2$, $s_3$ and $s_4$. The case when $ab$ belongs to a $(T,T')$-path is similar and gives the other solution.
\end{proof}

These paths exist, as shown by Figure~\ref{fig:biflot-xch-22}.

\begin{figure}[htb]
\begin{center}
\begin{tikzpicture}[x=0.4cm,y=0.4cm,>=latex]
\definecolor{lightg}{rgb}{0.8,0.8,0.8}
\node[circle,inner sep = 0pt,minimum size =6pt,fill = black] (nx6y19) at (6,19) {};
\node[circle,inner sep = 0pt,minimum size =6pt,fill = black] (nx7y20) at (7,20) {};
\node[circle,inner sep = 0pt,minimum size =6pt,fill = black] (nx6y21) at (6,21) {};
\node[circle,inner sep = 0pt,minimum size =6pt,fill = black] (nx5y20) at (5,20) {};
\node[circle,inner sep = 0pt,minimum size = 3pt,fill = black] (nx12y17) at (12,17) {};
\node[circle,inner sep = 0pt,minimum size = 3pt,fill = black] (nx8y19) at (8,19) {};
\node[circle,inner sep = 0pt,minimum size = 3pt,fill = black] (nx10y17) at (10,17) {};
\node[circle,inner sep = 0pt,minimum size = 3pt,fill = black] (nx10y15) at (10,15) {};
\node[circle,inner sep = 0pt,minimum size = 3pt,fill = black] (nx8y15) at (8,15) {};
\node[circle,inner sep = 0pt,minimum size = 3pt,fill = black] (nx8y17) at (8,17) {};
\node[circle,inner sep = 0pt,minimum size = 3pt,fill = black] (nx0y17) at (0,17) {};
\node[circle,inner sep = 0pt,minimum size = 3pt,fill = black] (nx2y15) at (2,15) {};
\node[circle,inner sep = 0pt,minimum size = 3pt,fill = black] (nx2y17) at (2,17) {};
\node[circle,inner sep = 0pt,minimum size = 3pt,fill = black] (nx4y17) at (4,17) {};
\node[circle,inner sep = 0pt,minimum size = 3pt,fill = black] (nx4y15) at (4,15) {};
\node[circle,inner sep = 0pt,minimum size = 3pt,fill = black] (nx12y23) at (12,23) {};
\node[circle,inner sep = 0pt,minimum size = 3pt,fill = black] (nx8y21) at (8,21) {};
\node[circle,inner sep = 0pt,minimum size = 3pt,fill = black] (nx10y23) at (10,23) {};
\node[circle,inner sep = 0pt,minimum size = 3pt,fill = black] (nx10y25) at (10,25) {};
\node[circle,inner sep = 0pt,minimum size = 3pt,fill = black] (nx8y25) at (8,25) {};
\node[circle,inner sep = 0pt,minimum size = 3pt,fill = black] (nx8y23) at (8,23) {};
\node[circle,inner sep = 0pt,minimum size = 3pt,fill = black] (nx4y19) at (4,19) {};
\node[circle,inner sep = 0pt,minimum size = 3pt,fill = black] (nx4y25) at (4,25) {};
\node[circle,inner sep = 0pt,minimum size = 3pt,fill = black] (nx0y23) at (0,23) {};
\node[circle,inner sep = 0pt,minimum size = 3pt,fill = black] (nx2y25) at (2,25) {};
\node[circle,inner sep = 0pt,minimum size = 3pt,fill = black] (nx4y21) at (4,21) {};
\node[circle,inner sep = 0pt,minimum size = 3pt,fill = black] (nx4y23) at (4,23) {};
\node[circle,inner sep = 0pt,minimum size = 3pt,fill = black] (nx2y23) at (2,23) {};
\draw[black,dashed,thick] (nx10y17) -- (nx12y17);
\draw[black,dashed,thick] (nx8y19) -- (nx10y17);
\draw[black,dashed,thick] (nx8y21) -- (nx8y19);
\draw[black,dashed,thick] (nx7y20) -- (nx8y21);
\draw[black,dotted,thick] (nx8y19) -- (nx7y20);
\draw[black,dotted,thick] (nx8y17) -- (nx8y19);
\draw[black,dotted,thick] (nx10y17) -- (nx8y17);
\draw[black,dotted,thick] (nx10y15) -- (nx10y17);
\draw[black,dotted,thick] (nx8y17) -- (nx8y15);
\draw[black,dotted,thick] (nx6y19) -- (nx8y17);
\draw[black,dashed,thick] (nx4y17) -- (nx6y19);
\draw[lightg] (nx4y19) -- (nx4y17);
\draw[lightg] (nx2y17) -- (nx4y19);
\draw[black,dashed,thick] (nx0y17) -- (nx2y17);
\draw[lightg] (nx2y17) -- (nx2y15);
\draw[black,dashed,thick] (nx4y17) -- (nx2y17);
\draw[lightg] (nx4y15) -- (nx4y17);
\draw[black,dashed,thick] (nx10y23) -- (nx12y23);
\draw[lightg] (nx8y21) -- (nx10y23);
\draw[lightg] (nx8y23) -- (nx8y21);
\draw[black,dashed,thick] (nx10y23) -- (nx8y23);
\draw[lightg] (nx10y25) -- (nx10y23);
\draw[lightg] (nx8y23) -- (nx8y25);
\draw[black,dashed,thick] (nx6y21) -- (nx8y23);
\draw[black,dotted,thick] (nx6y19) -- (nx5y20);
\draw[black,dashed,thick] (nx7y20) -- (nx6y19);
\draw[black,dotted,thick] (nx6y21) -- (nx7y20);
\draw[black,dashed,thick] (nx5y20) -- (nx6y21);
\draw[black,dashed,thick] (nx4y19) -- (nx5y20);
\draw[black,dashed,thick] (nx4y21) -- (nx4y19);
\draw[black,dotted,thick] (nx4y23) -- (nx4y21);
\draw[black,dotted,thick] (nx4y25) -- (nx4y23);
\draw[black,dashed,thick] (nx2y23) -- (nx0y23);
\draw[black,dotted,thick] (nx2y23) -- (nx2y25);
\draw[black,dashed,thick] (nx4y21) -- (nx2y23);
\draw[black,dotted,thick] (nx5y20) -- (nx4y21);
\draw[black,dotted,thick] (nx4y23) -- (nx6y21);
\draw[black,dotted,thick] (nx2y23) -- (nx4y23);
\end{tikzpicture}
\end{center}
\legende{Existence of the paths for Lemma~\ref{lemme:xch-22}.}
\label{fig:biflot-xch-22}
\end{figure}

\begin{lemma}\label{lemme:xch-12}
Let $\mathcal{P}$ be a set of three edge-disjoint paths in $\xch$ that satisfies~:
\begin{enumerate}
\item[$(i)$] $\mathcal{P}$ contains exactly two $(\{s_1,s_2\},S'')$-paths, and $S''$ is either $\{s'_1,s'_2\}$ or $\{s'_3,s'_4\}$,
\item[$(ii)$] $\mathcal{P}$ contains exactly one $(T,T')$-path.
\end{enumerate}
Then, $S'' = \{s'_3,s'_4\}$.
\end{lemma}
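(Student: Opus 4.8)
The plan is to argue by contradiction: assume the two vertical paths come back down on the left, i.e.\ $S'' = \{s'_1,s'_2\}$, and derive a contradiction with the mere existence of the horizontal path. Write $V_1,V_2$ for the two $(\{s_1,s_2\},S'')$-paths and $H$ for the $(T,T')$-path; since $\mathcal{P}$ is a set of three edge-disjoint paths, these are mutually edge-disjoint.

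First I would isolate the left half of $\xch$. Let $L = \{t_1,t_2,s_1,s_2,s'_1,s'_2,u_1,u_2,u_5,u_7,u_9,u_{10}\}$ be the twelve vertices lying to the left of the central square. A direct inspection of Figure~\ref{fig:biflot-xch} gives the structural fact that drives the whole argument: $\delta(L)$ consists of exactly the four edges $u_2a$, $u_5b$, $u_7b$, $u_{10}d$, so $d(L)=4$, and moreover all four crossing vertices $a,b,c,d$ lie \emph{outside} $L$. (One checks that each of $u_1,u_9$ has all four neighbours in $L$, while $u_2,u_5,u_7,u_{10}$ each send exactly one edge to the centre, and there is no direct left--right edge.)

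Next I would locate the forced crossings. Along the infinite face the terminals occur in the cyclic order $s_1,s_2,s_3,s_4,t'_1,t'_2,s'_4,s'_3,s'_2,s'_1,t_2,t_1$, so the four groups $\{s_1,s_2\}$, $T'$, $S''=\{s'_1,s'_2\}$, $T$ appear in this order. The two vertical paths join the first group to the third while $H$ joins the fourth to the second, hence their endpoints interleave on the boundary. Exactly as in the proof of Lemma~\ref{lemme:uncross-biflot} (close a vertical path into a Jordan curve by an arc of the infinite face; its endpoints then separate those of $H$), this interleaving forces $H$ to cross each of $V_1$ and $V_2$ at least once. Since we work in the extended problem, every such crossing occurs at a crossing vertex, i.e.\ at one of $a,b,c,d$, which by the previous paragraph lies outside $L$.

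The contradiction is then a cut count on $\delta(L)$. Each vertical path has \emph{both} endpoints in $L$ — this is the only place where $S''=\{s'_1,s'_2\}$ is used — yet it must visit a crossing vertex outside $L$ in order to meet $H$; a path with both ends in $L$ that leaves $L$ crosses $\delta(L)$ an even, hence at least $2$, number of times. As $V_1$ and $V_2$ are edge-disjoint, together they already occupy at least $4$ edges of $\delta(L)$, that is, all of them; but $H$ has exactly one endpoint in $L$ and must therefore use at least one edge of $\delta(L)$, which is now impossible. This rules out $S''=\{s'_1,s'_2\}$, leaving $S''=\{s'_3,s'_4\}$.

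The main obstacle is the crossing step, and it is really two separate points that must both be nailed down: that $H$ is genuinely \emph{forced} to meet each vertical path (handled by the Jordan-curve argument borrowed from Lemma~\ref{lemme:uncross-biflot}, using only the boundary order of the terminals), and that each such meeting is an honest crossing at a \emph{central} vertex rather than a harmless touching at some $u_i$ (handled by the passage to the extended problem, where crossings are permitted only at $a,b,c,d$). Everything else — verifying $d(L)=4$ and that $a,b,c,d\notin L$ — is routine bookkeeping against the figure.
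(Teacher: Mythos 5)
Your proof is correct and takes essentially the same approach as the paper's: the paper likewise forces all three paths through the crossing vertices $a,b,c,d$ and then counts edge usage on the four-edge cut $\{u_2a, u_5b, u_7b, u_{10}d\}$ (your $\delta(L)$), where the $(T,T')$-path needs at least one edge and each vertical path at least two, giving $5>4$. Your write-up merely makes explicit the bookkeeping the paper leaves to inspection of the figure — the identification of $\delta(L)$, the parity argument for paths with both ends in $L$, and the Jordan-curve justification of the forced crossings.
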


Here we find the first differences between the graph $\xch$ and its ideal model given in Section~\ref{sec:outline}. We need to suppose that the two vertical paths come from the same side (through $s_1$ and $s_2$ or $s_3$ and $s_4$), and leave also at the same side . We need to prove that this is actually the case, but we can already remark that it is true in the particular case when the gadgets above and below are $\xch$ with two horizontal paths, by Lemma~\ref{lemme:xch-22}. 

\begin{proof}
If not, then there is a set $\mathcal{P}$ of three edge-disjoint paths, a $(t,t')$-path $Q$, an $(s_1,s'_1)$-path $P_1$ and an $(s_2,s'_2)$-path $P_2$. As $Q$ must cross the two other paths, all paths contain at least one of $a,b,c$ and $d$. Then, $Q$ uses one edge of $u_2a$, $u_5b$, $u_7b$, $u_{10}d$, and both $P_1$ and $P_2$ use two of these edges, contradicting the edge-disjointness of the paths.
\end{proof}

Let $\lic$ be the graph depicted in Figure~\ref{fig:biflot:lic}. We note again $S = \{s_1,s_2,s_3,s_4\}$, $S' = \{s'_1,s'_2,s'_3,s'_4\}$, $T = \{t_1, t_2\}$ and $T' = \{t'_1, t'_2\}$.

\begin{figure}
\begin{center}
\begin{tikzpicture}[x=0.4cm,y=0.4cm,>=latex]
\node[circle,inner sep = 0pt,minimum size =6pt,fill = black] (nx14y14) at (14,14) {};
\node[circle,inner sep = 0pt,minimum size =6pt,fill = black] (nx16y16) at (16,16) {};
\node[circle,inner sep = 0pt,minimum size =6pt,fill = black] (nx14y18) at (14,18) {};
\node[circle,inner sep = 0pt,minimum size =6pt,fill = black] (nx12y16) at (12,16) {};
\node[circle,inner sep = 0pt,minimum size = 3pt,fill = black] (nx4y10) at (4,10) {};
\node[circle,inner sep = 0pt,minimum size = 3pt,fill = black] (nx6y8) at (6,8) {};
\node[circle,inner sep = 0pt,minimum size = 3pt,fill = black] (nx10y8) at (10,8) {};
\node[circle,inner sep = 0pt,minimum size = 3pt,fill = black] (nx24y10) at (24,10) {};
\node[circle,inner sep = 0pt,minimum size = 3pt,fill = black] (nx22y8) at (22,8) {};
\node[circle,inner sep = 0pt,minimum size = 3pt,fill = black] (nx18y8) at (18,8) {};
\node[circle,inner sep = 0pt,minimum size = 3pt,fill = black] (nx22y10) at (22,10) {};
\node[circle,inner sep = 0pt,minimum size = 3pt,fill = black] (nx18y10) at (18,10) {};
\node[circle,inner sep = 0pt,minimum size = 3pt,fill = black] (nx10y10) at (10,10) {};
\node[circle,inner sep = 0pt,minimum size = 3pt,fill = black] (nx6y10) at (6,10) {};
\node[circle,inner sep = 0pt,minimum size = 3pt,fill = black] (nx24y22) at (24,22) {};
\node[circle,inner sep = 0pt,minimum size = 3pt,fill = black] (nx22y24) at (22,24) {};
\node[circle,inner sep = 0pt,minimum size = 3pt,fill = black] (nx18y24) at (18,24) {};
\node[circle,inner sep = 0pt,minimum size = 3pt,fill = black] (nx22y22) at (22,22) {};
\node[circle,inner sep = 0pt,minimum size = 3pt,fill = black] (nx18y14) at (18,14) {};
\node[circle,inner sep = 0pt,minimum size = 3pt,fill = black] (nx10y14) at (10,14) {};
\node[circle,inner sep = 0pt,minimum size = 6pt,fill = black] (nx18y18) at (18,18) {};
\node[circle,inner sep = 0pt,minimum size = 3pt,fill = black] (nx18y22) at (18,22) {};
\node[circle,inner sep = 0pt,minimum size = 6pt,fill = black] (nx10y18) at (10,18) {};
\node[circle,inner sep = 0pt,minimum size = 3pt,fill = black] (nx10y24) at (10,24) {};
\node[circle,inner sep = 0pt,minimum size = 3pt,fill = black] (nx10y22) at (10,22) {};
\node[circle,inner sep = 0pt,minimum size = 3pt,fill = black] (nx4y22) at (4,22) {};
\node[circle,inner sep = 0pt,minimum size = 3pt,fill = black] (nx6y22) at (6,22) {};
\node[circle,inner sep = 0pt,minimum size = 3pt,fill = black] (nx6y24) at (6,24) {};
\draw[black] (nx6y10) -- (nx4y10);
\draw[black] (nx6y8) -- (nx6y10);
\draw[black] (nx10y8) -- (nx10y10);
\draw[black] (nx22y10) -- (nx24y10);
\draw[black] (nx22y8) -- (nx22y10);
\draw[black] (nx18y10) -- (nx18y8);
\draw[black] (nx22y10) -- (nx18y10);
\draw[black] (nx18y14) -- (nx22y10);
\draw[black] (nx18y10) -- (nx18y14);
\draw[black] (nx14y14) -- (nx18y10);
\draw[black] (nx10y10) -- (nx14y14);
\draw[black] (nx10y10) -- (nx10y14);
\draw[black] (nx6y10) -- (nx10y10);
\draw[black] (nx10y14) -- (nx6y10);
\draw[black] (nx22y22) -- (nx24y22);
\draw[black] (nx22y24) -- (nx22y22);
\draw[black] (nx18y22) -- (nx18y24);
\draw[black] (nx22y22) -- (nx18y22);
\draw[black] (nx18y18) -- (nx22y22);
\draw[black] (nx18y14) -- (nx18y18);
\draw[black] (nx16y16) -- (nx18y14);
\draw[black] (nx14y14) -- (nx16y16);
\draw[black] (nx12y16) -- (nx14y14);
\draw[black] (nx10y14) -- (nx12y16);
\draw[black] (nx10y18) -- (nx10y14);
\draw[black] (nx12y16) -- (nx10y18);
\draw[black] (nx14y18) -- (nx12y16);
\draw[black] (nx16y16) -- (nx14y18);
\draw[black] (nx18y18) -- (nx16y16);
\draw[black] (nx18y22) -- (nx18y18);
\draw[black] (nx14y18) -- (nx18y22);
\draw[black] (nx10y22) -- (nx14y18);
\draw[black] (nx10y18) -- (nx10y22);
\draw[black] (nx6y22) -- (nx10y18);
\draw[black] (nx10y22) -- (nx10y24);
\draw[black] (nx6y22) -- (nx10y22);
\draw[black] (nx6y22) -- (nx4y22);
\draw[black] (nx6y24) -- (nx6y22);
\draw (14,14) node[anchor = south] {$d$};
\draw (16,16) node[anchor = south] {$c$};
\draw (12,16) node[anchor = south] {$b$};
\draw (14,18) node[anchor = south] {$a$};
\draw (22,10) node[anchor = north east] {$u_{12}$};
\draw (18,10) node[anchor = north east] {$u_{11}$};
\draw (10,10) node[anchor = north west] {$u_{10}$};
\draw (6,10) node[anchor = north west] {$u_9$};
\draw (18,14) node[anchor = south west] {$u_8$};
\draw (10,14) node[anchor = south east] {$u_7$};
\draw (18,18) node[anchor = north west] {$u_6$};
\draw (10,18) node[anchor = north east] {$u_5$};
\draw (22,22) node[anchor = south east] {$u_4$};
\draw (18,22) node[anchor = south east] {$u_3$};
\draw (10,22) node[anchor = south west] {$u_2$};
\draw (6,22) node[anchor = south west] {$u_1$};
\draw (24,10) node[anchor = west] {$t'_2$};
\draw (24,22) node[anchor = west] {$t'_1$};
\draw (4,10) node[anchor = east] {$t_2$};
\draw (4,22) node[anchor = east] {$t_1$};
\draw (22,8) node[anchor = north] {$s'_4$};
\draw (18,8) node[anchor = north] {$s'_3$};
\draw (10,8) node[anchor = north] {$s'_2$};
\draw (6,8) node[anchor = north] {$s'_1$};
\draw (22,24) node[anchor = south] {$s_4$};
\draw (18,24) node[anchor = south] {$s_3$};
\draw (10,24) node[anchor = south] {$s_2$};
\draw (6,24) node[anchor = south] {$s_1$};
\end{tikzpicture}
\end{center}
\legende{Graph $\lic$. Except two additional crossing vertices, it is similar to $\xch$.}
\label{fig:biflot:lic}
\end{figure}

\begin{lemma}\label{lemme:lic-22}
Let $\mathcal{P}$ be a set of edge-disjoint paths in $\textrm{LIC}$ that satisfy :
\begin{itemize}
\item[$(i)$] $\mathcal{P}$ contains exactly $2$ $(\{s_1,s_2\},S'')$-paths, where $S''$ is either $\{s'_1,s'_2\}$ or $\{s'_3,s'_4\}$,
\item[$(ii)$] $\mathcal{P}$ contains exactly $2$ $(T,T')$-paths.
\end{itemize}
Then, $S'' = \{s'_3,s'_4\}$. Moreover, there cannot be another $(S \cup T, S' \cup T')$-path.
\end{lemma}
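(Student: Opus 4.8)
The plan is to prove the two assertions separately: first that $S'' = \{s'_3,s'_4\}$, by a forced-routing argument, and then the non-existence of a fifth path, by a crossing count. Two standing facts will be used repeatedly. First, $s_1,s_2$ lie on the top boundary, $S''$ on the bottom, and $t_1,t_2$ (resp. $t'_1,t'_2$) on the left (resp. right); hence the endpoints of each vertical path interleave on the outer face with those of each horizontal path, so by Jordan's theorem every vertical path crosses every horizontal path, giving at least four crossings. Second, in $\lic$ (Figure~\ref{fig:biflot:lic}) crossings may occur only at the six crossing vertices $a,b,c,d,u_5,u_6$, and since each has degree $4$ it carries at most one crossing; thus $\lic$ admits at most six crossings in all.

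For the main claim I would argue by contradiction, assuming $S'' = \{s'_1,s'_2\}$, so that both vertical paths remain on the left. Since $s_1,s_2,t_1,t_2$ and $s'_1,s'_2$ all have degree one, one vertical path starts at $s_1$ and one at $s_2$, one horizontal path (the $t_1$-path) enters at $u_1$ and the other (the $t_2$-path) at $u_9$, and the vertical paths end at $s'_1$ (through $u_9$) and at $s'_2$ (through $u_{10}$). The heart of the argument is to propagate the non-crossing constraint through the degree-four non-crossing vertices. At $u_1$, where the $s_1$-path meets the $t_1$-path, the only edge-disjoint non-crossing completion sends the $s_1$-path to $u_2$ and the $t_1$-path to $u_5$; at $u_2$, where both vertical paths now meet, it sends the $s_1$-path on to $u_5$ and the $s_2$-path to $a$. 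So the $s_1$-path and the $t_1$-path both reach $u_5$ entering from $u_2$ and $u_1$, and must leave by the two remaining edges $u_5u_7$ and $u_5b$, exhausting all four edges at $u_5$.

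The symmetric analysis at the bottom produces the contradiction. At $u_9$ the $t_2$-path meets the vertical path ending at $s'_1$; non-crossing forces the $t_2$-path onto $u_7$ and the vertical path to arrive from $u_{10}$. At $u_{10}$, where both vertical paths meet, the $s'_1$-path is forced to arrive from $u_7$ and the $s'_2$-path from $d$. Now at $u_7$ the edges to $u_9$ (the $t_2$-path), to $u_{10}$ (the $s'_1$-path) and to $u_5$ (used above) are all occupied, so the edge $bu_7$ is occupied as well; the only non-crossing pairing of the four edges at $u_7$ that does not identify the $t_2$-path with the $s'_1$-path matches $u_5$ with $u_9$ and $b$ with $u_{10}$. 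This routes the $t_2$-path as $\dots u_9 \to u_7 \to u_5$, i.e. it uses the edge $u_5u_7$; but that edge is already carried by one of the $s_1$- and $t_1$-paths (equivalently, $u_5$ would then lie on three paths), contradicting edge-disjointness. Hence $S'' = \{s'_3,s'_4\}$. For the final assertion, suppose a fifth $(S\cup T,S'\cup T')$-path $R$ existed. The four given paths already occupy $s_1,s_2,t_1,t_2$ and $s'_3,s'_4,t'_1,t'_2$, and every terminal has degree one, so $R$ must join one of $s_3,s_4$ to one of $s'_1,s'_2$; its endpoints then interleave on the outer face with those of all four existing paths, forcing four further crossings. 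With the four vertical-versus-horizontal crossings this needs at least eight crossings, exceeding the six available crossing vertices, a contradiction.

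The main obstacle is the middle step. Unlike Lemmas~\ref{lemme:xch-22} and~\ref{lemme:xch-12} for $\xch$, the two extra crossing vertices $u_5$ and $u_6$ make a pure crossing count useless here: keeping both vertical paths on the left costs only four or five crossings, well within the budget of six. One is therefore forced to trace the routing out of the degree-four non-crossing vertices and to combine a \emph{top} constraint (the $s_1$- and $t_1$-paths saturate $u_5$) with a \emph{bottom} constraint (the $s'_1$-path forces the $t_2$-path onto the edge $u_5u_7$); the congestion only becomes visible once these global top and bottom conditions are put together, which is exactly the delicate part of the proof.
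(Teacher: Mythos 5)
Your proof is correct, but it differs from the paper's on both halves, so a comparison is worth recording. For the main claim the paper argues with two cut counts rather than by propagating pairings: it first considers $\delta(\{u_7,u_9,u_{10}\})$, a cut of six edges that must carry the $(t_2,T')$-path $Q$ and both vertical paths (ending at $s'_1$ and $s'_2$); saturation together with the absence of crossings inside $\{u_7,u_9,u_{10}\}$ forces $u_5u_7$ onto $Q$ and $bu_7$, $du_{10}$ onto the vertical paths --- exactly what you derive by your vertex-by-vertex pairing analysis at $u_9$, $u_{10}$ and $u_7$, which I checked against the embedding and which is sound. The paper then finishes with a second count: four distinct paths (the two vertical paths, the $t_1$-path, and now $Q$) would each need two edges of the six-edge cut $\delta(\{u_1,u_2,u_5\})$; your local contradiction that $u_5u_7$ is demanded twice is the pointwise form of the same congestion. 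One small point you should make explicit: paths in this paper are trails (vertices may repeat), so your pairing arguments at $u_1,u_2,u_7,u_9,u_{10}$ need the observation that each of these degree-four vertices is already saturated by two distinct paths using two edges each, which rules out a double visit by a single trail; this is immediate but currently tacit. For the ``moreover'' clause your route is genuinely different: you count crossings (at least eight needed, only six crossing vertices, each of degree four hence hosting at most one crossing), whereas the paper observes that $\{a,b,c,d\}$ separates the left terminals from the right ones, so each of the four paths in the surviving configuration traverses it, saturating all eight edges of $\delta(\{a,b,c,d\})$, and a fifth path from $\{s_3,s_4\}$ to $\{s'_1,s'_2\}$ would have to enter that set. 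The paper's version is strictly stronger in a way that matters here: it excludes \emph{any} additional edge-disjoint path, including one permitted to cross at non-crossing vertices --- recall that no-paths in this paper are exempt from the crossing restriction --- while your count presupposes that the fifth path crosses only at $a,b,c,d,u_5,u_6$. For the lemma as it is applied in Theorem~\ref{th:biflot}, where no-paths are confined away from the gadgets in question, your weaker form suffices, but the edge-saturation argument buys this robustness for free and in one line.
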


\begin{proof}
Suppose $S'' = \{s'_1,s'_2\}$. Let $C$ be the cut $\{u_7,u_9,u_{10}\}$. Let $Q$ be the $(t_2,T')$-path, $P_1$ the $(s_1,s'_1)$-path and $P_2$ the $(s_2,s'_2)$-path. All three different paths meet $C$, and $d(C) = 6$. Moreover, there is no crossing in $C$, thus considering $\delta(C)$, $u_5u_7$ is used by $Q$ and $bu_7$, $du_{10}$ by $P_1$ and $P_2$. Now, there are four distinct paths entering $C' = \{u_1,u_2,u_5\}$, but $d(C') = 6$, contradiction.\\
As the edges of $\delta(\{a,b,c,d\})$ are all used by $\mathcal{P}$, there is no other $(S \cup T, S' \cup T')$-path.
\end{proof}

\begin{lemma}\label{lemme:lic-12}
There exist a $(T,T')$-path $P$, an $(s_1,s'_1)$-path $P_1$ and an $(s_2,s'_2)$-path $P_2$, pairwise edge-disjoint, in $\textrm{LIC}$.\\
There exist a $(T,T')$-path $P$, an $(s_3,s'_3)$-path $P_1$ and an $(s_4,s'_4)$-path $P_2$, pairwise edge-disjoint, in $\textrm{LIC}$.\\
\end{lemma}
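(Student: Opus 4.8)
The statement is purely existential, so the plan is simply to \emph{exhibit} the three required pairwise edge-disjoint paths and to check that they are legal for the extended problem, i.e. that any two of them meet only at crossing vertices. I would also exploit a symmetry to halve the work: $\lic$ is invariant under the reflection $(x,y)\mapsto(28-x,y)$, which is a graph automorphism that exchanges $s_i\leftrightarrow s_{5-i}$, $s'_i\leftrightarrow s'_{5-i}$, $t\leftrightarrow t'$, $b\leftrightarrow c$, $u_5\leftrightarrow u_6$, fixes $a,d$, and preserves the set of crossing vertices $\{a,b,c,d,u_5,u_6\}$. Hence the second routing (through $s_3,s'_3,s_4,s'_4$) is the reflected image of the first (through $s_1,s'_1,s_2,s'_2$), and it suffices to produce the first.

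For the first part I would keep the two \emph{vertical} paths hugging the left side and let the single \emph{horizontal} path dive through the centre. Concretely I would take $P_1 = s_1\,u_1\,u_2\,u_5\,u_7\,u_9\,s'_1$, $P_2 = s_2\,u_2\,a\,b\,u_7\,u_{10}\,s'_2$, and $P = t_1\,u_1\,u_5\,b\,d\,u_{11}\,u_{12}\,t'_2$. Reading off the edge list of Figure~\ref{fig:biflot:lic} shows at once that these are genuine paths with the prescribed endpoints ($P$ being a $(t_1,t'_2)$-, hence a $(T,T')$-path) and that their edge sets are pairwise disjoint. I would accompany this with a figure displaying the three paths, in the spirit of Figure~\ref{fig:biflot-xch-22}.

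The real content, and the place where the difficulty of the lemma lives, is the legality check. By the cyclic order of the terminals on the outer face, $t_1$ lies on the boundary arc between $s_1$ and $s'_1$ (and between $s_2$ and $s'_2$) that does not contain $t'_2$; so, exactly as in the Jordan-curve argument of Lemma~\ref{lemme:uncross-biflot}, the horizontal path is \emph{forced} to cross each of $P_1$ and $P_2$ an odd number of times. The only freedom is \emph{where} these unavoidable crossings occur, and they must be made to fall on crossing vertices. In the routing above the three paths share only the non-crossing vertices $u_1,u_2,u_7$, through which they pass without interleaving, while the two forced crossings are placed exactly at $u_5$ (between $P$ and $P_1$) and at $b$ (between $P$ and $P_2$), both crossing vertices; $P_1$ and $P_2$ never cross. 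Verifying this reduces to reading, at each of these shared vertices, the cyclic order of the four incident used edges and confirming that the alternating pattern appears only at $u_5$ and $b$ — routine once the picture is drawn.

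I expect the placement of the forced crossings to be the sole obstacle, and it is precisely here that $\lic$ departs from $\xch$: Lemma~\ref{lemme:xch-12} shows that with only the central crossing vertices $a,b,c,d$ one \emph{cannot} keep the two vertical paths on the same side in the presence of a single horizontal path. The two extra crossing vertices $u_5$ and $u_6$ of $\lic$ are exactly what make the detour feasible, allowing $P$ to cross $P_1$ at $u_5$ (and, by the reflection, its mirror to cross at $u_6$) while both vertical paths stay on their side. If a direct routing proves awkward to certify by hand, the fallback is to fix $P_1,P_2$ first as a nested non-crossing pair on the left and then route $P$ greedily, using $u_5$ (respectively $u_6$) as the designated crossing point.
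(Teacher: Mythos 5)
Your proposal is correct and takes essentially the same route as the paper, whose entire proof is the explicit exhibition of the two routings in Figure~\ref{fig:lic-12}: your concrete paths check out against the edge list of $\lic$ (all edges exist, the three paths are pairwise edge-disjoint, the shared non-crossing vertices $u_1$, $u_2$, $u_7$ are traversed without interleaving, and the two unavoidable crossings land exactly on the crossing vertices $u_5$ and $b$). Your reflection $(x,y)\mapsto(28-x,y)$ is indeed an automorphism of $\lic$ exchanging the two halves of the statement, a small presentational economy over the paper, which simply draws both routings.
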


\begin{proof}
See Figure~\ref{fig:lic-12}.
\end{proof}

\begin{figure}
\begin{center}
\begin{tikzpicture}[x=0.4cm,y=0.4cm,>=latex]
\definecolor{lightg}{rgb}{0.8,0.8,0.8}
\node[circle,inner sep = 0pt,minimum size =6pt,fill = black] (nx24y21) at (24,21) {};
\node[circle,inner sep = 0pt,minimum size =6pt,fill = black] (nx23y20) at (23,20) {};
\node[circle,inner sep = 0pt,minimum size =6pt,fill = black] (nx22y19) at (22,19) {};
\node[circle,inner sep = 0pt,minimum size =6pt,fill = black] (nx21y20) at (21,20) {};
\node[circle,inner sep = 0pt,minimum size =6pt,fill = black] (nx22y21) at (22,21) {};
\node[circle,inner sep = 0pt,minimum size =6pt,fill = black] (nx20y21) at (20,21) {};
\node[circle,inner sep = 0pt,minimum size = 3pt,fill = black] (nx16y17) at (16,17) {};
\node[circle,inner sep = 0pt,minimum size = 3pt,fill = black] (nx18y15) at (18,15) {};
\node[circle,inner sep = 0pt,minimum size = 3pt,fill = black] (nx20y15) at (20,15) {};
\node[circle,inner sep = 0pt,minimum size = 3pt,fill = black] (nx18y17) at (18,17) {};
\node[circle,inner sep = 0pt,minimum size = 3pt,fill = black] (nx20y17) at (20,17) {};
\node[circle,inner sep = 0pt,minimum size = 3pt,fill = black] (nx24y15) at (24,15) {};
\node[circle,inner sep = 0pt,minimum size = 3pt,fill = black] (nx26y15) at (26,15) {};
\node[circle,inner sep = 0pt,minimum size = 3pt,fill = black] (nx24y17) at (24,17) {};
\node[circle,inner sep = 0pt,minimum size = 3pt,fill = black] (nx26y17) at (26,17) {};
\node[circle,inner sep = 0pt,minimum size = 3pt,fill = black] (nx28y17) at (28,17) {};
\node[circle,inner sep = 0pt,minimum size = 3pt,fill = black] (nx28y23) at (28,23) {};
\node[circle,inner sep = 0pt,minimum size = 3pt,fill = black] (nx26y25) at (26,25) {};
\node[circle,inner sep = 0pt,minimum size = 3pt,fill = black] (nx24y25) at (24,25) {};
\node[circle,inner sep = 0pt,minimum size = 3pt,fill = black] (nx24y19) at (24,19) {};
\node[circle,inner sep = 0pt,minimum size = 3pt,fill = black] (nx20y19) at (20,19) {};
\node[circle,inner sep = 0pt,minimum size = 3pt,fill = black] (nx18y25) at (18,25) {};
\node[circle,inner sep = 0pt,minimum size = 3pt,fill = black] (nx20y25) at (20,25) {};
\node[circle,inner sep = 0pt,minimum size = 3pt,fill = black] (nx26y23) at (26,23) {};
\node[circle,inner sep = 0pt,minimum size = 3pt,fill = black] (nx24y23) at (24,23) {};
\node[circle,inner sep = 0pt,minimum size = 3pt,fill = black] (nx20y23) at (20,23) {};
\node[circle,inner sep = 0pt,minimum size = 3pt,fill = black] (nx18y23) at (18,23) {};
\node[circle,inner sep = 0pt,minimum size = 3pt,fill = black] (nx16y23) at (16,23) {};
\node[circle,inner sep = 0pt,minimum size =6pt,fill = black] (nx7y19) at (7,19) {};
\node[circle,inner sep = 0pt,minimum size =6pt,fill = black] (nx6y20) at (6,20) {};
\node[circle,inner sep = 0pt,minimum size =6pt,fill = black] (nx8y20) at (8,20) {};
\node[circle,inner sep = 0pt,minimum size =6pt,fill = black] (nx9y21) at (9,21) {};
\node[circle,inner sep = 0pt,minimum size =6pt,fill = black] (nx7y21) at (7,21) {};
\node[circle,inner sep = 0pt,minimum size =6pt,fill = black] (nx5y21) at (5,21) {};
\node[circle,inner sep = 0pt,minimum size = 3pt,fill = black] (nx1y17) at (1,17) {};
\node[circle,inner sep = 0pt,minimum size = 3pt,fill = black] (nx3y15) at (3,15) {};
\node[circle,inner sep = 0pt,minimum size = 3pt,fill = black] (nx5y15) at (5,15) {};
\node[circle,inner sep = 0pt,minimum size = 3pt,fill = black] (nx9y15) at (9,15) {};
\node[circle,inner sep = 0pt,minimum size = 3pt,fill = black] (nx11y15) at (11,15) {};
\node[circle,inner sep = 0pt,minimum size = 3pt,fill = black] (nx13y17) at (13,17) {};
\node[circle,inner sep = 0pt,minimum size = 3pt,fill = black] (nx13y23) at (13,23) {};
\node[circle,inner sep = 0pt,minimum size = 3pt,fill = black] (nx11y25) at (11,25) {};
\node[circle,inner sep = 0pt,minimum size = 3pt,fill = black] (nx9y25) at (9,25) {};
\node[circle,inner sep = 0pt,minimum size = 3pt,fill = black] (nx9y17) at (9,17) {};
\node[circle,inner sep = 0pt,minimum size = 3pt,fill = black] (nx11y17) at (11,17) {};
\node[circle,inner sep = 0pt,minimum size = 3pt,fill = black] (nx9y19) at (9,19) {};
\node[circle,inner sep = 0pt,minimum size = 3pt,fill = black] (nx11y23) at (11,23) {};
\node[circle,inner sep = 0pt,minimum size = 3pt,fill = black] (nx9y23) at (9,23) {};
\node[circle,inner sep = 0pt,minimum size = 3pt,fill = black] (nx5y17) at (5,17) {};
\node[circle,inner sep = 0pt,minimum size = 3pt,fill = black] (nx3y17) at (3,17) {};
\node[circle,inner sep = 0pt,minimum size = 3pt,fill = black] (nx5y19) at (5,19) {};
\node[circle,inner sep = 0pt,minimum size = 3pt,fill = black] (nx3y25) at (3,25) {};
\node[circle,inner sep = 0pt,minimum size = 3pt,fill = black] (nx5y25) at (5,25) {};
\node[circle,inner sep = 0pt,minimum size = 3pt,fill = black] (nx1y23) at (1,23) {};
\node[circle,inner sep = 0pt,minimum size = 3pt,fill = black] (nx5y23) at (5,23) {};
\node[circle,inner sep = 0pt,minimum size = 3pt,fill = black] (nx3y23) at (3,23) {};
\draw[lightg] (nx18y17) -- (nx16y17);
\draw[black,dotted,thick] (nx18y15) -- (nx18y17);
\draw[black,dotted,thick] (nx20y17) -- (nx20y15);
\draw[lightg] (nx18y17) -- (nx20y17);
\draw[black,dotted,thick] (nx20y19) -- (nx18y17);
\draw[lightg] (nx20y17) -- (nx20y19);
\draw[black,dotted,thick] (nx22y19) -- (nx20y17);
\draw[black,dashed,thick] (nx24y17) -- (nx22y19);
\draw[lightg] (nx24y15) -- (nx24y17);
\draw[lightg] (nx26y17) -- (nx26y15);
\draw[lightg] (nx24y19) -- (nx26y17);
\draw[lightg] (nx24y17) -- (nx24y19);
\draw[black,dashed,thick] (nx26y17) -- (nx24y17);
\draw[black,dashed,thick] (nx28y17) -- (nx26y17);
\draw[lightg] (nx26y23) -- (nx28y23);
\draw[lightg] (nx26y25) -- (nx26y23);
\draw[lightg] (nx24y23) -- (nx24y25);
\draw[lightg] (nx24y21) -- (nx24y23);
\draw[lightg] (nx24y19) -- (nx24y21);
\draw[lightg] (nx23y20) -- (nx24y19);
\draw[black,dotted,thick] (nx22y19) -- (nx23y20);
\draw[black,dashed,thick] (nx21y20) -- (nx22y19);
\draw[lightg] (nx20y19) -- (nx21y20);
\draw[black,dotted,thick] (nx20y21) -- (nx20y19);
\draw[black,dashed,thick] (nx18y23) -- (nx20y21);
\draw[black,dotted,thick] (nx18y25) -- (nx18y23);
\draw[black,dotted,thick] (nx20y23) -- (nx20y25);
\draw[black,dotted,thick] (nx20y21) -- (nx20y23);
\draw[black,dashed,thick] (nx21y20) -- (nx20y21);
\draw[lightg] (nx22y21) -- (nx21y20);
\draw[black,dotted,thick] (nx23y20) -- (nx22y21);
\draw[lightg] (nx24y21) -- (nx23y20);
\draw[lightg] (nx26y23) -- (nx24y21);
\draw[lightg] (nx24y23) -- (nx26y23);
\draw[lightg] (nx22y21) -- (nx24y23);
\draw[black,dotted,thick] (nx20y23) -- (nx22y21);
\draw[black,dotted,thick] (nx18y23) -- (nx20y23);
\draw[black,dashed,thick] (nx16y23) -- (nx18y23);
\draw[lightg] (nx3y17) -- (nx1y17);
\draw[black,dotted,thick] (nx3y15) -- (nx3y17);
\draw[black,dotted,thick] (nx5y17) -- (nx5y15);
\draw[black,dotted,thick] (nx7y19) -- (nx5y17);
\draw[lightg] (nx9y17) -- (nx7y19);
\draw[lightg] (nx9y15) -- (nx9y17);
\draw[lightg] (nx11y17) -- (nx11y15);
\draw[lightg] (nx13y17) -- (nx11y17);
\draw[black,dashed,thick] (nx11y23) -- (nx13y23);
\draw[lightg] (nx11y25) -- (nx11y23);
\draw[lightg] (nx9y23) -- (nx9y25);
\draw[lightg] (nx9y21) -- (nx9y23);
\draw[lightg] (nx9y19) -- (nx9y21);
\draw[lightg] (nx9y17) -- (nx9y19);
\draw[lightg] (nx11y17) -- (nx9y17);
\draw[lightg] (nx9y19) -- (nx11y17);
\draw[lightg] (nx8y20) -- (nx9y19);
\draw[lightg] (nx9y21) -- (nx8y20);
\draw[lightg] (nx11y23) -- (nx9y21);
\draw[black,dashed,thick] (nx9y23) -- (nx11y23);
\draw[black,dashed,thick] (nx7y21) -- (nx9y23);
\draw[black,dashed,thick] (nx6y20) -- (nx7y21);
\draw[lightg] (nx7y19) -- (nx6y20);
\draw[black,dotted,thick] (nx8y20) -- (nx7y19);
\draw[black,dotted,thick] (nx7y21) -- (nx8y20);
\draw[black,dotted,thick] (nx5y23) -- (nx7y21);
\draw[black,dashed,thick] (nx6y20) -- (nx5y21);
\draw[lightg] (nx5y19) -- (nx6y20);
\draw[lightg] (nx5y17) -- (nx5y19);
\draw[lightg] (nx3y17) -- (nx5y17);
\draw[black,dotted,thick] (nx5y19) -- (nx3y17);
\draw[black,dotted,thick] (nx5y21) -- (nx5y19);
\draw[black,dotted,thick] (nx3y25) -- (nx3y23);
\draw[black,dotted,thick] (nx5y23) -- (nx5y25);
\draw[black,dotted,thick] (nx5y21) -- (nx5y23);
\draw[black,dashed,thick] (nx3y23) -- (nx5y21);
\draw[black,dashed,thick] (nx3y23) -- (nx1y23);
\draw[black,dotted,thick] (nx3y23) -- (nx5y23);
\end{tikzpicture}
\end{center}
\legende{Existence of the paths for Lemma~\ref{lemme:lic-12}.}
\label{fig:lic-12}
\end{figure}

With these four lemmas, the ideal behaviour of $\xch$ and $\lic$ is not reached by the two gadgets: paths can still go from $S \cup S'$ to $T \cup T'$, and we did not prove that the vertical paths must use either the two rightmost edges or the two leftmost edges between two consecutives gadgets.

\subsection{Aggregating gadgets}

In order to build the graph for the reduction, we need to aggregate gadgets in the form of a grid, using $\xch$ and $\lic$ in the crossing of rows and columns. This is done by linking the edges incident to corresponding vertices of degree $1$. Figure~\ref{fig:biflot-not} shows how we build the grid, and gives the notation that we will use. Moreover, we define $X := \{x_i~:~i \in \ival{1}{4n}\}$ and $X' := \{x'_i~:~i \in \ival{1}{4n}\}$ where $n$ is the number of columns, and similarly $Y := \{y_i~:~i \in \ival{1}{2p}\}$ and $Y' := \{y'_i~:~i \in \ival{1}{2p}\}$ where $p$ is the number of rows. We call $i^{\textit{th}}$ \emph{vertical cut} the set of edges $V_i := \{f_k^{i,j}~:~j \in \ival{1}{p}, k \in \ival{1}{2}\}$, and $j^{\textit{th}}$ \emph{horizontal cut} the set of edges $H_j := \{e_k^{i,j}~:~i \in \ival{1}{n}, k \in \ival{1}{4}\}$. Vertex $v$ in $M(i,j)$ will be denoted $v^{i,j}$.\\

\begin{figure}[htbp]
\thispagestyle{plain}
\begin{center}

\end{center}
\legende{A grid of dimension $2 \times 3$.}
\label{fig:biflot-not}
\end{figure}

The following lemma will help us to explain why the horizontal paths cannot go through more than two different gadgets in each column, justifying the buffers. More exactly, the no-paths can be vertically moved by at most $2$ rows in each column, so the buffers must contain at least $4$ times the number of columns plus $2$.

\begin{lemma}\label{lemme:decalage-nc}
Let $G$ be a grid of dimension $1 \times 3$, built exclusively with $\lic$. Let $\mathcal{P}$ be a set of edge-disjoint paths, consisting of two $(X,X')$-paths $A$ and $B$, a $(\{y_1,y_2\},y'_1)$-path $C$, a $(y_3,y'_2)$-path $D$, a $(y_4,y'_3)$-path $E$, a $(y_5,y'_4)$-path $F$, a $(y_6,y'_5)$-path $H$ and an $(X',y'_6)$-path $I$. Then $y_1$ and $y_2$ are disconnected from $X'$ in $G \setminus E(\mathcal{P})$ (or equivalently, there is no no-path from $\{y_1,y_2\}$ to $X'$).
\end{lemma}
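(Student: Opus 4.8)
The plan is to argue by a parity count on the edges \emph{not} used by the solution, reduce the statement to excluding one pairing of two ``no-paths'', and then refute that pairing by counting edges across fully used cuts (rather than by Jordan-curve barriers, since a no-path may cross the solution at a vertex). Throughout I take $\mathcal{P}$ uncrossed and simple, as permitted by Lemma~\ref{lemme:uncrossing}; only the solution paths are so constrained, the no-paths being free to cross.

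First I would set up the parity. Every inner vertex of $G$ has degree $4$ and no path of $\mathcal{P}$ ends at an inner vertex, so each solution path uses an even number of edges at each inner vertex; hence in $G \setminus E(\mathcal{P})$ every inner vertex has even degree, and the only odd-degree vertices are the terminals left unused by $\mathcal{P}$. Listing the endpoints of the eight solution paths $A,B,C,D,E,F,H,I$ shows that exactly four terminals are unused: one vertex $y^\ast \in \{y_1,y_2\}$, two vertices of $X$, and one vertex $z \in X'$. Consequently $G \setminus E(\mathcal{P})$ is an edge-disjoint union of cycles together with exactly two no-paths joining these four odd vertices in pairs. The statement is then equivalent to excluding the \emph{bad pairing} $\{y^\ast,z\}$ together with (the two unused $X$-vertices): I must show that the no-path leaving $y^\ast$ ends on the top boundary $X$, not at $z$.

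Next I would show that, \emph{under the bad pairing}, all twelve boundary edges of each of the three \lic\ gadgets are forced to be used, which pins down how the eight solution paths and the two no-paths enter and leave each gadget. This is a routine parity-of-crossings count for the horizontal cuts $H_1,H_2$, the top cut and the bottom cut: for $H_j$ each of the two $(X,X')$-paths $A,B$, the single genuinely transversal horizontal path ($D$ for $H_1$, $F$ for $H_2$), and the unique no-path reaching the bottom each cross the cut an odd number of times, so they already use all four of its edges; the analogous bookkeeping on the two left and two right boundary edges of each row forces every gadget-boundary edge to be used, and it confines the second no-path to the top row (it cannot cross the now-saturated $H_1$). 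Translating into the terminology of Section~\ref{sec:elem}, each of the two lower gadgets is then forced to carry three pairwise edge-disjoint $(S,S')$-strands (the paths $A,B$ and the descending no-path), one $(T,S)$-path, one $(T,T')$-path and one $(S',T')$-path.

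The crux is to show this forced configuration cannot be realized inside a single \lic. Here I would reuse the local cut technique from the proof of Lemma~\ref{lemme:lic-22}: apply the six-edge cut $C=\{u_7,u_9,u_{10}\}$ together with its mirror image and, if needed, the central cut $\delta(\{a,b,c,d\})$, and count the strands forced to traverse each of them; the three top-to-bottom strands together with the transversal strands require strictly more edge-disjoint crossings than the capacity of one of these cuts, giving the contradiction. The main obstacle, and the reason the naive argument ``the no-path cannot cross the horizontal paths'' fails, is precisely that a no-path may cross a solution path at a shared non-crossing vertex (all four edges there being used), so a barrier built from a single path does not block it; the entire argument must therefore run through edge counts across \emph{saturated} cuts. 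Two bookkeeping points must be checked carefully along the way: that the vertical paths $A,B$ and the second no-path cannot meander between rows to dodge the chosen cuts (which follows from the saturation established in the previous step), and that the left/right assignment of the $(S,S')$-strands at the top and bottom of each gadget, which the cut count does not fix a priori, is handled by a short case analysis symmetric in $C$ and its mirror.
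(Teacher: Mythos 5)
Your parity set-up and the saturation bookkeeping are correct and essentially match the paper's first step (the tight cuts pin $C$, $E$, $H$ into their rows, saturate $H_1$, $H_2$ and the top and bottom cuts, and confine the second no-path to the top row). The gap is at your crux: the forced local configuration \emph{can} be realized inside a single $\lic$, so no cut-counting argument confined to one gadget can refute it. Concretely, in one $\lic$ take the $(T,S)$-strand $t_1u_1s_1$, the $(T,T')$-strand $t_2u_9u_7\,b\,d\,c\,u_6u_4t'_1$, the $(S',T')$-strand $s'_4u_{12}t'_2$, the two vertical solution strands $s_3u_3\,a\,c\,u_8u_{11}\,d\,u_{10}s'_2$ and $s_4u_4u_3u_6u_8u_{12}u_{11}s'_3$, and the no-path $s_2u_2u_5u_7u_{10}u_9s'_1$. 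These six strands are pairwise edge-disjoint; the only crossings between solution paths occur at $c$ and at $u_6$, both crossing vertices of $\lic$ (recall $\lic$ has the two extra crossing vertices $u_5,u_6$), the no-path crosses the horizontal strand at the non-crossing vertex $u_7$, which is permitted for no-paths, and the two verticals do not cross each other. Moreover your proposed cut $\delta(\{u_7,u_9,u_{10}\})$ is exactly saturated here (six edges, six uses), as are its mirror and the central cut, so the counting you describe terminates in a tight, not violated, inequality.

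What is missing is precisely the cross-gadget information that the paper exploits. The local cut analysis in each of $M(1,2)$ and $M(1,3)$ does not exclude the configuration; it only pins down \emph{where} the path $F$ may cross $A$ or $B$: only at $d$ of $M(1,2)$ and only at $a$ of $M(1,3)$. The contradiction then comes from $F$'s double role ($(S',T')$-strand in $M(1,2)$ and $(T,S)$-strand in $M(1,3)$): since $H_2$ is saturated, each strand traverses it exactly once, and Lemma~\ref{lemme:uncross-biflot} forces $F$ to cross $A$ and $B$ in a fixed order, hence to cross $A$ at $a^{1,3}$ and $B$ at $d^{1,2}$; a parity count of $F$'s left-to-right traversals of these two vertices, combined with the fact that $F$ cannot use edges of both $C_1$ and $C'_2$, is then contradictory. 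Note that this decisive step is an ordering (Jordan-type) argument about the \emph{solution} paths $F$, $A$, $B$ only, so your stated reason for banishing such arguments --- that the no-path may cross the solution at shared vertices --- does not apply to it: the no-path plays no role in the final contradiction. Cut capacities are blind to both the ordering and the parity, which is exactly why the single-gadget version of your plan cannot be repaired by choosing better cuts.
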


Note that this lemma is still true if some of the $\lic$ graphs are replaced by $\xch$. Figure~\ref{fig:biflot-exclaim5} shows the extremities of each path.

\begin{figure}[htb]
\begin{center}
\begin{tikzpicture}[x=0.3cm,y=0.3cm,>=latex]
\node[circle,inner sep = 0pt,minimum size = 3pt,fill = black] (nx13y28) at (13,28) {};
\node[circle,inner sep = 0pt,minimum size = 3pt,fill = black] (nx13y0) at (13,0) {};
\node[circle,inner sep = 0pt,minimum size = 3pt,fill = black] (nx17y2) at (17,2) {};
\node[circle,inner sep = 0pt,minimum size = 3pt,fill = black] (nx15y2) at (15,2) {};
\node[circle,inner sep = 0pt,minimum size = 3pt,fill = black] (nx11y2) at (11,2) {};
\node[circle,inner sep = 0pt,minimum size = 3pt,fill = black] (nx9y2) at (9,2) {};
\node[circle,inner sep = 0pt,minimum size = 3pt,fill = black] (nx17y26) at (17,26) {};
\node[circle,inner sep = 0pt,minimum size = 3pt,fill = black] (nx15y26) at (15,26) {};
\node[circle,inner sep = 0pt,minimum size = 3pt,fill = black] (nx11y26) at (11,26) {};
\node[circle,inner sep = 0pt,minimum size = 3pt,fill = black] (nx9y26) at (9,26) {};
\node[circle,inner sep = 0pt,minimum size = 3pt,fill = black] (nx5y22) at (5,22) {};
\node[circle,inner sep = 0pt,minimum size = 3pt,fill = black] (nx7y9) at (7,9) {};
\node[circle,inner sep = 0pt,minimum size = 3pt,fill = black] (nx19y9) at (19,9) {};
\node[circle,inner sep = 0pt,minimum size =6pt,fill = black] (nx13y5) at (13,5) {};
\node[circle,inner sep = 0pt,minimum size =6pt,fill = black] (nx12y6) at (12,6) {};
\node[circle,inner sep = 0pt,minimum size =6pt,fill = black] (nx14y6) at (14,6) {};
\node[circle,inner sep = 0pt,minimum size =6pt,fill = black] (nx15y7) at (15,7) {};
\node[circle,inner sep = 0pt,minimum size =6pt,fill = black] (nx13y7) at (13,7) {};
\node[circle,inner sep = 0pt,minimum size =6pt,fill = black] (nx11y7) at (11,7) {};
\node[circle,inner sep = 0pt,minimum size =6pt,fill = black] (nx13y13) at (13,13) {};
\node[circle,inner sep = 0pt,minimum size =6pt,fill = black] (nx12y14) at (12,14) {};
\node[circle,inner sep = 0pt,minimum size =6pt,fill = black] (nx14y14) at (14,14) {};
\node[circle,inner sep = 0pt,minimum size =6pt,fill = black] (nx15y15) at (15,15) {};
\node[circle,inner sep = 0pt,minimum size =6pt,fill = black] (nx13y15) at (13,15) {};
\node[circle,inner sep = 0pt,minimum size =6pt,fill = black] (nx11y15) at (11,15) {};
\node[circle,inner sep = 0pt,minimum size =6pt,fill = black] (nx13y21) at (13,21) {};
\node[circle,inner sep = 0pt,minimum size =6pt,fill = black] (nx12y22) at (12,22) {};
\node[circle,inner sep = 0pt,minimum size =6pt,fill = black] (nx14y22) at (14,22) {};
\node[circle,inner sep = 0pt,minimum size =6pt,fill = black] (nx15y23) at (15,23) {};
\node[circle,inner sep = 0pt,minimum size =6pt,fill = black] (nx13y23) at (13,23) {};
\node[circle,inner sep = 0pt,minimum size =6pt,fill = black] (nx11y23) at (11,23) {};
\node[circle,inner sep = 0pt,minimum size = 3pt,fill = black] (nx7y3) at (7,3) {};
\node[circle,inner sep = 0pt,minimum size = 3pt,fill = black] (nx19y3) at (19,3) {};
\node[circle,inner sep = 0pt,minimum size = 3pt,fill = black] (nx17y3) at (17,3) {};
\node[circle,inner sep = 0pt,minimum size = 3pt,fill = black] (nx15y3) at (15,3) {};
\node[circle,inner sep = 0pt,minimum size = 3pt,fill = black] (nx11y3) at (11,3) {};
\node[circle,inner sep = 0pt,minimum size = 3pt,fill = black] (nx9y3) at (9,3) {};
\node[circle,inner sep = 0pt,minimum size = 3pt,fill = black] (nx11y5) at (11,5) {};
\node[circle,inner sep = 0pt,minimum size = 3pt,fill = black] (nx15y5) at (15,5) {};
\node[circle,inner sep = 0pt,minimum size = 3pt,fill = black] (nx17y9) at (17,9) {};
\node[circle,inner sep = 0pt,minimum size = 3pt,fill = black] (nx15y9) at (15,9) {};
\node[circle,inner sep = 0pt,minimum size = 3pt,fill = black] (nx11y9) at (11,9) {};
\node[circle,inner sep = 0pt,minimum size = 3pt,fill = black] (nx9y9) at (9,9) {};
\node[circle,inner sep = 0pt,minimum size = 3pt,fill = black] (nx7y11) at (7,11) {};
\node[circle,inner sep = 0pt,minimum size = 3pt,fill = black] (nx19y11) at (19,11) {};
\node[circle,inner sep = 0pt,minimum size = 3pt,fill = black] (nx17y11) at (17,11) {};
\node[circle,inner sep = 0pt,minimum size = 3pt,fill = black] (nx15y11) at (15,11) {};
\node[circle,inner sep = 0pt,minimum size = 3pt,fill = black] (nx11y11) at (11,11) {};
\node[circle,inner sep = 0pt,minimum size = 3pt,fill = black] (nx9y11) at (9,11) {};
\node[circle,inner sep = 0pt,minimum size = 3pt,fill = black] (nx15y13) at (15,13) {};
\node[circle,inner sep = 0pt,minimum size = 3pt,fill = black] (nx11y13) at (11,13) {};
\node[circle,inner sep = 0pt,minimum size = 3pt,fill = black] (nx19y17) at (19,17) {};
\node[circle,inner sep = 0pt,minimum size = 3pt,fill = black] (nx7y17) at (7,17) {};
\node[circle,inner sep = 0pt,minimum size = 3pt,fill = black] (nx17y17) at (17,17) {};
\node[circle,inner sep = 0pt,minimum size = 3pt,fill = black] (nx15y17) at (15,17) {};
\node[circle,inner sep = 0pt,minimum size = 3pt,fill = black] (nx11y17) at (11,17) {};
\node[circle,inner sep = 0pt,minimum size = 3pt,fill = black] (nx9y17) at (9,17) {};
\node[circle,inner sep = 0pt,minimum size = 3pt,fill = black] (nx19y19) at (19,19) {};
\node[circle,inner sep = 0pt,minimum size = 3pt,fill = black] (nx17y19) at (17,19) {};
\node[circle,inner sep = 0pt,minimum size = 3pt,fill = black] (nx15y21) at (15,21) {};
\node[circle,inner sep = 0pt,minimum size = 3pt,fill = black] (nx15y19) at (15,19) {};
\node[circle,inner sep = 0pt,minimum size = 3pt,fill = black] (nx11y19) at (11,19) {};
\node[circle,inner sep = 0pt,minimum size = 3pt,fill = black] (nx7y19) at (7,19) {};
\node[circle,inner sep = 0pt,minimum size = 3pt,fill = black] (nx19y25) at (19,25) {};
\node[circle,inner sep = 0pt,minimum size = 3pt,fill = black] (nx17y25) at (17,25) {};
\node[circle,inner sep = 0pt,minimum size = 3pt,fill = black] (nx15y25) at (15,25) {};
\node[circle,inner sep = 0pt,minimum size = 3pt,fill = black] (nx9y19) at (9,19) {};
\node[circle,inner sep = 0pt,minimum size = 3pt,fill = black] (nx11y21) at (11,21) {};
\node[circle,inner sep = 0pt,minimum size = 3pt,fill = black] (nx7y25) at (7,25) {};
\node[circle,inner sep = 0pt,minimum size = 3pt,fill = black] (nx9y25) at (9,25) {};
\node[circle,inner sep = 0pt,minimum size = 3pt,fill = black] (nx11y25) at (11,25) {};
\draw[black] (nx17y26) -- (nx13y28);
\draw[black] (nx13y28) -- (nx15y26);
\draw[black] (nx13y28) -- (nx11y26);
\draw[black] (nx9y26) -- (nx13y28);
\draw[black,very thick] (nx13y0) -- (nx17y2);
\draw[black,very thick] (nx13y0) -- (nx15y2);
\draw[black,very thick] (nx11y2) -- (nx13y0);
\draw[black,very thick] (nx9y2) -- (nx13y0);
\draw[black,very thick] (nx5y22) -- (nx7y19);
\draw[black,very thick] (nx7y25) -- (nx5y22);
\draw[black] (nx15y25) -- (nx15y26);
\draw[black] (nx11y26) -- (nx11y25);
\draw[black] (nx9y25) -- (nx9y26);
\draw[black,very thick] (nx9y9) -- (nx7y9);
\draw[black,very thick] (nx17y9) -- (nx19y9);
\draw[black,very thick] (nx9y3) -- (nx7y3);
\draw[black,very thick] (nx9y2) -- (nx9y3);
\draw[black,very thick] (nx11y3) -- (nx11y2);
\draw[black,very thick] (nx15y2) -- (nx15y3);
\draw[black,very thick] (nx17y3) -- (nx17y2);
\draw[black,very thick] (nx19y3) -- (nx17y3);
\draw[black] (nx17y3) -- (nx15y3);
\draw[black] (nx15y5) -- (nx17y3);
\draw[black] (nx15y3) -- (nx15y5);
\draw[black] (nx13y5) -- (nx15y3);
\draw[black,very thick] (nx11y3) -- (nx13y5);
\draw[black,very thick] (nx11y3) -- (nx11y5);
\draw[black,very thick] (nx9y3) -- (nx11y3);
\draw[black,very thick] (nx11y5) -- (nx9y3);
\draw[black,very thick] (nx11y5) -- (nx11y7);
\draw[black,very thick] (nx12y6) -- (nx11y5);
\draw[black] (nx13y5) -- (nx12y6);
\draw[black] (nx14y6) -- (nx13y5);
\draw[black] (nx15y5) -- (nx14y6);
\draw[black,very thick] (nx15y7) -- (nx15y5);
\draw[black,very thick] (nx14y6) -- (nx15y7);
\draw[black] (nx13y7) -- (nx14y6);
\draw[black] (nx12y6) -- (nx13y7);
\draw[black] (nx11y7) -- (nx12y6);
\draw[black,very thick] (nx15y7) -- (nx15y9);
\draw[black,very thick] (nx17y9) -- (nx15y7);
\draw[black,very thick] (nx15y9) -- (nx17y9);
\draw[black,very thick] (nx13y7) -- (nx15y9);
\draw[black] (nx11y9) -- (nx13y7);
\draw[black] (nx11y9) -- (nx9y9);
\draw[black] (nx11y7) -- (nx11y9);
\draw[black] (nx9y9) -- (nx11y7);
\draw[black,very thick] (nx17y9) -- (nx17y11);
\draw[black,very thick] (nx15y11) -- (nx15y9);
\draw[black,very thick] (nx11y9) -- (nx11y11);
\draw[black,very thick] (nx9y11) -- (nx9y9);
\draw[black,very thick] (nx9y11) -- (nx7y11);
\draw[black,very thick] (nx17y11) -- (nx19y11);
\draw[black] (nx17y11) -- (nx15y11);
\draw[black] (nx15y13) -- (nx17y11);
\draw[black] (nx15y11) -- (nx15y13);
\draw[black] (nx13y13) -- (nx15y11);
\draw[black,very thick] (nx11y11) -- (nx13y13);
\draw[black,very thick] (nx11y11) -- (nx11y13);
\draw[black,very thick] (nx9y11) -- (nx11y11);
\draw[black,very thick] (nx11y13) -- (nx9y11);
\draw[black,very thick] (nx15y13) -- (nx15y15);
\draw[black] (nx14y14) -- (nx15y13);
\draw[black] (nx13y13) -- (nx14y14);
\draw[black] (nx12y14) -- (nx13y13);
\draw[black,very thick] (nx11y13) -- (nx12y14);
\draw[black,very thick] (nx11y15) -- (nx11y13);
\draw[black] (nx12y14) -- (nx11y15);
\draw[black] (nx13y15) -- (nx12y14);
\draw[black] (nx14y14) -- (nx13y15);
\draw[black,very thick] (nx15y15) -- (nx14y14);
\draw[black,very thick] (nx17y17) -- (nx19y17);
\draw[black,very thick] (nx15y15) -- (nx15y17);
\draw[black,very thick] (nx17y17) -- (nx15y15);
\draw[black,very thick] (nx15y17) -- (nx17y17);
\draw[black,very thick] (nx13y15) -- (nx15y17);
\draw[black] (nx11y17) -- (nx13y15);
\draw[black] (nx11y15) -- (nx9y17);
\draw[black] (nx11y17) -- (nx11y15);
\draw[black] (nx9y17) -- (nx11y17);
\draw[black,very thick] (nx7y17) -- (nx9y17);
\draw[black,very thick] (nx17y19) -- (nx17y17);
\draw[black,very thick] (nx15y19) -- (nx15y17);
\draw[black,very thick] (nx11y19) -- (nx11y17);
\draw[black,very thick] (nx9y19) -- (nx9y17);
\draw[black,very thick] (nx17y19) -- (nx19y19);
\draw[black] (nx17y19) -- (nx15y19);
\draw[black] (nx15y21) -- (nx17y19);
\draw[black] (nx15y23) -- (nx15y21);
\draw[black] (nx15y21) -- (nx14y22);
\draw[black] (nx15y19) -- (nx15y21);
\draw[black] (nx13y21) -- (nx15y19);
\draw[black] (nx11y19) -- (nx13y21);
\draw[black] (nx11y19) -- (nx11y21);
\draw[black] (nx9y19) -- (nx11y19);
\draw[black,very thick] (nx7y19) -- (nx9y19);
\draw[black] (nx17y25) -- (nx19y25);
\draw[black] (nx17y26) -- (nx17y25);
\draw[black] (nx17y25) -- (nx15y25);
\draw[black] (nx15y23) -- (nx17y25);
\draw[black] (nx15y25) -- (nx15y23);
\draw[black] (nx13y23) -- (nx15y25);
\draw[black] (nx14y22) -- (nx15y23);
\draw[black] (nx13y21) -- (nx12y22);
\draw[black] (nx14y22) -- (nx13y21);
\draw[black] (nx13y23) -- (nx14y22);
\draw[black] (nx12y22) -- (nx13y23);
\draw[black] (nx12y22) -- (nx11y21);
\draw[black] (nx11y23) -- (nx12y22);
\draw[black] (nx11y21) -- (nx9y19);
\draw[black] (nx11y23) -- (nx11y21);
\draw[black] (nx11y25) -- (nx13y23);
\draw[black] (nx11y23) -- (nx11y25);
\draw[black] (nx9y25) -- (nx11y23);
\draw[black,very thick] (nx9y25) -- (nx7y25);
\draw[black] (nx11y25) -- (nx9y25);
\draw (13,0) node[anchor = north] {$A,B,I,Q$};
\draw (7,3) node[anchor = east] {$H$};
\draw (7,9) node[anchor = east] {$F$};
\draw (7,11) node[anchor = east] {$E$};
\draw (7,17) node[anchor = east] {$D$};
\draw (5,22) node[anchor = east] {$C, Q$};
\draw (19,3) node[anchor = west] {$I$};
\draw (19,9) node[anchor = west] {$H$};
\draw (19,11) node[anchor = west] {$F$};
\draw (19,17) node[anchor = west] {$E$};
\draw (19,19) node[anchor = west] {$D$};
\draw (19,25) node[anchor = west] {$C$};
\draw (13,28) node[anchor = south] {$A,B$};
\end{tikzpicture}
\end{center}
\legende{There is no solution to this edge-disjoint paths problem, even if $Q$ can cross other paths at every vertex. Edges contained in some tight cut are drawn thick.}
\label{fig:biflot-exclaim5}
\end{figure}

\begin{proof}
Suppose that there is a $(\{y_1,y_2\},X')$-no-path $Q$. Note that $Q$ can cross other paths at non-crossing vertices. We can take $\mathcal{P}$ uncrossed (uncrossing the paths does not change the set of edges used by the solution), and without loss of generality, $A$ is routed on the left of $B$.\\
Because of tight cuts, $M(1,1)$ contains $C$, $M(1,2)$ contains $E$, and $M(1,3)$ contains $H$. In $M(1,2)$, $E$ goes through $u_6$, ($u_3$ and $u_4$ are non-crossing vertices, and at least one of the paths $A$, $B$ or $D$ goes through these vertices) and uses $u_6c$ or $u_6u_8$. For the same reason, $E$ goes through $u_7$ and uses $u_7u_5$ or $u_7b$. Consider in $M(1,2)$ the cuts $C_1 := \{u_5u_7,u_5b,ab,ac,u_6c,u_6u_8\}$ and $C_2 := \{u_5u_7,u_7b,bd,cd,cu_8,u_6u_8\}$. There are exactly four paths routed through these two cuts, and $E$ must use three edges from at least one of them, because it can use neither $u_{10}d$ nor $au_3$. Then $F$ cannot go through $C_1$, proving that it can cross $A$ or $B$ only at vertex $d$ in $M(1,2)$.\\
Similarly, in $M(1,3)$, using path $H$ and the same two cuts (we call them $C'_1$ and $C'_2$), $F$ can only cross $A$ or $B$ at vertex $a$. Because of Lemma~\ref{lemme:uncross-biflot} and the tight cut between $M(1,2)$ and $M(1,3)$, $F$ crosses $A$ in $a$ of $M(1,3)$ and then crosses $B$ in $d$ of $M(1,2)$. As $F$ cannot use edges in both $C_1$ and $C'_2$, $F$ must traverse $d$ of $M(1,2)$ and $a$ of $M(1,3)$ from left to right or from right to left, an odd number of times. This is indeed a contradiction.
\end{proof}

We need a last ``local'' lemma, proving that without the presence of a $\lic$, the vertical paths of a column are shifted. This fills the holes of Lemma~\ref{lemme:xch-12}. As the no-paths only use three rows of each column (this is not fully proved yet), we will only need to study the case with a grid of $3$ rows. 

\begin{lemma}\label{lemme:biflot-exclaim6}
Let $G$ be a grid of $\xch$ of dimension $1 \times 3$. Then there is no set of edge-disjoint paths consisting of five $(Y,Y')$-paths, one $(x_1,x'_1)$-path and one $(x_2,x'_2)$-path.
\end{lemma}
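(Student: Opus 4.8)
The plan is to combine the local ``shifting'' behaviour of $\xch$ (Lemmas~\ref{lemme:xch-22} and~\ref{lemme:xch-12}) with a parity argument across the three rows. First I would fix the positions of the terminals on the outer face. Reading the boundary of the $1\times 3$ grid clockwise from the top-left corner gives the cyclic order
\[
 x_1,x_2,x_3,x_4,\ y'_1,\ldots,y'_6,\ x'_4,x'_3,x'_2,x'_1,\ y_6,\ldots,y_1 .
\]
By Lemma~\ref{lemme:uncrossing} we may assume the solution is uncrossed and simple. The two vertical paths $P_1=(x_1,x'_1)$ and $P_2=(x_2,x'_2)$ have nested endpoints, so they do not cross each other. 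Closing $P_1$ by an arc in the outer face separates $\{y_1,\ldots,y_6\}$ from $\{y'_1,\ldots,y'_6\}$ (Jordan's theorem), and likewise for $P_2$; hence every one of the five horizontal $(Y,Y')$-paths must cross \emph{each} of $P_1,P_2$, and by uncrossing exactly once. This yields exactly $10$ vertical--horizontal crossings, each occurring at one of the crossing vertices $a,b,c,d$ of the three $\xch$ gadgets.

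The heart of the argument is then the structural claim that each gadget $M(1,j)$ is traversed by the two vertical paths as a pair entering on one \emph{adjacent} pair of top terminals ($\{s_1,s_2\}$ or $\{s_3,s_4\}$), and that the horizontal paths crossing them inside $M(1,j)$ are full $(T,T')$-paths of that gadget, so that Lemma~\ref{lemme:xch-22} or~\ref{lemme:xch-12} applies verbatim. Granting this, let $k_j$ be the number of horizontal paths crossing the vertical paths inside $M(1,j)$. Since each gadget has only four crossing vertices, $k_j\le 2$; since each horizontal path crosses both $P_1$ and $P_2$ once and these two crossings lie in the same gadget, $\sum_j k_j=5$. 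Hence $(k_1,k_2,k_3)$ is a permutation of $(2,2,1)$, so every gadget carries one or two crossing horizontal paths. For $k_j=2$, Lemma~\ref{lemme:xch-22} forces the vertical paths to enter on one adjacent pair and leave on the opposite one; for $k_j=1$, Lemma~\ref{lemme:xch-12} (and its left--right mirror) forces the same. In every gadget the vertical paths therefore \emph{shift} sides.

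This gives the parity contradiction. The vertical paths enter $M(1,1)$ at $\{s_1,s_2\}$ (the left pair, since $x_1=s_1$, $x_2=s_2$), and the aggregation carries the exit pair of one gadget to the entry pair of the next while preserving left/right. After three consecutive shifts the side alternates left $\to$ right $\to$ left $\to$ right, so the vertical paths must leave $M(1,3)$ through $\{s'_3,s'_4\}$; but they are required to end at $x'_1=s'_1$ and $x'_2=s'_2$, the left pair. This is impossible, so no such family of paths exists. (Note this also explains the hypothesis of \emph{five} horizontal paths: it is exactly what forces $(k_1,k_2,k_3)=(2,2,1)$ up to permutation, guaranteeing $k_j\ge 1$ in every row and hence a shift in each.)

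The main obstacle is the structural step granted in the second paragraph, i.e. proving that within each gadget the two vertical paths form an adjacent entering pair and that each horizontal path meets both of them inside a single gadget---equivalently, that neither a vertical path nor a horizontal path wanders between rows so as to split a crossing across two gadgets. I would establish this exactly as in the proof of Lemma~\ref{lemme:decalage-nc}: the horizontal cuts $H_1,H_2$ have only four edges each, two of which are consumed by $P_1,P_2$, and internal cuts of the $C_1,C_2$ type used there pin down which cut-edges are forced and leave no room for a horizontal path to slip between the two vertical paths across a row boundary. This bookkeeping is delicate precisely because, a priori, the vertical paths could exit a gadget on a non-adjacent pair such as $\{s'_1,s'_3\}$ and a horizontal path could change rows; ruling these out via tightness is where the real work lies, after which the shift-plus-parity skeleton closes the proof.
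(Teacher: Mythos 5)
Your shift-plus-parity skeleton matches the intuition behind the gadgets, but the proof has a genuine gap, and it is exactly the one you flag yourself: the structural claim you ``grant'' in your second paragraph --- that the two vertical paths traverse each gadget as an adjacent pair entering at $\{s_1,s_2\}$ or $\{s_3,s_4\}$, and that each horizontal path crosses both $P_1$ and $P_2$ inside a single gadget as a $(T,T')$-path of that gadget --- is not a technical preliminary but essentially the whole content of Lemma~\ref{lemme:biflot-exclaim6}. Without it, nothing prevents a vertical path from leaving a gadget at a non-adjacent pair such as $\{s'_1,s'_3\}$, or a horizontal path from crossing $P_1$ in one gadget and $P_2$ in another, and then your bookkeeping $k_j\le 2$, $\sum_j k_j=5$ collapses, as does the applicability of Lemmas~\ref{lemme:xch-22} and~\ref{lemme:xch-12} (the latter moreover presupposes precisely the adjacent-pair behaviour you are trying to establish). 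Your plan to close the gap ``exactly as in Lemma~\ref{lemme:decalage-nc}'' does not transfer: in that proof the cuts $C_1,C_2$ yield forced edges only because the hypotheses prescribe the extremities of every horizontal path and of the no-path ($C$ occupying the first row, $E$ the second, $H$ the third), which is what makes the relevant cuts tight; here the five $(Y,Y')$-paths have unspecified pairings and unknown row distribution, so no such tightness is available and the forced-edge analysis has no starting point.

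For contrast, the paper's proof avoids any per-gadget structure. With $Y$ and $Y'$ contracted to single vertices $s,s'$, it counts edge usage across the two $12$-edge cuts $\delta(L)$ and $\delta(R)$ separating the left and right halves of the three gadgets from the central diamonds: each of $A=(x_1,x'_1)$ and $B=(x_2,x'_2)$ must cross all five horizontal paths, can realize at most two crossings per diamond, hence visits all three diamonds and uses at least six cut edges, an even number in each cut since both its endpoints lie on the left; together with the five horizontal paths crossing each cut an odd number of times and $d(L)=d(R)=12$, this forces $A$ to use $(4,2)$ and $B$ to use $(2,4)$ edges of $(\delta(L),\delta(R))$ and every horizontal path to use exactly one edge of each cut, after which a purely local contradiction appears at the central gadget, where at least two $\delta(L)$-edges but at most one $\delta(R)$-edge remain for horizontal paths that must consume one of each per visit. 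So the correct argument is a global counting one that never needs the adjacent-pair traversal you assumed; to salvage your write-up you should either adopt such a counting argument or supply a genuine proof of your structural claim, which is substantially harder than the ``delicate bookkeeping'' your sketch suggests.
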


\begin{figure}
\begin{center}
\begin{tikzpicture}[x=0.3cm,y=0.3cm,>=latex]
\fill[black, nearly transparent, rounded corners=2pt] (14.5,1.5) rectangle (22.5,26.5);
\fill[black, nearly transparent, rounded corners=2pt] (3.5,1.5) rectangle (11.5,26.5);
\node[circle,inner sep = 0pt,minimum size = 3pt,fill = black] (nx11y7) at (11,7) {};
\node[circle,inner sep = 0pt,minimum size = 3pt,fill = black] (nx15y7) at (15,7) {};
\node[circle,inner sep = 0pt,minimum size = 3pt,fill = black] (nx15y15) at (15,15) {};
\node[circle,inner sep = 0pt,minimum size = 3pt,fill = black] (nx11y15) at (11,15) {};
\node[circle,inner sep = 0pt,minimum size = 3pt,fill = black] (nx15y23) at (15,23) {};
\node[circle,inner sep = 0pt,minimum size = 3pt,fill = black] (nx11y23) at (11,23) {};
\node[circle,inner sep = 0pt,minimum size = 3pt,fill = black] (nx17y2) at (17,2) {};
\node[circle,inner sep = 0pt,minimum size = 3pt,fill = black] (nx15y2) at (15,2) {};
\node[circle,inner sep = 0pt,minimum size = 3pt,fill = black] (nx11y2) at (11,2) {};
\node[circle,inner sep = 0pt,minimum size = 3pt,fill = black] (nx9y2) at (9,2) {};
\node[circle,inner sep = 0pt,minimum size = 3pt,fill = black] (nx17y26) at (17,26) {};
\node[circle,inner sep = 0pt,minimum size = 3pt,fill = black] (nx15y26) at (15,26) {};
\node[circle,inner sep = 0pt,minimum size = 3pt,fill = black] (nx11y26) at (11,26) {};
\node[circle,inner sep = 0pt,minimum size = 3pt,fill = black] (nx9y26) at (9,26) {};
\node[circle,inner sep = 0pt,minimum size = 3pt,fill = black] (nx7y9) at (7,9) {};
\node[circle,inner sep = 0pt,minimum size = 3pt,fill = black] (nx19y9) at (19,9) {};
\node[circle,inner sep = 0pt,minimum size =6pt,fill = black] (nx13y5) at (13,5) {};
\node[circle,inner sep = 0pt,minimum size =6pt,fill = black] (nx12y6) at (12,6) {};
\node[circle,inner sep = 0pt,minimum size =6pt,fill = black] (nx14y6) at (14,6) {};
\node[circle,inner sep = 0pt,minimum size =6pt,fill = black] (nx13y7) at (13,7) {};
\node[circle,inner sep = 0pt,minimum size =6pt,fill = black] (nx13y13) at (13,13) {};
\node[circle,inner sep = 0pt,minimum size =6pt,fill = black] (nx12y14) at (12,14) {};
\node[circle,inner sep = 0pt,minimum size =6pt,fill = black] (nx14y14) at (14,14) {};
\node[circle,inner sep = 0pt,minimum size =6pt,fill = black] (nx13y15) at (13,15) {};
\node[circle,inner sep = 0pt,minimum size =6pt,fill = black] (nx13y21) at (13,21) {};
\node[circle,inner sep = 0pt,minimum size =6pt,fill = black] (nx12y22) at (12,22) {};
\node[circle,inner sep = 0pt,minimum size =6pt,fill = black] (nx14y22) at (14,22) {};
\node[circle,inner sep = 0pt,minimum size =6pt,fill = black] (nx13y23) at (13,23) {};
\node[circle,inner sep = 0pt,minimum size = 3pt,fill = black] (nx7y3) at (7,3) {};
\node[circle,inner sep = 0pt,minimum size = 3pt,fill = black] (nx19y3) at (19,3) {};
\node[circle,inner sep = 0pt,minimum size = 3pt,fill = black] (nx17y3) at (17,3) {};
\node[circle,inner sep = 0pt,minimum size = 3pt,fill = black] (nx15y3) at (15,3) {};
\node[circle,inner sep = 0pt,minimum size = 3pt,fill = black] (nx11y3) at (11,3) {};
\node[circle,inner sep = 0pt,minimum size = 3pt,fill = black] (nx9y3) at (9,3) {};
\node[circle,inner sep = 0pt,minimum size = 3pt,fill = black] (nx11y5) at (11,5) {};
\node[circle,inner sep = 0pt,minimum size = 3pt,fill = black] (nx15y5) at (15,5) {};
\node[circle,inner sep = 0pt,minimum size = 3pt,fill = black] (nx17y9) at (17,9) {};
\node[circle,inner sep = 0pt,minimum size = 3pt,fill = black] (nx15y9) at (15,9) {};
\node[circle,inner sep = 0pt,minimum size = 3pt,fill = black] (nx11y9) at (11,9) {};
\node[circle,inner sep = 0pt,minimum size = 3pt,fill = black] (nx9y9) at (9,9) {};
\node[circle,inner sep = 0pt,minimum size = 3pt,fill = black] (nx7y11) at (7,11) {};
\node[circle,inner sep = 0pt,minimum size = 3pt,fill = black] (nx19y11) at (19,11) {};
\node[circle,inner sep = 0pt,minimum size = 3pt,fill = black] (nx17y11) at (17,11) {};
\node[circle,inner sep = 0pt,minimum size = 3pt,fill = black] (nx15y11) at (15,11) {};
\node[circle,inner sep = 0pt,minimum size = 3pt,fill = black] (nx11y11) at (11,11) {};
\node[circle,inner sep = 0pt,minimum size = 3pt,fill = black] (nx9y11) at (9,11) {};
\node[circle,inner sep = 0pt,minimum size = 3pt,fill = black] (nx15y13) at (15,13) {};
\node[circle,inner sep = 0pt,minimum size = 3pt,fill = black] (nx11y13) at (11,13) {};
\node[circle,inner sep = 0pt,minimum size = 3pt,fill = black] (nx19y17) at (19,17) {};
\node[circle,inner sep = 0pt,minimum size = 3pt,fill = black] (nx7y17) at (7,17) {};
\node[circle,inner sep = 0pt,minimum size = 3pt,fill = black] (nx17y17) at (17,17) {};
\node[circle,inner sep = 0pt,minimum size = 3pt,fill = black] (nx15y17) at (15,17) {};
\node[circle,inner sep = 0pt,minimum size = 3pt,fill = black] (nx11y17) at (11,17) {};
\node[circle,inner sep = 0pt,minimum size = 3pt,fill = black] (nx9y17) at (9,17) {};
\node[circle,inner sep = 0pt,minimum size = 3pt,fill = black] (nx19y19) at (19,19) {};
\node[circle,inner sep = 0pt,minimum size = 3pt,fill = black] (nx17y19) at (17,19) {};
\node[circle,inner sep = 0pt,minimum size = 3pt,fill = black] (nx15y21) at (15,21) {};
\node[circle,inner sep = 0pt,minimum size = 3pt,fill = black] (nx15y19) at (15,19) {};
\node[circle,inner sep = 0pt,minimum size = 3pt,fill = black] (nx11y19) at (11,19) {};
\node[circle,inner sep = 0pt,minimum size = 3pt,fill = black] (nx7y19) at (7,19) {};
\node[circle,inner sep = 0pt,minimum size = 3pt,fill = black] (nx19y25) at (19,25) {};
\node[circle,inner sep = 0pt,minimum size = 3pt,fill = black] (nx17y25) at (17,25) {};
\node[circle,inner sep = 0pt,minimum size = 3pt,fill = black] (nx15y25) at (15,25) {};
\node[circle,inner sep = 0pt,minimum size = 3pt,fill = black] (nx9y19) at (9,19) {};
\node[circle,inner sep = 0pt,minimum size = 3pt,fill = black] (nx11y21) at (11,21) {};
\node[circle,inner sep = 0pt,minimum size = 3pt,fill = black] (nx7y25) at (7,25) {};
\node[circle,inner sep = 0pt,minimum size = 3pt,fill = black] (nx9y25) at (9,25) {};
\node[circle,inner sep = 0pt,minimum size = 3pt,fill = black] (nx11y25) at (11,25) {};
\node[circle,inner sep = 0pt,minimum size = 3pt,fill = black] (nx22y14) at (22,14) {};
\node[circle,inner sep = 0pt,minimum size = 3pt,fill = black] (nx4y14) at (4,14) {};
\draw[black] (nx15y25) -- (nx15y26);
\draw[black] (nx11y26) -- (nx11y25);
\draw[black] (nx9y25) -- (nx9y26);
\draw[black] (nx9y9) -- (nx7y9);
\draw[black] (nx17y9) -- (nx19y9);
\draw[black] (nx9y3) -- (nx7y3);
\draw[black] (nx9y2) -- (nx9y3);
\draw[black] (nx11y3) -- (nx11y2);
\draw[black] (nx15y2) -- (nx15y3);
\draw[black] (nx17y3) -- (nx17y2);
\draw[black] (nx19y3) -- (nx17y3);
\draw[black] (nx17y3) -- (nx15y3);
\draw[black] (nx15y5) -- (nx17y3);
\draw[black] (nx15y3) -- (nx15y5);
\draw[black,thick] (nx13y5) -- (nx15y3);
\draw[black,thick] (nx11y3) -- (nx13y5);
\draw[black] (nx11y3) -- (nx11y5);
\draw[black] (nx9y3) -- (nx11y3);
\draw[black] (nx11y5) -- (nx9y3);
\draw[black] (nx11y5) -- (nx11y7);
\draw[black,thick] (nx12y6) -- (nx11y5);
\draw[black] (nx13y5) -- (nx12y6);
\draw[black] (nx14y6) -- (nx13y5);
\draw[black,thick] (nx15y5) -- (nx14y6);
\draw[black] (nx15y7) -- (nx15y5);
\draw[black,thick] (nx14y6) -- (nx15y7);
\draw[black] (nx13y7) -- (nx14y6);
\draw[black] (nx12y6) -- (nx13y7);
\draw[black,thick] (nx11y7) -- (nx12y6);
\draw[black] (nx15y7) -- (nx15y9);
\draw[black] (nx17y9) -- (nx15y7);
\draw[black] (nx15y9) -- (nx17y9);
\draw[black,thick] (nx13y7) -- (nx15y9);
\draw[black,thick] (nx11y9) -- (nx13y7);
\draw[black] (nx11y9) -- (nx9y9);
\draw[black] (nx11y7) -- (nx11y9);
\draw[black] (nx9y9) -- (nx11y7);
\draw[black] (nx17y9) -- (nx17y11);
\draw[black] (nx15y11) -- (nx15y9);
\draw[black] (nx11y9) -- (nx11y11);
\draw[black] (nx9y11) -- (nx9y9);
\draw[black] (nx9y11) -- (nx7y11);
\draw[black] (nx17y11) -- (nx19y11);
\draw[black] (nx17y11) -- (nx15y11);
\draw[black] (nx15y13) -- (nx17y11);
\draw[black] (nx15y11) -- (nx15y13);
\draw[black,thick] (nx13y13) -- (nx15y11);
\draw[black,thick] (nx11y11) -- (nx13y13);
\draw[black] (nx11y11) -- (nx11y13);
\draw[black] (nx9y11) -- (nx11y11);
\draw[black] (nx11y13) -- (nx9y11);
\draw[black] (nx15y13) -- (nx15y15);
\draw[black,thick] (nx14y14) -- (nx15y13);
\draw[black] (nx13y13) -- (nx14y14);
\draw[black] (nx12y14) -- (nx13y13);
\draw[black,thick] (nx11y13) -- (nx12y14);
\draw[black] (nx11y15) -- (nx11y13);
\draw[black,thick] (nx12y14) -- (nx11y15);
\draw[black] (nx13y15) -- (nx12y14);
\draw[black] (nx14y14) -- (nx13y15);
\draw[black,thick] (nx15y15) -- (nx14y14);
\draw[black] (nx17y17) -- (nx19y17);
\draw[black] (nx15y15) -- (nx15y17);
\draw[black] (nx17y17) -- (nx15y15);
\draw[black] (nx15y17) -- (nx17y17);
\draw[black,thick] (nx13y15) -- (nx15y17);
\draw[black,thick] (nx11y17) -- (nx13y15);
\draw[black] (nx11y15) -- (nx9y17);
\draw[black] (nx11y17) -- (nx11y15);
\draw[black] (nx9y17) -- (nx11y17);
\draw[black] (nx7y17) -- (nx9y17);
\draw[black] (nx17y19) -- (nx17y17);
\draw[black] (nx15y19) -- (nx15y17);
\draw[black] (nx11y19) -- (nx11y17);
\draw[black] (nx9y19) -- (nx9y17);
\draw[black] (nx17y19) -- (nx19y19);
\draw[black] (nx17y19) -- (nx15y19);
\draw[black] (nx15y21) -- (nx17y19);
\draw[black] (nx15y23) -- (nx15y21);
\draw[black,thick] (nx15y21) -- (nx14y22);
\draw[black] (nx15y19) -- (nx15y21);
\draw[black,thick] (nx13y21) -- (nx15y19);
\draw[black,thick] (nx11y19) -- (nx13y21);
\draw[black] (nx11y19) -- (nx11y21);
\draw[black] (nx9y19) -- (nx11y19);
\draw[black] (nx7y19) -- (nx9y19);
\draw[black] (nx17y25) -- (nx19y25);
\draw[black] (nx17y26) -- (nx17y25);
\draw[black] (nx17y25) -- (nx15y25);
\draw[black] (nx15y23) -- (nx17y25);
\draw[black] (nx15y25) -- (nx15y23);
\draw[black,thick] (nx13y23) -- (nx15y25);
\draw[black,thick] (nx14y22) -- (nx15y23);
\draw[black] (nx13y21) -- (nx12y22);
\draw[black] (nx14y22) -- (nx13y21);
\draw[black] (nx13y23) -- (nx14y22);
\draw[black] (nx12y22) -- (nx13y23);
\draw[black,thick] (nx12y22) -- (nx11y21);
\draw[black,thick] (nx11y23) -- (nx12y22);
\draw[black] (nx11y21) -- (nx9y19);
\draw[black] (nx11y23) -- (nx11y21);
\draw[black,thick] (nx11y25) -- (nx13y23);
\draw[black] (nx11y23) -- (nx11y25);
\draw[black] (nx9y25) -- (nx11y23);
\draw[black] (nx9y25) -- (nx7y25);
\draw[black] (nx11y25) -- (nx9y25);
\draw[black] (nx7y25) -- (nx4y14);
\draw[black] (nx4y14) -- (nx7y19);
\draw[black] (nx7y17) -- (nx4y14);
\draw[black] (nx4y14) -- (nx7y11);
\draw[black] (nx4y14) -- (nx7y9);
\draw[black] (nx4y14) -- (nx7y3);
\draw[black] (nx19y3) -- (nx22y14);
\draw[black] (nx22y14) -- (nx19y9);
\draw[black] (nx19y11) -- (nx22y14);
\draw[black] (nx22y14) -- (nx19y17);
\draw[black] (nx19y19) -- (nx22y14);
\draw[black] (nx22y14) -- (nx19y25);
\draw (22,26) node {$R$};
\draw (4,26) node {$L$};
\draw (22,14) node[anchor = west] {$s'$};
\draw (4,14) node[anchor = east] {$s$};
\draw (11,2) node[anchor = north] {$x'_2$};
\draw (9,2) node[anchor = north] {$x'_1$};
\draw (11,26) node[anchor = south] {$x_2$};
\draw (9,26) node[anchor = south] {$x_1$};
\end{tikzpicture}
\end{center}
\legende{There is no solution to this edge-disjoint paths problem, with $5$ paths between $s$ and $s'$, see Lemma~\ref{lemme:biflot-exclaim6}.}
\label{fig:biflot-exclaim6}
\end{figure}

\begin{proof}
Suppose that these paths exist. We distinguish two special cuts $\delta(L)$ and $\delta(R)$. There are exactly $12$ vertices for crossings, and $10$ are needed. The $(Y,Y')$-paths use $5$ edges of $\delta(L)$ and $5$ of $\delta(R)$. The $(x_1,x'_1)$-path $A$ and the $(x_2,x'_2)$-path $B$ both use an even number of edges in these two cuts. Moreover, they can do at most $2$ crossings in each of the three groups of four crossing vertices (corresponding to the crossing vertices of an $\xch$ graph), thus they go through each of these groups. Then each uses at least $6$ edges in the two cuts, and because of parity, $A$ uses $4$ edges of $\delta(L)$ and $2$ of $\delta(R)$, and $B$ uses $2$ edges of $\delta(L)$ and $4$ edges of $\delta(R)$. Because $d(L) = d(R) = 12$, there cannot be more. Thus, each $(Y,Y')$-path uses exactly one edge of $\delta(R)$, and one of $\delta(L)$. Then, in the central $\xch$ graph, there are exactly one edge of $\delta(L)$ and one edge of $\delta(R)$ used by $A$, and two edges of $\delta(R)$ used by $B$. At least two edges of $\delta(L)$ must be used by the $(Y,Y')$-paths, and at most one of $\delta(R)$. But this leads to a contradiction, as each $(Y,Y')$-path cannot use more than one edge in any of the two cuts.
\end{proof}

\section{Reduction}\label{sec:reduction}

Let $\varphi$ be a Boolean formula in conjunctive normal form, every clause is composed of $3$ literals, with $n \geq 3$ clauses over a set of $p'\geq 3$ variables. We encode the formula in a graph $G_{\varphi}$, a grid of $\xch$ and $\lic$ with $n$ columns. We need two rows for each variable, plus between every two of these rows, and after the last one, a buffer consisting of $q = 4(p'+3)n+2$ rows. Then, the grid has exactly $p = 2p'(q+1)$ rows.\\
The $i^{\textit{th}}$ variable $X_i$ corresponds to rows $1 + 2(i-1)(1+q)$ and $q + 2 + 2(i-1)(1+q)$. Intuitively, one no-path will be routed through one of these two rows (but actually, we can only enforce that it will stay near one of them). All the other rows are parts of buffers, and then contain only $\xch$. Row $1 + 2(i-1)(1+q)$ corresponds to the assignment of \emph{true} to $X_i$, so there is a $\lic$ in each column corresponding to the clause where $X_i$ appears positively. Similarly, row $q + 2 + 2(i-1)(1+q)$ corresponds to the assignment of \emph{false} to $X_i$, so there is a $\lic$ in each column corresponding to the clause where $X_i$ appears negatively. All other subgraphs are $\xch$. In summary, $M(i,j)$ is a $\lic$ if and only if :
\begin{itemize}
\item either $j = 1 + 2(i-1)(1+q)$ and $X_i$ appears positively in the $j^{\textit{th}}$ clause, 
\item or $j = q + 2 + 2(i-1)(1+q)$ and $X_i$ appears negatively in the $j^{\textit{th}}$ clause.
\end{itemize}
We add two terminals for the vertical paths, $x$ and $x'$. As the number of rows is even, and we want the vertical paths to be shifted an odd number of times, we add the following edges : $xx_{4k+1}$, $xx_{4k+2}$, $x'_{4k+3}x'$ and $x'_{4k+4}x'$ for each $k \in \ival{0}{n-1}$. Thus we enforce the parity of the number of shifts. Moreover, to reduce the number of odd vertices, we add the edges $x_{4k+3}x_{4k+4}$ and $x'_{4k+1}x'_{4k+2}$ for each $k \in \ival{0}{n-1}$.\\
For each variable $X_i$, $i \in \ival{1}{p'}$, we add two new vertices $w_i$ and $w'_i$, and the edges $w_iy_j$ and $w'_iy'_j$ for all $j \in \ival{4(i-1)(q+1)+1}{4(i-1)(q+1)+2q+4}$ (that is $w_i$ is connected to the vertices of $Y$ between the two rows encoding $X_i$). These new vertices will be the extremities of the no-paths. Then we add two terminals for the horizontal paths $y$ and $y'$. $y$ is connected to $w_i$ by $2q+3$ parallel edges, and to every vertex of $Y$ that still have a degree one by a single edge. Symmetrically, $y'$ is connected to $w'_i$ by $2q+3$ parallel edges, and to every vertex of $Y'$ not adjacent to some $w'_i$. Thus $y$ and $y'$ have degrees $2p - p'$.\\
Finally we demand to find $2n$ $(x,x')$-paths and $2p - p'$ $(y,y')$-paths, defining the demand graph $H_{\varphi}$. The vertices of odd degree in $G_{\varphi}+H_{\varphi}$ are the vertices of $W := \{w_i~:~i \in \ival{1}{p'}\}$ and $W' := \{w'_i~:~i \in \ival{1}{p'}\}$. The following cuts are tight : $\delta(x)$, $\delta(x')$, $\delta(y)$ and $\delta(y')$. Figure~\ref{fig:biflot-ligne1} gives an overview of the reduction.\\
We define precisely the \emph{vertical paths} to be the $(x,x')$-path of a solution, and the \emph{horizontal paths} are the $(y,y')$-paths. We say that two paths are \emph{parallel} if they are both horizontal, or both vertical, otherwise they are \emph{orthogonal}.

\begin{figure}[p]
\begin{center}
\scriptsize

\end{center}
\thispagestyle{empty}
\legende{The first $2(q+1)$ rows of the graph obtained by reduction. These are the rows corresponding to the encoding of the first variable.}
\label{fig:biflot-ligne1}
\end{figure}

\subsection{No-paths}

Let $\mathcal{P}$ be a solution to the edge-disjoint paths problem $(G_{\varphi},H_{\varphi})$. By considering $\mathcal{P}$ as a set of cycles of $G_{\varphi}+H_{\varphi}$ containing exactly one edge from $E(H_{\varphi})$, the vertices of $G_{\varphi}$ have the same degrees in $G_{\varphi}+H_{\varphi}$ and $G_{\varphi}+H_{\varphi} \setminus E(\mathcal{P})$. So the edges of the complementary graph of the solution $\mathcal{P}$ is a $W \cup W'$-join $Q \subset E(G_{\varphi})$. The following lemma proves that the no-paths cannot traverse the buffers, implying that the vertical paths stay in their respective columns. This last fact will allow us to apply the previous lemmas.

\begin{lemma}\label{mainlemma}
Let $G$ be a grid of dimension $n \times p$.
Let $\mathcal{P}$ be an uncrossed set of $(X,X')$-paths and $(Y,Y')$-paths pairwise edge-disjoint. Suppose there exists $i \in \ival{2}{p-2}$ such that for all $j \in \ival{1}{n}$, $M(i,j)$ and $M(i+1,j)$ are $\xch$ and there are exactly four crossings of paths of $\mathcal{P}$ in $M(i,j)$ and in $M(i+1,j)$.
Then, there is no no-path between a vertex of row $i-1$ and a vertex of row $i+2$.
\end{lemma}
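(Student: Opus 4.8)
The plan is to argue by contradiction: suppose there is a no-path $Q$ joining a vertex of row $i-1$ to a vertex of row $i+2$. Since its two endpoints lie on opposite sides of each of the horizontal cuts $H_{i-1}$, $H_i$ and $H_{i+1}$, the no-path $Q$ must cross each of these three cuts an odd number of times, in particular at least once. The whole argument will be about how $Q$ is forced to realize the crossing of $H_i$.

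First I would exploit the hypothesis that every $M(i,j)$ and $M(i+1,j)$ is an $\xch$ with \emph{exactly} four crossings. As the only crossing vertices of $\xch$ are $a,b,c,d$, four crossings force exactly one crossing at each of them, so all twelve edges incident to $\{a,b,c,d\}$ — the four central edges and the eight legs — are used by $\mathcal{P}$ in each of these $2n$ gadgets. Feeding this into Lemma~\ref{lemme:xch-22} shows that each such gadget carries exactly two $(S,S')$-paths and two $(T,T')$-paths, the vertical ones shifting diagonally from one side to the other. In particular the two $(T,T')$-paths occupy both horizontal link edges on either side of every gadget of rows $i$ and $i+1$, so none of the horizontal links inside these two rows is free.

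This yields a confinement statement: while $Q$ lies inside rows $i$ or $i+1$ it cannot pass from a gadget to a horizontally adjacent one, for lack of a free link edge. Hence the portion of $Q$ realizing its odd crossing of $H_i$ takes place inside a single column $j_0$, and $Q$ contains a sub-no-path traversing $M(i,j_0)$ and then $M(i+1,j_0)$ from top to bottom within that column. Now the shift pins the geometry down: in $M(i,j_0)$ the two vertical paths join the top edges on one side to the bottom edges on the opposite side, so the only free top and free bottom edges lie on diagonally opposite sides. The free corridor available to $Q$ therefore runs from one corner of the gadget to the diagonally opposite one, crossing the direction of the vertical paths; and since the whole core $\{a,b,c,d\}$ is saturated, $Q$ cannot realize this crossing through the core. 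The same clash repeats, with reversed sides, in $M(i+1,j_0)$.

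The final step, which I expect to be the main obstacle, is to convert this geometric clash into a genuine contradiction rather than a mere plausibility argument. I would do this by a local cut-and-parity computation in the spirit of Lemmas~\ref{lemme:decalage-nc} and~\ref{lemme:biflot-exclaim6}: in each of the two stacked gadgets, take the two cuts isolating the core $\{a,b,c,d\}$ from its left and its right parts, count how many of their edges are used by the vertical paths, by the horizontal paths and by $Q$, and combine this with Lemma~\ref{lemme:uncross-biflot} (the vertical paths are crossed in a fixed order). The shift forces $Q$ to traverse these cuts from one side to the other an odd number of times, while saturation of the core leaves no available edge to do so — exactly as the path $F$ was forced to cross $d$ and $a$ an odd number of times in the proof of Lemma~\ref{lemme:decalage-nc}. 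This contradiction shows that no such $Q$ exists, i.e.\ that no no-path joins row $i-1$ to row $i+2$.
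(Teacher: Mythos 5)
Your opening move is correct and coincides with the paper's: exactly four crossings in an $\xch$ means one crossing at each of $a,b,c,d$, so all twelve edges incident to the core are used by paths of $\mathcal{P}$, and a no-path $Q$ can therefore cross rows $i$ and $i+1$ only through the side edges $u_5u_7$ or $u_6u_8$. The genuine gap comes immediately after. You ``feed'' the four-crossings hypothesis into Lemma~\ref{lemme:xch-22} to conclude that every gadget of rows $i$ and $i+1$ carries exactly two $(T,T')$-paths saturating the horizontal link edges, hence that $Q$ is confined to a single column $j_0$ while it crosses. But Lemma~\ref{lemme:xch-22} goes in the other direction: it \emph{assumes} that the restriction of $\mathcal{P}$ to the gadget consists of two $(S,S')$-paths and two $(T,T')$-paths, and only then locates their extremities. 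At this point of the argument nothing guarantees that the fragments of paths traversing $M(i,j)$ have that shape --- proving that horizontal paths essentially stay in their rows and vertical paths in their columns is precisely what Lemma~\ref{mainlemma} is a step towards, so assuming it here is circular --- and nothing forces the link edges $f^{j,i}_k$ of rows $i$ and $i+1$ to be used. The paper's proof explicitly allows the opposite: after crossing $u_6u_8$ in $M(j,i)$, the no-path may descend through $e^{j+1,i}_1$ or $e^{j+1,i}_2$, i.e.\ hop into the neighbouring column, and the bulk of the actual proof handles exactly this two-column configuration: the $8$-edge cuts $\delta(U)$ and $\delta(U')$ straddling rows $i$ and $i+1$ in columns $j$ and $j+1$, the identifications $P_2=P_3$ and $P_1=P_4$ obtained from the uncrossing lemmas (Lemmas~\ref{lemme:uncrossing} and~\ref{lemme:uncross-biflot}), and a parity count showing exactly one of the four paths uses two of the four inward side edges. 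If your confinement claim were available, the published proof would be far shorter.

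The second gap is the conclusion, which you yourself flag as the main obstacle: you assert that a cut-and-parity computation ``in the spirit of'' Lemmas~\ref{lemme:decalage-nc} and~\ref{lemme:biflot-exclaim6} will yield an odd-traversal-versus-no-free-edge clash, but no such computation is carried out, and the geometric premise it rests on (the two vertical paths shifting diagonally through $M(i,j_0)$) is the unproved structure from the previous step. The paper's actual contradiction is of a different kind: once one path, say $P_1$, is shown to use two of the four inward side edges, one closes a Jordan curve out of a subpath of $P_1$, a subpath of $P_2$ and the two central edges $c^{j,i}d^{j,i}$ and $c^{j,i+1}a^{j,i+1}$; by uncrossedness no path can enter the region it bounds, yet that region contains a crossing vertex of rows $i$ or $i+1$, which by hypothesis must host a crossing. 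To repair your proof you would have to abandon the single-column confinement and supply an argument of this two-column, Jordan-curve type.
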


\begin{proof}

Let $i \in \ival{2}{p-2}$ such that each crossing vertex of rows $i$ and $i+1$ is used for a crossing, and suppose there is a no-path $Q$ between rows $i-1$ and $i+2$. Then $Q$ must pass through $u_5u_7$ or $u_6u_8$ in some $M(j,i)$, say $u_6u_8$ by symmetry, and then passes through $e^{j,i}_3$, $e^{j,i}_4$, $e^{j+1,i}_1$ or $e^{j+1,i}_2$ (because all the $a$, $b$, $c$ and $d$ vertices are used by paths). \\
 Let $P_1$, $P_2$, $P_3$, $P_4$ be the paths going through edges $c^{j,i}u^{j,i}_8$, $d^{j,i}u^{j,i}_{11}$, $a^{j,i+1}u^{j,i+1}_3$, $c^{j,i+1}u^{j,i+1}_6$ respectively. We consider the cut $C = \delta(U)$, with $U = \{u^{j,i}_8, u^{j,i}_{11}, u^{j,i}_{12}, u^{j,i+1}_3, u^{j,i+1}_4, u^{j,i+1}_6\}$, intersected by these four paths and $Q$. Because $|C| = 8$, at most four different paths and no-paths intersect $C$. As the solution is uncrossed, $P_1$ and $P_2$ are different and parallel, and $P_3$ and $P_4$ also. There is no crossing vertex among $U$, thus $P_2 = P_3$. By Lemma~\ref{lemme:uncrossing}, $P_1$, $P_2$ and $P_4$ are crossed by their orthogonal paths in the same order, then $P_1 = P_4$. We remark that $j \neq n$, otherwise the routing is not possible.\\
At least one of $P_1$, $P_2$ and $Q$ must go through $U' = \{u^{j+1,i}_7, u^{j+1,i}_9, u^{j+1,i}_{10}, u^{j+1,i+1}_1, u^{j+1,i+1}_2, u^{j+1,i+1}_5\}$. By a similar argument, there are paths $P'_1$ using $b^{j+1,i}u^{j+1,i}_7$ and $u_5^{j+1,i+1}b^{j+1,i+1}$ and $P'_2$ using $d^{j+1,i}u^{j+1,i}_{10}$ and $u_2^{j+1,i+1}a^{j+1,i+1}$. Exactly one of the four considered paths must use two of the edges $u^{j,i}_6u^{j,i}_8$, $u^{j+1,i}_5u^{j+1,i}_7$, $u^{j,i+1}_6u^{j,i+1}_8$ and $u^{j+1,i+1}_5u^{j+1,i+1}_7$, say $P_1$ (the other cases are isomorphic). Then, we can consider the cycle consisting of the subpath of $P_1$ between $c^{j,i}$ and $c^{j,i+1}$, the edges $c^{j,i+1}a^{j,i+1}$ and $c^{j,i}d^{j,i}$ and the subpath of $P_2$ between $d^{j,i}$ and $a^{j,i+1}$. Because of Lemma~\ref{lemme:uncrossing}, no path can enter inside this cycle, but it contains at least one crossing vertex in rows $i$ and $i+1$, leading to a contradiction.

\end{proof}

\begin{lemma}\label{lemme:non-chemins}
The complementary $Q$ of $\mathcal{P}$ can be decomposed into cycles and one $(w_i,w'_i)$-path for each $i \in \ival{1}{p'}$.
\end{lemma}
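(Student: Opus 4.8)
The plan is to start from the observation recorded just before the statement: the complement $Q$ is a $W\cup W'$-join of $G_{\varphi}$, its odd-degree vertices being exactly the $2p'$ vertices of $W\cup W'$. Any such join is the edge-disjoint union of cycles together with a family of paths pairing up its odd vertices: one repeatedly extracts a path of $Q$ joining two odd vertices and deletes it, each deletion lowering the number of odd vertices by two, so that after $p'$ extractions the remaining subgraph has all degrees even and is a disjoint union of cycles. Every extracted piece is a path made of complement edges with both endpoints in $W\cup W'$, i.e. precisely a \emph{no-path} in the sense of the paper. Note also that $\delta(y)$ and $\delta(y')$ are tight, so every edge at $y$ and $y'$ is used by the horizontal paths; hence $Q$ avoids $y$ and $y'$ entirely, and the only $Q$-edges at $w_i$ (resp.\ $w'_i$) are those to the terminals $y_j$ (resp.\ $y'_j$) of the two rows encoding $X_i$. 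Thus the whole content of the lemma reduces to showing that each no-path joins $w_i$ to $w'_i$ for one and the same index $i$.

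The key is a \emph{confinement} statement: no no-path can run from the block of rows encoding one variable to the block encoding another. Two consecutive variable blocks are separated by a buffer of $q$ rows built entirely from \xch, so a no-path joining vertices of two different blocks would contain, inside that buffer, a subpath running from a vertex of some row $i-1$ to a vertex of row $i+2$ across a double \xch-layer. Lemma~\ref{mainlemma} forbids exactly this, provided each gadget of the double layer realises four crossings. Applying it to one double-row layer of each separating buffer forces the two endpoints of every no-path to lie in a single variable block. Since $Q$ attaches $w_i$ only to the terminals of its block on the left and $w'_i$ only on the right, the sole odd-degree vertices inside block $i$ are $w_i$ and $w'_i$; a no-path confined to block $i$ and starting at $w_i$ must therefore end at $w'_i$. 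As block $i$ contains exactly these two odd vertices, exactly one no-path is produced per block, yielding one $(w_i,w'_i)$-path for each $i\in\llbracket 1,p'\rrbracket$, the remainder of $Q$ being cycles, as claimed.

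The hard part is verifying the hypothesis of Lemma~\ref{mainlemma} in the buffers — that inside a buffer every \xch\ carries exactly two vertical and two horizontal paths and hence uses all four of its central crossing vertices — because at this stage it is not yet known that the vertical paths stay in their columns (that is established only afterwards). I would obtain it by a global tight-cut count: tightness of $\delta(x),\delta(x'),\delta(y),\delta(y')$ fixes the numbers of vertical paths ($2n$) and horizontal paths ($2p-p'$) and forces each to leave its source along a prescribed edge. Counting how many of these cross a horizontal cut $H_j$ deep inside a buffer, and using that the horizontal paths (equivalently, the no-paths) drift vertically by only a bounded amount per column — this is the drift bound of Lemma~\ref{lemme:decalage-nc}, reinforced by Lemma~\ref{lemme:biflot-exclaim6} which prevents the horizontal paths from bunching together, and by the uncrossing Lemma~\ref{lemme:uncross-biflot} — pins each buffer gadget to two horizontal and two vertical paths; the buffer size $q=4(p'+3)n+2$ is chosen precisely so that the total drift cannot span a buffer. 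Once the counts are fixed, the forced crossing pattern of \xch\ (Lemma~\ref{lemme:xch-22}, with Lemma~\ref{lemme:xch-12} ruling out the single-horizontal alternative) saturates the four crossing vertices, supplying exactly the input Lemma~\ref{mainlemma} requires. It is this bookkeeping, rather than any single clever step, where the real work lies.
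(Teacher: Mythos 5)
Your skeleton coincides with the paper's: $Q$ is a $W\cup W'$-join, hence splits into cycles plus $p'$ paths pairing the odd-degree vertices, and the whole lemma reduces to confining each no-path by exhibiting, inside every buffer, two consecutive rows meeting the hypothesis of Lemma~\ref{mainlemma}. But the step you yourself single out as the hard part is exactly where your proposal fails. The lemmas you invoke cannot supply that hypothesis here. Lemma~\ref{lemme:decalage-nc} is a statement about no-paths in a $1\times 3$ grid under a fully prescribed routing (two $(X,X')$-paths plus six specific horizontal paths); to know that a column of $G_{\varphi}$ carries such a configuration you would already need that the vertical paths stay in their columns and that there is one no-path per variable --- i.e.\ Lemma~\ref{lemme:verticalpaths} and Lemma~\ref{lemme:non-chemins} themselves, both of which come \emph{after} the present lemma (the paper only applies Lemma~\ref{lemme:decalage-nc} inside the proof of Theorem~\ref{th:biflot}). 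Likewise Lemma~\ref{lemme:biflot-exclaim6} concerns the forced shifting of the two column paths in an all-$\xch$ block, not any ``bunching'' of horizontal paths in buffers. So your bookkeeping is circular, and it also aims at a stronger conclusion than needed: Lemma~\ref{mainlemma} does not require pinning two vertical and two horizontal paths to each buffer gadget, only that every crossing vertex of two consecutive buffer rows hosts a crossing. (A small further slip: $w_i$ attaches not just to the terminals of the two rows encoding $X_i$, but to all $2q+4$ vertices of $Y$ in the range spanning those two rows and the buffer between them; tightness of $\delta(y)$ then leaves exactly one $Q$-edge at $w_i$, which is what makes the pairing statement exact.)

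The paper closes the gap with a short global count that is absent from your sketch. Since $x$, $y$, $x'$, $y'$ alternate on the outer face, each of the $2n$ vertical paths must cross each of the $2p-p'$ horizontal paths, and crossings occur only at crossing vertices, at most one per vertex. There are $4np+6n$ crossing vertices in all ($4$ per $\xch$, $6$ per $\lic$, with $np$ gadgets of which $3n$ are $\lic$), so at most $4np+6n-2n(2p-p')=2n(p'+3)$ crossing vertices are idle, hence at most $2n(p'+3)$ rows contain an idle crossing vertex. Since a buffer has $q=4n(p'+3)+2$ rows, pigeonhole yields two consecutive all-$\xch$ rows in which every crossing vertex is used --- precisely the hypothesis of Lemma~\ref{mainlemma}, with no need to localize any path beforehand. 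This counting is also the true reason for the choice of $q$, which your sketch instead attributes to a per-column drift bound that is only established later.
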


\begin{proof}
There are exactly $np$ gadgets in the graph, among which exactly $3n$ are $\lic$. The number of crossing vertices is exactly $4np + 6n$. Moreover, there are $2n$ vertical paths, crossing each of the $2p - p'$ horizontal paths. At most $2n(p'+3)$ crossing vertices are not used to cross paths. As the number of rows in a buffer is $q = 4(p'+3)n+2$, there are at least two consecutive rows where all the crossing vertices are used to cross paths. Then we can apply Lemma~\ref{mainlemma} : there is no no-path going from the top to the bottom of a buffer.\\
Because of the parity of vertices, $G \setminus \mathcal{P}$ is a $W \cup W'$-join, and can be decomposed in cycles and some paths with extremities in $W \cup W'$, but there can only be $(w_i,w'_i)$-paths, as all other possible paths would have to go through a buffer.
\end{proof}

This leads to a key consequence :

\begin{lemma}\label{lemme:verticalpaths}
Every vertical path is contained in one column.
\end{lemma}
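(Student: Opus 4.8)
The plan is to show that each interior vertical cut is saturated by the horizontal paths and the no-paths taken together, so that no vertical path can cross it. Fix $i \in \ival{1}{n-1}$ and consider the $i$-th vertical cut $V_i = \{f^{i,j}_1, f^{i,j}_2 : j \in \ival{1}{p}\}$, which consists of the two edges linking columns $i$ and $i+1$ in each of the $p$ rows, so that $|V_i| = 2p$.

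First I would determine which curves may use the edges of $V_i$. Since $\delta(x)$, $\delta(x')$, $\delta(y)$ and $\delta(y')$ are tight, all edges at $x$ and $x'$ are used by vertical paths and all edges at $y$ and $y'$ by horizontal paths; in particular neither the horizontal paths nor the no-paths pass through any of the four terminals. Hence, away from their endpoints, the horizontal paths and the no-paths all lie in $G_\varphi \setminus \{x,x',y,y'\}$.

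Next I would establish a separation property: in $G_\varphi \setminus \{x,x',y,y'\}$ the only edges joining columns $\le i$ to columns $> i$ are exactly the edges of $V_i$. Indeed, the row-linking edges $e^{i',j}_\bullet$ connect gadgets inside a single column, while the vertices of $W$ and $Y$ are attached on the left (to column $1$) and those of $W'$ and $Y'$ on the right (to column $n$). Thus $V_i$ separates $Y \cup W$ from $Y' \cup W'$ in this graph. Consequently each of the $2p - p'$ horizontal $(y,y')$-paths, and each of the $p'$ no-paths --- the $(w_k,w'_k)$-paths furnished by Lemma~\ref{lemme:non-chemins}, which run from $W$ on the left to $W'$ on the right --- must use at least one edge of $V_i$.

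The count then closes the argument at once: these $(2p - p') + p' = 2p$ pairwise edge-disjoint curves each use at least one of the $2p$ edges of $V_i$, so each uses exactly one and together they exhaust $V_i$. No vertical path, and no cycle of the complement, can use any edge of $V_i$. As this holds for every $i \in \ival{1}{n-1}$, and each vertical path enters the grid at the top of a single column through some $x_k$, no vertical path ever crosses a vertical cut, so it remains confined to that one column. I expect the separation step to be the delicate point: it requires verifying carefully that, once the four terminals are deleted, $V_i$ is genuinely the unique interface between the two halves of the grid, and that the no-path endpoints $w_k, w'_k$ as well as the neighbours of $y$ and $y'$ sit on the expected sides. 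Once that is settled, the saturation is forced purely by the equality $2p - p' + p' = 2p = |V_i|$.
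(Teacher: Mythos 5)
Your proof is correct and is essentially the paper's own argument: the $2p-p'$ horizontal paths together with the $p'$ no-paths of Lemma~\ref{lemme:non-chemins} must each cross every vertical cut, and since the cut has exactly $2p$ edges they saturate it, leaving no edge for a vertical path. You merely make explicit two steps the paper leaves implicit --- that the tight cuts $\delta(x),\delta(x'),\delta(y),\delta(y')$ keep horizontal paths and no-paths away from the terminals, and that $V_i$ is then the only interface between the two sides --- which is a sound elaboration, not a different approach.
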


\begin{proof}
$2p - p'$ horizontal paths and $p'$ no-paths are routed through each vertical cut, which contains exactly $2p$ edges. Then vertical paths cannot use edges of vertical cuts.
\end{proof}

\subsection{Proof of the reduction}

\begin{theorem}\label{th:biflot}
The planar edge-disjoint paths problem is strongly NP-complete, even if the demand graph has only two edges, with terminals lying on the boundary of the infinite face of the input graph.
\end{theorem}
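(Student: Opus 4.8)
The plan is to prove NP-completeness by combining membership in NP with a polynomial reduction from {\sc 3-Sat} to the two-demand problem, using the instance $(G_\varphi, H_\varphi)$ built in Section~\ref{sec:reduction}. Membership is immediate: a solution is a family of $2n + (2p-p')$ simple edge-disjoint paths, whose total size is bounded by $|E(G_\varphi)|$, and edge-disjointness is checkable in polynomial time. For the reduction I would first verify that the construction is polynomial: the grid has $n\,p$ gadgets with $p = 2p'(q+1)$ and $q = 4(p'+3)n+2$ linear in $n$ and $p'$, each gadget of bounded size, so $|G_\varphi|$ is polynomial in $n$ and $p'$. I then record the structural requirements of the statement: $H_\varphi$ consists of exactly the two demand edges $xx'$ and $yy'$, and the four terminals $x, x', y, y'$ are attached on the outer boundary, hence lie on the infinite face. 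Since all capacities equal $1$ and the two requests $2n$ and $2p-p'$ are polynomially bounded, a correct reduction yields \emph{strong} NP-completeness.

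For the forward implication, assuming a satisfying assignment, I would route the paths so as to read off that assignment. For each variable $X_i$ I route its no-path together with the horizontal paths through the two rows of $X_i$ so that the row encoding the value assigned to $X_i$ carries a single horizontal path (enabling keeps at its $\lic$ gadgets) while the companion row carries two (forcing shifts). The $2n$ vertical paths are then routed column by column: in every $\xch$ gadget and every buffer they shift (Lemmas~\ref{lemme:xch-22} and~\ref{lemme:xch-12}), and in each column I use a satisfied literal — which sits at a $\lic$ crossed by a single horizontal path — to realize a keep (Lemma~\ref{lemme:lic-12}). Since each clause is satisfied, such a $\lic$ exists in every column, and because a single-horizontal $\lic$ admits both the keeping and the shifting routing, I can make the number of keeps in each column odd. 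As the total number of rows $p$ is even, an odd number of keeps is exactly what connects the prescribed top terminals at $x$ to the prescribed bottom terminals at $x'$ through the edges $xx_{4k+1}, xx_{4k+2}, x'_{4k+3}x', x'_{4k+4}x'$, producing a valid solution.

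For the backward implication I would take a solution, uncrossed and simple by Lemma~\ref{lemme:uncrossing}. By Lemma~\ref{lemme:verticalpaths} each vertical path is confined to a single column, and by Lemma~\ref{lemme:non-chemins} the complementary $W \cup W'$-join contains exactly one $(w_i,w'_i)$-no-path per variable; since this no-path cannot cross a buffer (Lemma~\ref{mainlemma}, whose buffer size is justified by Lemma~\ref{lemme:decalage-nc}), it stays near one of the two rows of $X_i$ and thereby defines a truth value. Fixing a column, its two vertical paths meet only $\xch$ and $\lic$ gadgets: every $\xch$ shifts them (Lemmas~\ref{lemme:xch-22} and~\ref{lemme:xch-12}), a $\lic$ also shifts them unless it is crossed by a single horizontal path (Lemma~\ref{lemme:lic-22}), and Lemma~\ref{lemme:biflot-exclaim6} forbids keeping within a run of $\xch$. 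Hence any keep forces a single-horizontal $\lic$, which by the placement of the $\lic$ gadgets means that a literal of the clause is set true by the assignment read off from the no-paths. Since the terminal-attachment pattern together with the evenness of $p$ forces an odd — hence positive — number of keeps in every column, every clause contains a true literal and $\varphi$ is satisfiable.

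The hard part will be the backward direction, and specifically the parity bookkeeping that upgrades local gadget behaviour to a global guarantee. The delicate point is to show rigorously that an odd number of keeps per column is simultaneously \emph{forced} by the demand edges (via the even row count and the connection pattern at $x$ and $x'$) and \emph{equivalent} to the existence of a satisfied literal, while ruling out — through Lemmas~\ref{lemme:verticalpaths},~\ref{lemme:non-chemins}, and~\ref{lemme:biflot-exclaim6} — any spurious keep arising outside a single-horizontal $\lic$. Coordinating the confinement of the vertical paths, the position of the no-paths, and the count of shifts versus keeps is where the argument becomes most technical.
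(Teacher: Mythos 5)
Your proposal follows the paper's own proof essentially step for step: the same forward routing (a single horizontal path on the row encoding each variable's assigned value, a keep at a satisfied literal's $\lic$ via Lemma~\ref{lemme:lic-12}, shifts everywhere else), and the same backward argument (confinement of vertical paths via Lemmas~\ref{lemme:verticalpaths} and~\ref{lemme:non-chemins}, truth values read off from the positions of the no-paths, the local lemmas~\ref{lemme:xch-22},~\ref{lemme:lic-22} and~\ref{lemme:biflot-exclaim6} to exclude spurious keeps, and the parity argument that the even number of rows together with the attachment of $x$ and $x'$ forces an odd, hence positive, number of keeps per column). The only cosmetic deviations --- citing Lemma~\ref{lemme:xch-12} where the paper invokes the explicit routings of Figure~\ref{fig:biflot-xch-22}, and speaking of ``routing'' the no-paths, which are really the complement of the chosen solution --- do not alter the argument.
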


\begin{proof}
We use the graph of polynomial size built in Section~\ref{sec:reduction}.\\
Suppose that the formula is satisfiable, and consider an assignment satisfying $\varphi$. We route two horizontal paths through each row with the following exceptions, where we route only one horizontal path :
\begin{itemize}
\item in row $1 +2(k-1)(1+q)$ if the value of variable $k$ is \emph{true},
\item in row $q + 2 + 2(k-1)(1+q)$ if the value of variable $k$ is \emph{false}
\end{itemize}
Then, for each column, we switch the two vertical paths, using Figure~\ref{fig:biflot-xch-22}, except in the row corresponding to the first variable satisfying the clause associated with the column, where we keep the vertical paths. This is possible by Lemma~\ref{lemme:lic-12}, as there is a $\lic$ at this intersection. The vertical paths of each column are kept exactly once. Then these paths are valid, whence the edge-disjoint paths problem has a solution.\\
Suppose now that there is a solution to the edge-disjoint paths problem. By Lemma~\ref{lemme:verticalpaths}, the vertical paths do not intersect vertical cuts, and each no-path or horizontal path intersects each vertical path only once. We show that each no-path can use at most $3$ distinct rows in each column (and thus are separated by at least $q - 2n-1$ rows), iteratively on the columns of the grid. This is done by applying Lemma~\ref{lemme:decalage-nc} (because there is no other no-path in the two nearest rows). Then, for each $k \in \ival{1}{p'}$, the $(w_k,w'_k)$-path cannot intersect both the rows between $2(q+1)(k-1)-2$ and $2(q+1)(k-1)+4$, and the rows between  $2(q+1)(k-1) + q - 1$ and $2(q+1)(k-1) + q + 5$, as $q > 2n + 5$. If it intersects the first group, we set the variable $V_k$ to \emph{true}, otherwise we set it to \emph{false}.\\
We consider an arbitrary column. In each row except for $p'$ distinct groups of three consecutive rows, there are two horizontal paths traversing the gadget from left to right, as in the hypothesis of Lemmas~\ref{lemme:xch-22} and~\ref{lemme:lic-22}. Moreover, by construction, there are $\xch$ above and below every $\lic$, so we can effectively apply Lemmas~\ref{lemme:xch-22} and~\ref{lemme:lic-22} to all the gadgets, except a block of five consecutive gadgets for each variable, which contains the intersection of the corresponding no-path with the given column.\\
Now, for each of these blocks, if there are only $\xch$, the first and fifth gadgets follow Lemma~\ref{lemme:xch-22}, and by Lemma~\ref{lemme:biflot-exclaim6}, the vertical paths cannot be shifted here. Otherwise, there is a $\lic$ and a no-path in one of the three nearest rows, thus the variable assignment validates the corresponding clause (the fact that the no-path is not forced to pass through the $\lic$ has no consequence, as long as it must be close enough). Then whenever the vertical paths are not shifted, the clause associated with the column is satisfied. As each column sees its vertical paths kept at least once, each clause is satisfied, thus the assignment is feasible for $\varphi$.
\end{proof}

\section{Directed case}\label{section:directed}

Using a folkloric reduction, the following result is an obvious consequence of Theorem~\ref{th:biflot}.

\begin{corollary}
The arc-disjoint paths problem is strongly NP-complete, even if $G$ is planar, the demand graph has only two arcs and the terminals lie on the boundary of the infinite face of $G$.
\end{corollary}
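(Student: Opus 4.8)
The plan is to derive the corollary from Theorem~\ref{th:biflot} by a local gadget replacement that turns the undirected instance $(G_\varphi,H_\varphi)$ into an equivalent directed one, orienting the two demand edges into two demand arcs. Recall that in $G_\varphi$ the two demand edges are $\{x,x'\}$ with request $2n$ and $\{y,y'\}$ with request $2p-p'$, and that all four terminals lie on the infinite face. The key point is that an undirected edge of capacity $1$ can be simulated by a directed gadget that can be traversed in \emph{either} direction but only \emph{once} in total.

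Concretely, I would replace each undirected edge $e=\{u,v\}$ of $G_\varphi$ by a gadget on two fresh vertices $a_e,b_e$, with the five arcs $u\to a_e$, $v\to a_e$, $a_e\to b_e$, $b_e\to u$, $b_e\to v$. A directed $u$-to-$v$ traversal is then $u\to a_e\to b_e\to v$, and a $v$-to-$u$ traversal is $v\to a_e\to b_e\to u$; in both cases the bottleneck arc $a_e\to b_e$ is used, so in an arc-disjoint family the gadget carries at most one path, in exactly one of the two directions. This faithfully mimics a capacity-$1$ undirected edge. The gadget is planar: placing $a_e$ and $b_e$ in a small disk around the midpoint of the drawn edge lets one route the five arcs without crossings, and the replacement is local, so which vertices sit on the infinite face is unchanged; in particular $x,x',y,y'$ remain on the boundary of the infinite face. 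Call the resulting digraph $G'$, and set the demand graph $H'$ to consist of the two arcs $x\to x'$ (request $2n$) and $y\to y'$ (request $2p-p'$); this is a demand graph with only two arcs whose terminals lie on the infinite face. (Note that $G'$ is not acyclic, since each gadget contains the directed triangle $u\to a_e\to b_e\to u$; this is consistent with the corollary, which makes no acyclicity claim, whereas the acyclic strengthening announced in the introduction needs the separate construction.)

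For the equivalence, in the forward direction I would take an edge-disjoint solution of $(G_\varphi,H_\varphi)$ and orient every path from source to sink according to its demand class ($x$-to-$x'$ for the vertical paths, $y$-to-$y'$ for the horizontal paths); this is unambiguous because each path has exactly two endpoints and they are the ordered endpoints of a demand edge. Replacing every traversed edge by the corresponding directed traversal of its gadget yields directed paths, and edge-disjointness in $G_\varphi$ forces disjointness of the bottleneck arcs $a_e\to b_e$, hence arc-disjointness in $G'$ with the required multiplicities. Conversely, from an arc-disjoint solution of $(G',H')$ I would record, for each gadget that is used, the underlying undirected edge $e$, and collapse the gadget back to $e$; disjointness of the bottleneck arcs guarantees that the resulting undirected paths are edge-disjoint, and they connect the correct terminal pairs. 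Membership in NP is routine (a candidate family of arc-disjoint paths is checked in polynomial time), and since the requests $2n$ and $2p-p'$ are polynomially bounded and the reduction is polynomial, strong NP-completeness is preserved.

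There is no deep obstacle here—this is exactly the ``folkloric'' undirected-to-directed reduction—so the only points that require genuine care are the two I have flagged: first, verifying that the gadget is simultaneously \emph{bidirectional} and \emph{single-use} (which rests entirely on the unique bottleneck arc $a_e\to b_e$ lying on every traversal), and second, checking that the local replacement preserves both planarity and the property that all terminals remain on the infinite face. Once these are confirmed, the equivalence and the transfer of strong NP-completeness from Theorem~\ref{th:biflot} are immediate.
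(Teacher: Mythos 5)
Your proposal is correct and is essentially the paper's own argument: the paper derives this corollary from Theorem~\ref{th:biflot} by exactly the ``folkloric'' undirected-to-directed reduction you describe, merely leaving the gadget details implicit. Your write-up correctly supplies those details (the single-bottleneck-arc gadget, planarity and infinite-face preservation, and the two-way equivalence), so nothing further is needed.
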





We use again the grids of subgraphs, but with different subgraphs, to prove the following theorem :

\begin{theorem}\label{th:diflot}
The planar arc-disjoint paths problem is NP-complete, even if $G$ is acyclic and $H$ consists of two sets of parallel edges.
\end{theorem}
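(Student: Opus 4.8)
The plan is to mimic the reduction behind Theorem~\ref{th:biflot}, again reducing from {\sc 3-Sat}, but replacing the undirected gadgets $\xch$ and $\lic$ by acyclic directed gadgets and orienting the whole grid so that it admits a single topological order. The two parallel classes of $H$ will be the vertical demand $(x,x')$ with request $2n$ and the horizontal demand $(y,y')$ with request $2p-p'$, where $x,x',y,y'$ are the super-terminals of Section~\ref{sec:reduction}, now placed so that vertical paths are forced to run from top to bottom and horizontal paths from left to right. Orienting every arc so that its tail precedes its head for a generic linear functional $\alpha\cdot(\text{row})+\beta\cdot(\text{column})$ with $\alpha,\beta>0$ makes $G$ acyclic while making $x,y$ minimal and $x',y'$ maximal, so that the two demand pairs are compatible with the order and all four terminals sit on the outer face.

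The central difficulty, and the first step I would carry out, is the design of the directed acyclic analogues of $\xch$ and $\lic$. In the undirected gadgets the four crossing vertices $a,b,c,d$ form a short $4$-cycle whose edges are forced to be entirely used, and this is exactly what compels the vertical and horizontal paths to cross, hence to \emph{shift}. Reproducing this forcing with arcs while forbidding any directed cycle is the crux: I would orient each gadget so that its horizontal arcs run strictly rightward and its vertical arcs run strictly downward, and replace the central $4$-cycle by an acyclic ``crossing switch'' on the four central vertices whose arcs are all consistent with the global order above, yet which still forces each entering horizontal and vertical path to swap sides. The non-crossing vertices, previously obtained by expanding a degree-$4$ vertex into a $4$-cycle (Figure~\ref{fig:biflot-ext}), come for free here: once directions are fixed, a balanced degree-$4$ vertex with two in-arcs and two out-arcs simply cannot be crossed in the undirected sense, so orientation enforces the non-crossing constraint automatically.

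Next I would recover the structural lemmas, where the acyclicity actually simplifies matters. The role of the near-Eulerian property is played by near-balancedness: at every vertex other than the $w_i,w'_i$ one has $d^+_G=d^-_G$, so by flow conservation the arcs of $G$ not used by a solution form a subdigraph that is balanced everywhere except at $W\cup W'$. Because $G$ is acyclic this complementary subdigraph contains \emph{no} directed cycle, so it decomposes into directed no-paths whose endpoints lie in $W\cup W'$; there is nothing to discard, unlike the undirected case. I would then re-prove, in the oriented setting, the acyclic counterparts of Lemmas~\ref{lemme:decalage-nc} and~\ref{lemme:biflot-exclaim6}, replacing undirected tight cuts by directed tight cuts satisfying $c(\delta^+(U))-r(\delta^-(U))=0$, and deduce the analogues of Lemma~\ref{mainlemma} and Lemma~\ref{lemme:verticalpaths}: buffers of $q$ rows block no-paths from crossing them, hence each no-path is a $(w_i,w'_i)$-path, hence every vertical path stays inside a single column.

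From there the conclusion is identical to that of Theorem~\ref{th:biflot}. Given a satisfying assignment, one keeps the two vertical paths at exactly one $\lic$ per column, namely the gadget encoding a clause-satisfying literal; conversely, a solution forces, through the even number $2p$ of rows together with the shifting lemmas, that each column keeps its vertical paths an odd number of times, hence at least once at a $\lic$, exhibiting a satisfied literal in every clause. The only genuinely new work is the gadget construction of the second paragraph: once acyclic directed gadgets with the same crossing, shifting and keeping behaviour as $\xch$ and $\lic$ are available, the entire grid argument transfers almost verbatim, and the resulting graph is planar, acyclic, with the terminals of the two parallel demand classes on the infinite face, giving the claimed NP-completeness.
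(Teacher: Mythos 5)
There is a genuine gap: your entire plan hinges on the ``acyclic crossing switch'' of the second paragraph, which you correctly identify as the crux but never construct, and there is a concrete reason why it cannot exist under the orientation you prescribe. If every horizontal arc runs strictly rightward (as forced by your generic linear functional with $\alpha,\beta>0$), then a vertical path that has been shifted to the right pair of a column can never be shifted back to the left pair. But the undirected argument needs shifts in \emph{both} directions: Lemma~\ref{lemme:xch-22} yields either an $(s_1,s'_3)/(s_2,s'_4)$ routing or an $(s_3,s'_1)/(s_4,s'_2)$ routing, and in the grid of Theorem~\ref{th:biflot} the two vertical paths of a column alternate left--right--left--$\cdots$ down the $2p'(q+1)$ rows, with the parity-of-keeps argument counting exactly these alternations. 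Since which direction a given gadget must shift depends on the solution (on how many keeps occurred above it), each gadget would have to support both left-to-right and right-to-left traversal by vertical paths, contradicting your global orientation; the construction breaks at the second row. A secondary error: your claim that a balanced degree-$4$ vertex ``cannot be crossed in the undirected sense'' is false --- with cyclic order (left-in, top-in, right-out, bottom-out), a left$\to$right path and a top$\to$bottom path use arcs that alternate around the vertex, which is precisely a crossing in the sense of Section~\ref{sec:def}. So orientation does not give you the non-crossing vertices for free, and the uncrossing machinery (Lemma~\ref{lemme:uncrossing}) that underlies Lemmas~\ref{lemme:decalage-nc}, \ref{lemme:biflot-exclaim6} and~\ref{mainlemma} has no directed counterpart in your sketch.

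For comparison, the paper does not port the undirected gadgets at all. It reduces from {\sc Satisfiability} with a much simpler acyclic grid of small gadgets $\textrm{YES}$/$\textrm{NO}$/$\textrm{ON}$, exploiting the fact that in the oriented grid every vertical and horizontal cut is a \emph{directed} tight cut, so each row and each column carries exactly one path that can never leave it --- no crossings, shifts, buffers or no-paths are needed. The remaining difficulty, forcing each variable's path to choose one of its two rows (condition $(ii)$ of Claim~\ref{claimfirst}), is handled not by parity of shifts but by a feedback loop around the grid built from the gadgets $\textrm{IF}$, $\textrm{LL}$, $\textrm{TT}$, $\textrm{VV}$ (Claims~\ref{claimrouteur}--\ref{claimmain}), with the two parallel demand classes $(s_1,s_2)$ of request $2p$ and $(t_1,t_2)$ of request $2p+n$. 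Your proposal would need an entirely new gadget design plus directed analogues of all the structural lemmas before it could be salvaged; as written, the key step fails.
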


\begin{figure}
\centering
\begin{tikzpicture}[x=0.4cm,y=0.4cm,>=latex]
\node[circle,inner sep = 0pt,minimum size =5pt,fill = black] (nx19y18) at (19,18) {};
\node[circle,inner sep = 0pt,minimum size =5pt,fill = black] (nx15y18) at (15,18) {};
\node[circle,inner sep = 0pt,minimum size =5pt,fill = black] (nx19y24) at (19,24) {};
\node[circle,inner sep = 0pt,minimum size =5pt,fill = black] (nx15y24) at (15,24) {};
\node[circle,inner sep = 0pt,minimum size =5pt,fill = black] (nx21y21) at (21,21) {};
\node[circle,inner sep = 0pt,minimum size =5pt,fill = black] (nx19y22) at (19,22) {};
\node[circle,inner sep = 0pt,minimum size =5pt,fill = black] (nx19y20) at (19,20) {};
\node[circle,inner sep = 0pt,minimum size =5pt,fill = black] (nx17y20) at (17,20) {};
\node[circle,inner sep = 0pt,minimum size =5pt,fill = black] (nx17y22) at (17,22) {};
\node[circle,inner sep = 0pt,minimum size =5pt,fill = black] (nx15y20) at (15,20) {};
\node[circle,inner sep = 0pt,minimum size =5pt,fill = black] (nx15y22) at (15,22) {};
\node[circle,inner sep = 0pt,minimum size =5pt,fill = black] (nx13y21) at (13,21) {};
\node[circle,inner sep = 0pt,minimum size =5pt,fill = black] (nx7y19) at (7,19) {};
\node[circle,inner sep = 0pt,minimum size =5pt,fill = black] (nx9y21) at (9,21) {};
\node[circle,inner sep = 0pt,minimum size =5pt,fill = black] (nx7y23) at (7,23) {};
\node[circle,inner sep = 0pt,minimum size =5pt,fill = black] (nx7y21) at (7,21) {};
\node[circle,inner sep = 0pt,minimum size =5pt,fill = black] (nx5y19) at (5,19) {};
\node[circle,inner sep = 0pt,minimum size =5pt,fill = black] (nx5y23) at (5,23) {};
\node[circle,inner sep = 0pt,minimum size =5pt,fill = black] (nx5y21) at (5,21) {};
\node[circle,inner sep = 0pt,minimum size =5pt,fill = black] (nx3y21) at (3,21) {};
\draw[->,black] (nx15y24) -- (nx15y22);
\draw[->,black] (nx19y24) -- (nx19y22);
\draw[->,black] (nx19y22) -- (nx19y20);
\draw[->,black] (nx19y20) -- (nx21y21);
\draw[->,black] (nx19y20) -- (nx19y18);
\draw[->,black] (nx17y20) -- (nx19y20);
\draw[->,black] (nx19y22) -- (nx17y22);
\draw[->,black] (nx17y20) -- (nx15y20);
\draw[->,black] (nx17y22) -- (nx17y20);
\draw[->,black] (nx15y22) -- (nx17y22);
\draw[->,black] (nx15y20) -- (nx15y18);
\draw[->,black] (nx15y22) -- (nx15y20);
\draw[->,black] (nx13y21) -- (nx15y22);
\draw[->,black] (nx5y21) -- (nx5y19);
\draw[->,black] (nx7y21) -- (nx7y19);
\draw[->,black] (nx7y23) -- (nx7y21);
\draw[->,black] (nx7y21) -- (nx9y21);
\draw[->,black] (nx5y21) -- (nx7y21);
\draw[->,black] (nx5y23) -- (nx5y21);
\draw[->,black] (nx3y21) -- (nx5y21);
\draw (19,18) node[anchor = north] {$c'$};
\draw (15,18) node[anchor = north] {$b'$};
\draw (21,21) node[anchor = west] {$a'$};
\draw (19,24) node[anchor = south] {$c$};
\draw (15,24) node[anchor = south] {$b$};
\draw (13,21) node[anchor = east] {$a$};
\draw (7,19) node[anchor = north] {$c'$};
\draw (5,19) node[anchor = north] {$b'$};
\draw (9,21) node[anchor = west] {$a'$};
\draw (7,23) node[anchor = south ] {$c$};
\draw (5,23) node[anchor = south ] {$b$};
\draw (3,21) node[anchor = east] {$a$};
\draw (17,16) node {Graph $\textrm{YES}$};
\draw (6,16) node {Graph $\textrm{NO}$};
\end{tikzpicture}
\legende{In $\textrm{NO}$, there is no path from $c$ to $b'$, whereas it is possible in $\textrm{YES}$, as long as no other path goes through the graph.}
\label{figyesandno}
\end{figure}

We will reduce from {\sc Satisfiability}. Let $C_1 \land \ldots \land C_n$ be a formula with $n$ clauses, over the set of variables $\{X_1,\ldots, X_p\}$. Let $G_1$ be a grid with $n$ columns and $2p$ rows, where each point $G_1(i,j), i \in \ival{1}{2p}, j \in \ival{1}{n}$ of the grid is a special subgraph, defined as follows (see figure~\ref{figyesandno}):
\begin{itemize}
\item $G_1(2i-1,j)$ is the graph $\textrm{YES}$ if $X_i$ appears positively in $C_j$,
\item $G_1(2i,j)$ is the graph $\textrm{YES}$ if $X_i$ appears negatively in $C_j$,
\item $G_1(i,j)$ is $\textrm{NO}$ in all other cases.
\end{itemize}

\setcounter{myclaim}{0}
\begin{myclaim}\label{claimfirst}
The formula is satisfiable if and only if there is a set $\mathcal{P}$ of arc-disjoint paths in $G_1$ such that :
\begin{itemize}
\item[(i)] for each $j \in \ival{1}{n}$, there is a path $P_j$ in $\mathcal{P}$ from $c \in G_1(1,j)$ to $b' \in G_1(2p,j)$,
\item[(ii)] for each $i \in \ival{1}{p}$, there is a path $Q_i$ in $\mathcal{P}$ either from $a \in G_1(2i-1,1)$ to $a' \in G_1(2i-1,n)$ or from $a \in G_1(2i,1)$ to $a' \in G_1(2i,n)$.
\end{itemize}
\end{myclaim}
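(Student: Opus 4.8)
The plan is to read a routing as a truth assignment through the \emph{row} chosen by each horizontal path $Q_i$, and to use the local behaviour of the two gadgets to translate ``clause satisfied'' into ``column routable.'' Fix the convention that routing $Q_i$ through row $2i$ encodes $X_i=\textit{true}$ and through row $2i-1$ encodes $X_i=\textit{false}$; the row of the pair \emph{not} used by $Q_i$ is then \emph{free}, meaning no horizontal path traverses it in any column. Call the $c$--$c'$ side of a gadget its \emph{right lane} and the $b$--$b'$ side its \emph{left lane}. Each vertical path $P_j$ enters its column at $c$ (right lane, top) and must leave at $b'$ (left lane, bottom), so the whole argument turns on where and how $P_j$ switches lanes.

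First I would record three local facts about a single gadget, all by inspection of Figure~\ref{figyesandno}. \textbf{(a)} In either gadget a vertical path may run straight down one lane ($c\to c'$ or $b\to b'$) arc-disjointly from the horizontal traversal $a\to a'$: the two paths share only vertices, never arcs. \textbf{(b)} In $\textrm{NO}$ there is no directed $c\to b'$ route, so a right-to-left lane switch is impossible there. \textbf{(c)} In $\textrm{YES}$ the $c\to b'$ route and the $a\to a'$ route are each \emph{unique} and both use the arc from $(17,22)$ to $(17,20)$; hence a crossover and a horizontal path cannot coexist in the same $\textrm{YES}$ gadget. Recalling that $G_1(2i-1,j)$ (resp.\ $G_1(2i,j)$) is $\textrm{YES}$ exactly when $X_i$ occurs positively (resp.\ negatively) in $C_j$, the phrase ``free row $r$ with $G_1(r,j)=\textrm{YES}$'' then becomes synonymous with ``$C_j$ contains a true literal.''

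For the forward direction, given a satisfying assignment I would place the $Q_i$ according to the convention above and then route each $P_j$ straight down the right lane until it reaches the gadget of a true literal of $C_j$ --- which lies in a free row and is $\textrm{YES}$ --- perform the single crossover $c\to b'$ there, and continue straight down the left lane. Arc-disjointness is immediate: distinct columns and distinct rows never share arcs; above the crossover $P_j$ meets horizontal paths only in its right lane and below it only in its left lane, so fact \textbf{(a)} applies; and the crossover gadget carries no horizontal path because its row is free, so fact \textbf{(c)} is vacuous there.

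The reverse direction carries the real work. Reading the assignment off the rows of the $Q_i$, I must show each clause is satisfied, and for this I need each $P_j$ to perform its right-to-left switch \emph{inside its own column} and \emph{in a horizontal-free gadget}. Containment is the main obstacle, and I would settle it using acyclicity of $G_1$ (evident from the arc directions): the inter-gadget links run only downward and only rightward, so once $P_j$ crossed into column $j+1$ it could never return to reach $b'\in G_1(2p,j)$, while no arc leads leftward at all; hence $P_j$ stays in column $j$ and traverses each of its gadgets once, top to bottom. Since the vertical links preserve the lane, the net right-to-left displacement from $c$ at the top to $b'$ at the bottom forces at least one in-gadget $c\to b'$ crossover; by \textbf{(b)} it occurs in a $\textrm{YES}$ gadget, and by \textbf{(c)} that gadget cannot also carry the horizontal path of its row, so its row is free. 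A free $\textrm{YES}$ gadget in column $j$ is precisely a true literal of $C_j$, so every clause is satisfied and the induced assignment works.
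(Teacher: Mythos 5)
Your proposal is correct and follows essentially the same route as the paper: directedness of the inter-gadget arcs forces each $Q_i$ into a single row and each $P_j$ into a single column, the forced right-to-left lane switch must occur inside a $\textrm{YES}$ gadget (the paper pins it down as the first left edge used, you via a net-displacement count), and the shared arc between the $c\to b'$ and $a\to a'$ routes excludes a horizontal traversal there, so the free row yields a true literal under the same row-parity convention. You merely make explicit what the paper leaves to inspection of Figure~\ref{figyesandno} and to the remark that the converse is obvious.
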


\begin{proof}
Suppose that $\mathcal{P}$ exists. For all $i \in \ival{1}{p}$, if $Q_i$ has extremities $a \in G_1(2i-1,1)$ and $a' \in G_1(2i-1,n)$, then assign value \emph{false} to $X_i$, otherwise assign value \emph{true}. Horizontal cuts and vertical cuts are directed, thus every path $Q_i$, $(i \in \ival{1}{p})$ is contained in a single row and every path $P_j, j \in \ival{1}{n}$, is contained in a single column. For each path $P_j, j \in \ival{1}{n}$, let $i$ be the index of the first row where $P_j$ goes through the left edge between $G_1(i,j)$ and $G_1(i+1,j)$. Then $P_j$ is the only path that goes through $G_1(i,j)$, and $G_1(i,j)$ is a $\textrm{YES}$ graph. If $i$ is even, it means that $X_{\frac{i}{2}}$ appears negatively in $C_j$ and this variable has value \emph{false}, thus $C_j$ is satisfied. Otherwise $X_{\frac{i+1}{2}}$ appears positively in $C_j$ and the value of this variable is \emph{true}, thus $C_j$ is also satisfied. Then the formula is satisfied. The converse is obvious.
\end{proof}

We just have to enforce paths to be as required in the previous claim. Condition $(i)$ is easy to satisfy. To check condition $(ii)$ we need some gadgets, depicted in Figure~\ref{figrouteurs}.

\begin{figure}
\centering
\begin{tikzpicture}[x=0.4cm,y=0.4cm,>=latex]
\node[circle,inner sep = 0pt,minimum size =3pt,fill = black] (nx28y8) at (28,8) {};
\node[circle,inner sep = 0pt,minimum size =3pt,fill = black] (nx24y6) at (24,6) {};
\node[circle,inner sep = 0pt,minimum size =3pt,fill = black] (nx26y8) at (26,8) {};
\node[circle,inner sep = 0pt,minimum size =3pt,fill = black] (nx26y10) at (26,10) {};
\node[circle,inner sep = 0pt,minimum size =3pt,fill = black] (nx24y8) at (24,8) {};
\node[circle,inner sep = 0pt,minimum size =3pt,fill = black] (nx24y10) at (24,10) {};
\node[circle,inner sep = 0pt,minimum size =3pt,fill = black] (nx20y8) at (20,8) {};
\node[circle,inner sep = 0pt,minimum size =3pt,fill = black] (nx22y10) at (22,10) {};
\node[circle,inner sep = 0pt,minimum size =3pt,fill = black] (nx24y12) at (24,12) {};
\node[circle,inner sep = 0pt,minimum size =3pt,fill = black] (nx22y12) at (22,12) {};
\node[circle,inner sep = 0pt,minimum size =3pt,fill = black] (nx20y10) at (20,10) {};
\node[circle,inner sep = 0pt,minimum size =3pt,fill = black] (nx10y6) at (10,6) {};
\node[circle,inner sep = 0pt,minimum size =3pt,fill = black] (nx14y8) at (14,8) {};
\node[circle,inner sep = 0pt,minimum size =3pt,fill = black] (nx12y8) at (12,8) {};
\node[circle,inner sep = 0pt,minimum size =3pt,fill = black] (nx12y10) at (12,10) {};
\node[circle,inner sep = 0pt,minimum size =3pt,fill = black] (nx10y8) at (10,8) {};
\node[circle,inner sep = 0pt,minimum size =3pt,fill = black] (nx8y8) at (8,8) {};
\node[circle,inner sep = 0pt,minimum size =3pt,fill = black] (nx6y6) at (6,6) {};
\node[circle,inner sep = 0pt,minimum size =3pt,fill = black] (nx6y8) at (6,8) {};
\node[circle,inner sep = 0pt,minimum size =3pt,fill = black] (nx8y10) at (8,10) {};
\node[circle,inner sep = 0pt,minimum size =3pt,fill = black] (nx8y12) at (8,12) {};
\node[circle,inner sep = 0pt,minimum size =3pt,fill = black] (nx6y10) at (6,10) {};
\node[circle,inner sep = 0pt,minimum size =3pt,fill = black] (nx2y8) at (2,8) {};
\node[circle,inner sep = 0pt,minimum size =3pt,fill = black] (nx4y10) at (4,10) {};
\node[circle,inner sep = 0pt,minimum size =3pt,fill = black] (nx2y10) at (2,10) {};
\node[circle,inner sep = 0pt,minimum size =3pt,fill = black] (nx4y12) at (4,12) {};
\node[circle,inner sep = 0pt,minimum size =3pt,fill = black] (nx4y14) at (4,14) {};
\node[circle,inner sep = 0pt,minimum size =3pt,fill = black] (nx24y16) at (24,16) {};
\node[circle,inner sep = 0pt,minimum size =3pt,fill = black] (nx24y18) at (24,18) {};
\node[circle,inner sep = 0pt,minimum size =3pt,fill = black] (nx22y18) at (22,18) {};
\node[circle,inner sep = 0pt,minimum size =3pt,fill = black] (nx26y20) at (26,20) {};
\node[circle,inner sep = 0pt,minimum size =3pt,fill = black] (nx24y20) at (24,20) {};
\node[circle,inner sep = 0pt,minimum size =3pt,fill = black] (nx24y22) at (24,22) {};
\node[circle,inner sep = 0pt,minimum size =3pt,fill = black] (nx26y24) at (26,24) {};
\node[circle,inner sep = 0pt,minimum size =3pt,fill = black] (nx24y24) at (24,24) {};
\node[circle,inner sep = 0pt,minimum size =3pt,fill = black] (nx22y22) at (22,22) {};
\node[circle,inner sep = 0pt,minimum size =3pt,fill = black] (nx20y22) at (20,22) {};
\node[circle,inner sep = 0pt,minimum size =3pt,fill = black] (nx22y24) at (22,24) {};
\node[circle,inner sep = 0pt,minimum size =3pt,fill = black] (nx24y28) at (24,28) {};
\node[circle,inner sep = 0pt,minimum size =3pt,fill = black] (nx22y28) at (22,28) {};
\node[circle,inner sep = 0pt,minimum size =3pt,fill = black] (nx22y26) at (22,26) {};
\node[circle,inner sep = 0pt,minimum size =3pt,fill = black] (nx20y26) at (20,26) {};
\node[circle,inner sep = 0pt,minimum size =3pt,fill = black] (nx18y26) at (18,26) {};
\node[circle,inner sep = 0pt,minimum size =3pt,fill = black] (nx14y22) at (14,22) {};
\node[circle,inner sep = 0pt,minimum size =3pt,fill = black] (nx14y20) at (14,20) {};
\node[circle,inner sep = 0pt,minimum size =3pt,fill = black] (nx12y18) at (12,18) {};
\node[circle,inner sep = 0pt,minimum size =3pt,fill = black] (nx12y20) at (12,20) {};
\node[circle,inner sep = 0pt,minimum size =3pt,fill = black] (nx10y20) at (10,20) {};
\node[circle,inner sep = 0pt,minimum size =3pt,fill = black] (nx8y18) at (8,18) {};
\node[circle,inner sep = 0pt,minimum size =3pt,fill = black] (nx8y20) at (8,20) {};
\node[circle,inner sep = 0pt,minimum size =3pt,fill = black] (nx10y22) at (10,22) {};
\node[circle,inner sep = 0pt,minimum size =3pt,fill = black] (nx8y22) at (8,22) {};
\node[circle,inner sep = 0pt,minimum size =3pt,fill = black] (nx10y24) at (10,24) {};
\node[circle,inner sep = 0pt,minimum size =3pt,fill = black] (nx6y26) at (6,26) {};
\node[circle,inner sep = 0pt,minimum size =3pt,fill = black] (nx6y24) at (6,24) {};
\node[circle,inner sep = 0pt,minimum size =3pt,fill = black] (nx6y22) at (6,22) {};
\node[circle,inner sep = 0pt,minimum size =3pt,fill = black] (nx4y20) at (4,20) {};
\node[circle,inner sep = 0pt,minimum size =3pt,fill = black] (nx4y22) at (4,22) {};
\node[circle,inner sep = 0pt,minimum size =3pt,fill = black] (nx2y22) at (2,22) {};
\draw[->,black] (nx18y26) -- (nx20y26);
\draw[->,black] (nx20y26) -- (nx22y26);
\draw[->,black] (nx20y26) -- (nx20y22);
\draw[->,black] (nx22y22) -- (nx24y22);
\draw[->,black] (nx20y22) -- (nx22y22);
\draw[->,black] (nx24y20) -- (nx26y20);
\draw[->,black] (nx22y18) -- (nx24y18);
\draw[->,black] (nx22y22) -- (nx22y18);
\draw[->,black] (nx22y24) -- (nx22y22);
\draw[->,black] (nx24y24) -- (nx26y24);
\draw[->,black] (nx22y24) -- (nx24y24);
\draw[->,black] (nx22y26) -- (nx22y24);
\draw[->,black] (nx22y28) -- (nx22y26);
\draw[->,black] (nx24y28) -- (nx24y24);
\draw[->,black] (nx24y24) -- (nx24y22);
\draw[->,black] (nx24y22) -- (nx24y20);
\draw[->,black] (nx24y20) -- (nx24y18);
\draw[->,black] (nx24y18) -- (nx24y16);
\draw[->,black] (nx24y12) -- (nx24y10);
\draw[->,black] (nx22y12) -- (nx22y10);
\draw[->,black] (nx24y10) -- (nx26y10);
\draw[->,black] (nx22y10) -- (nx24y10);
\draw[->,black] (nx20y10) -- (nx22y10);
\draw[->,black] (nx26y10) -- (nx26y8);
\draw[->,black] (nx24y10) -- (nx24y8);
\draw[->,black] (nx24y8) -- (nx24y6);
\draw[->,black] (nx26y8) -- (nx28y8);
\draw[->,black] (nx24y8) -- (nx26y8);
\draw[->,black] (nx20y8) -- (nx24y8);
\draw[->,black] (nx10y8) -- (nx12y8);
\draw[->,black] (nx8y10) -- (nx8y8);
\draw[->,black] (nx6y10) -- (nx8y10);
\draw[->,black] (nx12y8) -- (nx14y8);
\draw[->,black] (nx12y10) -- (nx12y8);
\draw[->,black] (nx8y10) -- (nx12y10);
\draw[->,black] (nx8y12) -- (nx8y10);
\draw[->,black] (nx4y12) -- (nx8y12);
\draw[->,black] (nx10y8) -- (nx10y6);
\draw[->,black] (nx8y8) -- (nx10y8);
\draw[->,black] (nx6y8) -- (nx8y8);
\draw[->,black] (nx6y10) -- (nx6y8);
\draw[->,black] (nx4y10) -- (nx6y10);
\draw[->,black] (nx6y8) -- (nx6y6);
\draw[->,black] (nx2y8) -- (nx6y8);
\draw[->,black] (nx2y10) -- (nx4y10);
\draw[->,black] (nx4y12) -- (nx4y10);
\draw[->,black] (nx4y14) -- (nx4y12);
\draw[->,black] (nx12y20) -- (nx12y18);
\draw[->,black] (nx8y20) -- (nx8y18);
\draw[->,black] (nx12y20) -- (nx14y20);
\draw[->,black] (nx10y20) -- (nx12y20);
\draw[->,black] (nx8y20) -- (nx10y20);
\draw[->,black] (nx4y20) -- (nx8y20);
\draw[->,black] (nx10y22) -- (nx10y20);
\draw[->,black] (nx8y22) -- (nx8y20);
\draw[->,black] (nx4y22) -- (nx4y20);
\draw[->,black] (nx10y22) -- (nx14y22);
\draw[->,black] (nx8y22) -- (nx10y22);
\draw[->,black] (nx6y22) -- (nx8y22);
\draw[->,black] (nx4y22) -- (nx6y22);
\draw[->,black] (nx2y22) -- (nx4y22);
\draw[->,black] (nx10y24) -- (nx10y22);
\draw[->,black] (nx6y24) -- (nx10y24);
\draw[->,black] (nx6y24) -- (nx6y22);
\draw[->,black] (nx6y26) -- (nx6y24);
\draw (3,6) node {$\textrm{TT}$};
\draw (3,25) node {$\textrm{IF}$};
\draw (26,27) node {$\textrm{LL}$};
\draw (27,6) node {$\textrm{VV}$};
\draw (28,8) node[anchor = west] {$b$};
\draw (24,6) node[anchor = north] {$a$};
\draw (20,8) node[anchor = east] {$b_1$};
\draw (20,10) node[anchor = east] {$b_2$};
\draw (24,12) node[anchor = south] {$a_2$};
\draw (22,12) node[anchor = south] {$a_1$};
\draw (2,8) node[anchor = east] {$b_2$};
\draw (2,10) node[anchor = east] {$b_1$};
\draw (4,14) node[anchor = south] {$a$};
\draw (6,6) node[anchor = west] {$a_2$};
\draw (10,6) node[anchor = west] {$a_1$};
\draw (14,8) node[anchor = north] {$b$};
\draw (24,16) node[anchor = north] {$b$};
\draw (26,20) node[anchor = west] {$a_1$};
\draw (26,24) node[anchor = west] {$a_2$};
\draw (14,22) node[anchor = west] {$b_2$};
\draw (14,20) node[anchor = west] {$b_1$};
\draw (12,18) node[anchor = north] {$a_2$};
\draw (8,18) node[anchor = north] {$a_1$};
\draw (24,28) node[anchor = south] {$b_2$};
\draw (22,28) node[anchor = south] {$b_1$};
\draw (18,26) node[anchor = east] {$a$};
\draw (6,26) node[anchor = south] {$a$};
\draw (2,22) node[anchor = east] {$b$};
\end{tikzpicture}
\legende{Four special subgraphs, from top to bottom and left to right : $\textrm{IF}$, $\textrm{LL}$, $\textrm{TT}$, $\textrm{VV}$. First three have the property that if there are two arc-disjoint paths, one between $a$ and $a_i$ and the other between $b$ and $b_j$, then $i=j$.}
\label{figrouteurs}
\end{figure}

\begin{myclaim}\label{claimrouteur}
Let $P_1$ be a path between $a$ and one of $a_i$, $i \in \{1,2\}$, and $P_2$ be a path between $b$ and $b_j$, $j \in \{1,2\}$, in $\textrm{IF}$, $\textrm{LL}$ or $\textrm{TT}$. If $P_1$ and $P_2$ are arc-disjoint, then $i=j$.\qed\\
\end{myclaim}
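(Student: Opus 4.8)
The plan is to prove the contrapositive inside each of the three gadgets: if $i\neq j$, then $P_1$ and $P_2$ necessarily share an arc. Since each of \textrm{IF}, \textrm{LL}, \textrm{TT} is a small acyclic digraph, the orientation of every directed path is forced, so I first record which of the six distinguished vertices are sources and which are sinks. In \textrm{TT} and \textrm{LL} the vertex $a$ and the vertices $b_1,b_2$ are sources while $a_1,a_2$ and $b$ are sinks, so $P_1$ runs from $a$ to $a_i$ and $P_2$ from $b_j$ to $b$; in \textrm{IF} both $a$ and $b$ are sources and $a_1,a_2,b_1,b_2$ are sinks, so $P_1$ runs from $a$ to $a_i$ and $P_2$ from $b$ to $b_j$. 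In every case the statement reduces to ruling out the two \emph{mismatched} configurations $(i,j)\in\{(1,2),(2,1)\}$, which gives six elementary cases.

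The method is a single uniform \emph{bottleneck} observation: in each of the six mismatched cases there is a vertex $v$ of in-degree one such that every directed path realizing the prescribed endpoints of $P_1$ passes through $v$, and so does every directed path realizing the endpoints of $P_2$. Because both paths traverse $v$, they must both use the unique arc entering $v$, contradicting arc-disjointness. Verifying that such a $v$ exists is a deterministic backward-tracing exercise on the DAG of Figure~\ref{figrouteurs}: one follows the unique forced predecessors of the relevant sink, and uses the fact that the mismatched source cannot reach the ``other'' predecessor. For instance, in \textrm{TT} with $(i,j)=(2,1)$, the target $a_2$ is reachable from $a$ along a single directed path, while $b_1$ and its only out-neighbour both have out-degree one; the two routes first meet at an in-degree-one vertex, forcing the shared arc. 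For $(i,j)=(1,2)$ the unique in-neighbour of $a_1$ again has in-degree one, $P_1$ is forced along the arc entering it, and since the source $b_2$ cannot reach the alternative predecessor of $b$, every $(b_2,b)$-path is driven through the same vertex and hence the same arc.

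The remaining cases are handled identically: in \textrm{IF} the mismatch $(2,1)$ forces both paths through the common in-degree-one vertex that is the unique predecessor of both $a_2$ and $b_1$, and $(1,2)$ forces them through the in-degree-one vertex immediately preceding $a_1$ on the $a$-side; \textrm{LL} is the structural mirror of \textrm{TT}, with the out-degree-one source $b_1$ and the in-degree-one vertex feeding $a_1$ supplying the two bottlenecks. I expect the only real work to be exactly this finite verification: there is no single cut that dispatches all three gadgets simultaneously, so the argument is a short but careful inspection, and the main obstacle is simply to locate, in each of the six cases, the correct degree-one vertex that couples the two indices. Acyclicity guarantees that the backward tracing is deterministic and terminating, which is what makes each case immediate once the vertex is identified.
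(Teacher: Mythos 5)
The paper offers no written proof of this claim at all --- it merely remarks that Claims~\ref{claimrouteur} and~\ref{claimverifieur} ``can be readily checked'' --- so your bottleneck argument is exactly the intended finite inspection, carried out correctly: tracing the six mismatched cases on the gadgets of Figure~\ref{figrouteurs} confirms that in each one both paths are forced through a common vertex of in-degree one (equivalently, through the unique out-arc of a shared predecessor), hence must share that arc, as in $\textrm{TT}$ with $(i,j)=(1,2)$ where every $(a,a_1)$-path and every $(b_2,b)$-path traverses the unique arc entering the in-degree-one predecessor of $a_1$. Two of your vertex identifications are slightly off --- in $\textrm{IF}$ with $(i,j)=(1,2)$ the bottleneck sits two arcs before $a_1$, since the immediate predecessor of $a_1$ has in-degree two, and in $\textrm{TT}$ with $(i,j)=(2,1)$ the two routes first meet at a vertex of in-degree two whose single out-arc then supplies the shared arc --- but these slips are cosmetic, and the mechanism you describe succeeds in all six cases.
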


\begin{myclaim}\label{claimverifieur}
There are not two arc-disjoint paths in $\textrm{VV}$, one from $b_2$ to $b$ and the other from $a_1$ to $a$.\qed
\end{myclaim}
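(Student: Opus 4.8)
The plan is to prove Claim~\ref{claimverifieur} by a purely local analysis of the out-degrees near the ``merge'' vertex of the gadget $\textrm{VV}$, observing that both requested paths are \emph{forced} through a single common arc. Reading off Figure~\ref{figrouteurs}, I would first fix names for the two relevant interior vertices: let $v$ be the interior vertex that is the head of the unique arc leaving $a_1$, and let $w$ be the head of the unique arc leaving $v$. The starting point of the argument is the elementary remark that in $\textrm{VV}$ the terminal $a_1$ has out-degree one, with its only outgoing arc directed to $v$, and \emph{so does} $b_2$: the only arc leaving $b_2$ is also directed to $v$. Thus $v$ has in-degree at least two, fed by both $a_1$ and $b_2$.

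The key step is then to check that $v$ itself has out-degree one, its single outgoing arc being $v \to w$. Granting this, any directed path in $\textrm{VV}$ that starts at $a_1$ must begin with $a_1 \to v$ and, since $v$ has no other exit, must continue with the arc $v \to w$; by the symmetric observation, every directed path starting at $b_2$ must likewise begin $b_2 \to v$ and then traverse $v \to w$. Consequently a $(b_2,b)$-path and an $(a_1,a)$-path both use the arc $v \to w$, so they cannot be arc-disjoint. This immediately contradicts the existence of the two arc-disjoint paths postulated in the claim, completing the argument.

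The only real work is the bookkeeping of arc orientations around $v$ and $w$, i.e.\ confirming from the figure that $a_1$, $b_2$, and $v$ each have out-degree exactly one and that the out-arcs of $a_1$ and $b_2$ converge at $v$; I expect this verification of the local structure to be the sole (and minor) obstacle, after which the contradiction is forced by a one-line counting of arcs. I would also note in passing that this is precisely the ``funnel'' behaviour that makes $\textrm{VV}$ serve as the verifier gadget: routing $a_1 \to a$ consumes the unique exit of $v$ that a $(b_2,b)$-path would otherwise need, which is what guarantees soundness of the reduction in Theorem~\ref{th:diflot}.
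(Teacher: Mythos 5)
Your proof is correct and is exactly the verification the paper leaves implicit: the paper gives no argument for Claim~\ref{claimverifieur} beyond the remark that it ``can be readily checked'', and your reading of the figure is accurate --- the unique out-arcs of $a_1$ and $b_2$ in $\textrm{VV}$ both enter the same interior vertex $v$, which itself has out-degree one, so any $(a_1,a)$-path and any $(b_2,b)$-path are both forced through the single arc leaving $v$ and cannot be arc-disjoint. Nothing further is needed.
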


Claim~\ref{claimrouteur} and Claim~\ref{claimverifieur} can be readily checked. We will also need the graph $\textrm{ON}$ introduced in Figure~\ref{figon}. We now describe the full graph for the reduction. $G$ is build from $G_1$ in the model of Figure~\ref{figbigfig}. $G$ is built from a grid with $2p+n$ columns and $2p$ rows. The subgrid defined by columns $p+1$ to $p+n$ and rows $1$ to $p$ is $G_1$. Note that two rows in $G_1$ correspond to one row in $G$. Squares $G(i,i)$, $G(p+1-i, n+p+i)$, $G(2p+1-i,i)$ and $G(p+i, n+p+i)$, for all $i \in \ival{1}{p}$, are special graph $\textrm{IF}$, $\textrm{TT}$, $\textrm{LL}$ and $\textrm{VV}$ respectively. Others are either $\textrm{NO}$ or $\textrm{ON}$, according to the figure.  Rows, columns, vertical cuts and horizontal cuts are defined in the same way as in Section~\ref{sec:reduction}. We add four terminals, one for each side of the grid (see the figure).

\begin{figure}
\centering
\begin{tikzpicture}[x=0.4cm,y=0.4cm,>=latex]
\node[circle,inner sep = 0pt,minimum size =3pt,fill = black] (nx5y21) at (5,21) {};
\node[circle,inner sep = 0pt,minimum size =3pt,fill = black] (nx5y23) at (5,23) {};
\node[circle,inner sep = 0pt,minimum size =3pt,fill = black] (nx9y23) at (9,23) {};
\node[circle,inner sep = 0pt,minimum size =3pt,fill = black] (nx9y21) at (9,21) {};
\node[circle,inner sep = 0pt,minimum size =3pt,fill = black] (nx7y19) at (7,19) {};
\node[circle,inner sep = 0pt,minimum size =3pt,fill = black] (nx7y21) at (7,21) {};
\node[circle,inner sep = 0pt,minimum size =3pt,fill = black] (nx7y23) at (7,23) {};
\node[circle,inner sep = 0pt,minimum size =3pt,fill = black] (nx7y25) at (7,25) {};
\draw[->,black] (nx7y21) -- (nx9y21);
\draw[->,black] (nx7y23) -- (nx9y23);
\draw[->,black] (nx5y23) -- (nx7y23);
\draw[->,black] (nx5y21) -- (nx7y21);
\draw[->,black] (nx7y21) -- (nx7y19);
\draw[->,black] (nx7y23) -- (nx7y21);
\draw[->,black] (nx7y25) -- (nx7y23);
\draw (7,19) node[anchor = north] {$a'$};
\draw (9,21) node[anchor = west] {$c'$};
\draw (9,23) node[anchor = west] {$b'$};
\draw (5,21) node[anchor = east] {$c$};
\draw (5,23) node[anchor = east] {$b$};
\draw (7,25) node[anchor = south] {$a$};
\end{tikzpicture}
\legende{Graph $\textrm{ON}$, it has the same property as graph $\textrm{NO}$.}
\label{figon}
\end{figure}

\begin{figure}
\begin{center}
\begin{tikzpicture}[x=0.25cm,y=0.25cm,>=latex]
\fill[black, nearly transparent,rounded corners=2pt] (40.,6.) rectangle (43.,9.);
\draw (41.5,7.5) node {$\textrm{NO}$};
\fill[black, nearly transparent,rounded corners=2pt] (20.,6.) rectangle (23.,9.);
\draw (21.5,7.5) node {$\textrm{ON}$};
\fill[black, nearly transparent,rounded corners=2pt] (36.,22.) rectangle (39.,25.);
\draw (37.5,23.5) node {$\textrm{ON}$};
\fill[black, nearly transparent,rounded corners=2pt] (32.,22.) rectangle (35.,25.);
\draw (33.5,23.5) node {$\textrm{ON}$};
\fill[black, nearly transparent,rounded corners=2pt] (32.,18.) rectangle (35.,21.);
\draw (33.5,19.5) node {$\textrm{ON}$};
\fill[black, nearly transparent,rounded corners=2pt] (40.,18.) rectangle (43.,21.);
\draw (41.5,19.5) node {$\textrm{NO}$};
\fill[black, nearly transparent,rounded corners=2pt] (40.,14.) rectangle (43.,17.);
\draw (41.5,15.5) node {$\textrm{NO}$};
\fill[black, nearly transparent,rounded corners=2pt] (36.,14.) rectangle (39.,17.);
\draw (37.5,15.5) node {$\textrm{NO}$};
\fill[black, nearly transparent,rounded corners=2pt] (36.,10.) rectangle (39.,13.);
\draw (37.5,11.5) node {$\textrm{NO}$};
\fill[black, nearly transparent,rounded corners=2pt] (40.,10.) rectangle (43.,13.);
\draw (41.5,11.5) node {$\textrm{NO}$};
\fill[black, nearly transparent,rounded corners=2pt] (36.,2.) rectangle (39.,5.);
\draw (37.5,3.5) node {$\textrm{ON}$};
\fill[black, nearly transparent,rounded corners=2pt] (32.,2.) rectangle (35.,5.);
\draw (33.5,3.5) node {$\textrm{ON}$};
\fill[black, nearly transparent,rounded corners=2pt] (28.,2.) rectangle (31.,5.);
\draw (29.5,3.5) node {$\textrm{ON}$};
\fill[black, nearly transparent,rounded corners=2pt] (24.,2.) rectangle (27.,5.);
\draw (25.5,3.5) node {$\textrm{ON}$};
\fill[black, nearly transparent,rounded corners=2pt] (20.,2.) rectangle (23.,5.);
\draw (21.5,3.5) node {$\textrm{ON}$};
\fill[black, nearly transparent,rounded corners=2pt] (16.,2.) rectangle (19.,5.);
\draw (17.5,3.5) node {$\textrm{ON}$};
\fill[black, nearly transparent,rounded corners=2pt] (12.,2.) rectangle (15.,5.);
\draw (13.5,3.5) node {$\textrm{ON}$};
\fill[black, nearly transparent,rounded corners=2pt] (32.,6.) rectangle (35.,9.);
\draw (33.5,7.5) node {$\textrm{ON}$};
\fill[black, nearly transparent,rounded corners=2pt] (28.,6.) rectangle (31.,9.);
\draw (29.5,7.5) node {$\textrm{ON}$};
\fill[black, nearly transparent,rounded corners=2pt] (24.,6.) rectangle (27.,9.);
\draw (25.5,7.5) node {$\textrm{ON}$};
\fill[black, nearly transparent,rounded corners=2pt] (16.,6.) rectangle (19.,9.);
\draw (17.5,7.5) node {$\textrm{ON}$};
\fill[black, nearly transparent,rounded corners=2pt] (28.,10.) rectangle (31.,13.);
\draw (29.5,11.5) node {$\textrm{ON}$};
\fill[black, nearly transparent,rounded corners=2pt] (24.,10.) rectangle (27.,13.);
\draw (25.5,11.5) node {$\textrm{ON}$};
\fill[black, nearly transparent,rounded corners=2pt] (20.,10.) rectangle (23.,13.);
\draw (21.5,11.5) node {$\textrm{ON}$};
\fill[black, nearly transparent,rounded corners=2pt] (8.,6.) rectangle (11.,9.);
\draw (9.5,7.5) node {$\textrm{NO}$};
\fill[black, nearly transparent,rounded corners=2pt] (12.,10.) rectangle (15.,13.);
\draw (13.5,11.5) node {$\textrm{NO}$};
\fill[black, nearly transparent,rounded corners=2pt] (8.,10.) rectangle (11.,13.);
\draw (9.5,11.5) node {$\textrm{NO}$};
\fill[black, nearly transparent,rounded corners=2pt] (8.,18.) rectangle (11.,21.);
\draw (9.5,19.5) node {$\textrm{NO}$};
\fill[black, nearly transparent,rounded corners=2pt] (8.,14.) rectangle (11.,17.);
\draw (9.5,15.5) node {$\textrm{NO}$};
\fill[black, nearly transparent,rounded corners=2pt] (12.,14.) rectangle (15.,17.);
\draw (13.5,15.5) node {$\textrm{NO}$};
\fill[black, nearly transparent,rounded corners=2pt] (16.,18.) rectangle (19.,21.);
\draw (17.5,19.5) node {$\textrm{ON}$};
\fill[black, nearly transparent,rounded corners=2pt] (16.,22.) rectangle (19.,25.);
\draw (17.5,23.5) node {$\textrm{ON}$};
\fill[black, nearly transparent,rounded corners=2pt] (12.,22.) rectangle (15.,25.);
\draw (13.5,23.5) node {$\textrm{ON}$};
\fill[black, semitransparent,rounded corners=2pt] (40.,2.) rectangle (43.,5.);
\draw (41.5,3.5) node {$\textrm{VV}$};
\fill[black, semitransparent,rounded corners=2pt] (36.,6.) rectangle (39.,9.);
\draw (37.5,7.5) node {$\textrm{VV}$};
\fill[black, semitransparent,rounded corners=2pt] (32.,10.) rectangle (35.,13.);
\draw (33.5,11.5) node {$\textrm{VV}$};
\fill[black, semitransparent,rounded corners=2pt] (8.,2.) rectangle (11.,5.);
\draw (9.5,3.5) node {$\textrm{LL}$};
\fill[black, semitransparent,rounded corners=2pt] (12.,6.) rectangle (15.,9.);
\draw (13.5,7.5) node {$\textrm{LL}$};
\fill[black, semitransparent,rounded corners=2pt] (16.,10.) rectangle (19.,13.);
\draw (17.5,11.5) node {$\textrm{LL}$};
\fill[black, semitransparent,rounded corners=2pt] (40.,22.) rectangle (43.,25.);
\draw (41.5,23.5) node {$\textrm{TT}$};
\fill[black, semitransparent,rounded corners=2pt] (36.,18.) rectangle (39.,21.);
\draw (37.5,19.5) node {$\textrm{TT}$};
\fill[black, semitransparent,rounded corners=2pt] (32.,14.) rectangle (35.,17.);
\draw (33.5,15.5) node {$\textrm{TT}$};
\fill[black, semitransparent,rounded corners=2pt] (20.,14.) rectangle (31.,25.);
\fill[black, semitransparent,rounded corners=2pt] (8.,22.) rectangle (11.,25.);
\draw (9.5,23.5) node {$\textrm{IF}$};
\fill[black, semitransparent,rounded corners=2pt] (12.,18.) rectangle (15.,21.);
\draw (13.5,19.5) node {$\textrm{IF}$};
\fill[black, semitransparent,rounded corners=2pt] (16.,14.) rectangle (19.,17.);
\draw (17.5,15.5) node {$\textrm{IF}$};
\draw[->,black,thick] (25,30) -- (41,26);
\draw[->,black,thick] (25,30) -- (37,26);
\draw[->,black,thick] (25,30) -- (33,26);
\draw[->,black,thick] (25,30) -- (29,26);
\draw[->,black,thick] (25,30) -- (25,26);
\draw[->,black,thick] (25,30) -- (9,26);
\draw[->,black,thick] (25,30) -- (21,26);
\draw[->,black,thick] (25,30) -- (17,26);
\draw[->,black,thick] (25,30) -- (13,26);
\draw[->,black,thick] (42,14) -- (42,13);
\draw[->,black,thick] (41,14) -- (41,13);
\draw[->,black,thick] (38,14) -- (38,13);
\draw[->,black,thick] (37,14) -- (37,13);
\draw[->,black,thick] (42,18) -- (42,17);
\draw[->,black,thick] (41,18) -- (41,17);
\draw[->,black,thick] (38,18) -- (38,17);
\draw[->,black,thick] (37,18) -- (37,17);
\draw[->,black,thick] (33,18) -- (33,17);
\draw[->,black,thick] (33,22) -- (33,21);
\draw[->,black,thick] (37,22) -- (37,21);
\draw[->,black,thick] (42,22) -- (42,21);
\draw[->,black,thick] (41,22) -- (41,21);
\draw[->,black,thick] (41,26) -- (41,25);
\draw[->,black,thick] (37,26) -- (37,25);
\draw[->,black,thick] (33,26) -- (33,25);
\draw[->,black,thick] (29,26) -- (29,25);
\draw[->,black,thick] (25,26) -- (25,25);
\draw[->,black,thick] (21,26) -- (21,25);
\draw[->,black,thick] (13,26) -- (13,25);
\draw[->,black,thick] (17,26) -- (17,25);
\draw[->,black,thick] (17,22) -- (17,21);
\draw[->,black,thick] (13,22) -- (13,21);
\draw[->,black,thick] (9,2) -- (9,1);
\draw[->,black,thick] (17,18) -- (17,17);
\draw[->,black,thick] (29,14) -- (29,13);
\draw[->,black,thick] (25,14) -- (25,13);
\draw[->,black,thick] (21,14) -- (21,13);
\draw[->,black,thick] (42,6) -- (42,5);
\draw[->,black,thick] (41,6) -- (41,5);
\draw[->,black,thick] (42,10) -- (42,9);
\draw[->,black,thick] (41,10) -- (41,9);
\draw[->,black,thick] (38,10) -- (38,9);
\draw[->,black,thick] (37,10) -- (37,9);
\draw[->,black,thick] (33,10) -- (33,9);
\draw[->,black,thick] (29,10) -- (29,9);
\draw[->,black,thick] (25,10) -- (25,9);
\draw[->,black,thick] (21,10) -- (21,9);
\draw[->,black,thick] (17,10) -- (17,9);
\draw[->,black,thick] (9,1) -- (25,0);
\draw[->,black,thick] (13,1) -- (25,0);
\draw[->,black,thick] (17,1) -- (25,0);
\draw[->,black,thick] (21,1) -- (25,0);
\draw[->,black,thick] (25,1) -- (25,0);
\draw[->,black,thick] (29,1) -- (25,0);
\draw[->,black,thick] (33,1) -- (25,0);
\draw[->,black,thick] (37,1) -- (25,0);
\draw[->,black,thick] (41,1) -- (25,0);
\draw[->,black,thick] (41,2) -- (41,1);
\draw[->,black,thick] (37,6) -- (37,5);
\draw[->,black,thick] (37,2) -- (37,1);
\draw[->,black,thick] (33,2) -- (33,1);
\draw[->,black,thick] (33,6) -- (33,5);
\draw[->,black,thick] (29,2) -- (29,1);
\draw[->,black,thick] (29,6) -- (29,5);
\draw[->,black,thick] (25,6) -- (25,5);
\draw[->,black,thick] (25,2) -- (25,1);
\draw[->,black,thick] (21,2) -- (21,1);
\draw[->,black,thick] (21,6) -- (21,5);
\draw[->,black,thick] (17,6) -- (17,5);
\draw[->,black,thick] (17,2) -- (17,1);
\draw[->,black,thick] (13,2) -- (13,1);
\draw[->,black,thick] (13,6) -- (13,5);
\draw[->,black,thick] (14,10) -- (14,9);
\draw[->,black,thick] (13,10) -- (13,9);
\draw[->,black,thick] (14,14) -- (14,13);
\draw[->,black,thick] (13,14) -- (13,13);
\draw[->,black,thick] (14,18) -- (14,17);
\draw[->,black,thick] (13,18) -- (13,17);
\draw[->,black,thick] (10,6) -- (10,5);
\draw[->,black,thick] (9,6) -- (9,5);
\draw[->,black,thick] (10,10) -- (10,9);
\draw[->,black,thick] (9,10) -- (9,9);
\draw[->,black,thick] (10,14) -- (10,13);
\draw[->,black,thick] (9,14) -- (9,13);
\draw[->,black,thick] (10,18) -- (10,17);
\draw[->,black,thick] (9,18) -- (9,17);
\draw[->,black,thick] (10,22) -- (10,21);
\draw[->,black,thick] (9,22) -- (9,21);
\draw[->,black,thick] (9,26) -- (9,25);
\draw[->,black,thick] (44,4) -- (48,14);
\draw[->,black,thick] (43,4) -- (44,4);
\draw[->,black,thick] (39,3) -- (40,3);
\draw[->,black,thick] (39,4) -- (40,4);
\draw[->,black,thick] (35,3) -- (36,3);
\draw[->,black,thick] (35,4) -- (36,4);
\draw[->,black,thick] (31,3) -- (32,3);
\draw[->,black,thick] (31,4) -- (32,4);
\draw[->,black,thick] (27,3) -- (28,3);
\draw[->,black,thick] (27,4) -- (28,4);
\draw[->,black,thick] (23,3) -- (24,3);
\draw[->,black,thick] (23,4) -- (24,4);
\draw[->,black,thick] (19,3) -- (20,3);
\draw[->,black,thick] (19,4) -- (20,4);
\draw[->,black,thick] (15,3) -- (16,3);
\draw[->,black,thick] (15,4) -- (16,4);
\draw[->,black,thick] (11,3) -- (12,3);
\draw[->,black,thick] (11,4) -- (12,4);
\draw[->,black,thick] (44,8) -- (48,14);
\draw[->,black,thick] (43,8) -- (44,8);
\draw[->,black,thick] (39,8) -- (40,8);
\draw[->,black,thick] (35,7) -- (36,7);
\draw[->,black,thick] (35,8) -- (36,8);
\draw[->,black,thick] (31,7) -- (32,7);
\draw[->,black,thick] (31,8) -- (32,8);
\draw[->,black,thick] (27,7) -- (28,7);
\draw[->,black,thick] (27,8) -- (28,8);
\draw[->,black,thick] (23,7) -- (24,7);
\draw[->,black,thick] (23,8) -- (24,8);
\draw[->,black,thick] (19,7) -- (20,7);
\draw[->,black,thick] (19,8) -- (20,8);
\draw[->,black,thick] (15,7) -- (16,7);
\draw[->,black,thick] (15,8) -- (16,8);
\draw[->,black,thick] (11,8) -- (12,8);
\draw[->,black,thick] (44,12) -- (48,14);
\draw[->,black,thick] (43,12) -- (44,12);
\draw[->,black,thick] (39,12) -- (40,12);
\draw[->,black,thick] (35,12) -- (36,12);
\draw[->,black,thick] (15,12) -- (16,12);
\draw[->,black,thick] (11,12) -- (12,12);
\draw[->,black,thick] (44,16) -- (48,14);
\draw[->,black,thick] (43,16) -- (44,16);
\draw[->,black,thick] (39,16) -- (40,16);
\draw[->,black,thick] (35,16) -- (36,16);
\draw[->,black,thick] (15,16) -- (16,16);
\draw[->,black,thick] (11,16) -- (12,16);
\draw[->,black,thick] (44,20) -- (48,14);
\draw[->,black,thick] (43,20) -- (44,20);
\draw[->,black,thick] (39,20) -- (40,20);
\draw[->,black,thick] (35,19) -- (36,19);
\draw[->,black,thick] (35,20) -- (36,20);
\draw[->,black,thick] (31,19) -- (32,19);
\draw[->,black,thick] (31,20) -- (32,20);
\draw[->,black,thick] (19,19) -- (20,19);
\draw[->,black,thick] (19,20) -- (20,20);
\draw[->,black,thick] (15,19) -- (16,19);
\draw[->,black,thick] (15,20) -- (16,20);
\draw[->,black,thick] (11,20) -- (12,20);
\draw[->,black,thick] (44,24) -- (48,14);
\draw[->,black,thick] (43,24) -- (44,24);
\draw[->,black,thick] (39,23) -- (40,23);
\draw[->,black,thick] (39,24) -- (40,24);
\draw[->,black,thick] (35,23) -- (36,23);
\draw[->,black,thick] (35,24) -- (36,24);
\draw[->,black,thick] (31,23) -- (32,23);
\draw[->,black,thick] (31,24) -- (32,24);
\draw[->,black,thick] (19,23) -- (20,23);
\draw[->,black,thick] (19,24) -- (20,24);
\draw[->,black,thick] (15,23) -- (16,23);
\draw[->,black,thick] (15,24) -- (16,24);
\draw[->,black,thick] (11,23) -- (12,23);
\draw[->,black,thick] (11,24) -- (12,24);
\draw[->,black,thick] (7,24) -- (8,24);
\draw[->,black,thick] (7,20) -- (8,20);
\draw[->,black,thick] (7,16) -- (8,16);
\draw[->,black,thick] (7,12) -- (8,12);
\draw[->,black,thick] (7,8) -- (8,8);
\draw[->,black,thick] (7,4) -- (8,4);
\draw[->,black,thick] (3,14) -- (7,4);
\draw[->,black,thick] (3,14) -- (7,8);
\draw[->,black,thick] (3,14) -- (7,12);
\draw[->,black,thick] (3,14) -- (7,16);
\draw[->,black,thick] (3,14) -- (7,20);
\draw[->,black,thick] (3,14) -- (7,24);

\draw[->,black,thick] (17,14) -- (17,13);
\draw[->,black,thick] (19,15) -- (20,15);
\draw[->,black,thick] (31,11) -- (32,11);
\draw[->,black,thick] (27,11) -- (28,11);
\draw[->,black,thick] (23,11) -- (24,11);
\draw[->,black,thick] (19,11) -- (20,11);

\draw[->,black,thick] (18,14) -- (18,13);
\draw[->,black,thick] (19,16) -- (20,16);
\draw[->,black,thick] (31,12) -- (32,12);
\draw[->,black,thick] (27,12) -- (28,12);
\draw[->,black,thick] (23,12) -- (24,12);
\draw[->,black,thick] (19,12) -- (20,12);

\draw[->,black,thick] (31,16) -- (32,16);
\draw[->,black,thick] (34,14) -- (34,13);

\draw[->,black,thick] (31,15) -- (32,15);
\draw[->,black,thick] (33,14) -- (33,13);

\draw (25,0) node[anchor = north] {$t_2$};
\draw (25,30) node[anchor = south] {$t_1$};
\draw (48,14) node[anchor = west] {$s_2$};
\draw (3,14) node[anchor = east] {$s_1$};
\draw (25,19) node {$G_1$};
\end{tikzpicture}
\end{center}
\legende{The graph for the reduction with a formula containing three clauses over three variables.}
\label{figbigfig}
\end{figure}
We add a demand of $2p$ from $s_1$ to $s_2$, and $2p+n$ from $t_1$ to $t_2$.

\begin{myclaim}
$G$ is acyclic.
\end{myclaim}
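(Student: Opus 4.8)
The plan is to build a topological order of $G$ out of two ingredients: a coarse \emph{cell potential} that increases along every arc joining two distinct gadgets, together with the internal acyclicity of each of the seven fixed gadgets. For a vertex $v$ lying in the grid cell $G(i,j)$ (row $i$, column $j$), set $\psi(v)=i+j$; relabel the rows and columns, if necessary, so that the vertical inter-cell arcs point from $G(i,j)$ to $G(i+1,j)$ and the horizontal ones from $G(i,j)$ to $G(i,j+1)$ --- Figure~\ref{figbigfig} shows every arc between two cells pointing downward or to the right. For the four terminals put $\psi(t_1)=\psi(s_1)=0$ and $\psi(t_2)=\psi(s_2)=(2p+n)+2p+1$, i.e. strictly below, respectively strictly above, every value taken on a cell.

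First I would sort the arcs of $G$ into three classes: those internal to a single gadget, those joining two adjacent gadgets, and those incident to a terminal. By the gluing rule of the construction, every inter-cell arc raises $\psi$ by exactly $1$; and since $t_1,s_1$ have only outgoing arcs while $t_2,s_2$ have only incoming ones, each terminal arc strictly increases $\psi$ as well. Internal arcs, on the other hand, keep the pair $(i,j)$ --- hence $\psi$ --- constant, and no $\psi$-preserving arc joins two distinct cells. Summing $\Delta\psi$ around any directed closed walk then yields a sum of non-negative terms equal to $0$, which forces every arc of the walk to be internal to one and the same cell. A directed cycle of $G$ would therefore have to be a directed cycle inside a single copy of $\textrm{YES}$, $\textrm{NO}$, $\textrm{ON}$, $\textrm{IF}$, $\textrm{LL}$, $\textrm{TT}$ or $\textrm{VV}$.

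The final step is to rule this out by a finite check: each of these seven graphs (Figures~\ref{figyesandno}, \ref{figrouteurs} and~\ref{figon}) admits a topological order of its own vertices, with the boundary terminals $a,b,c,\dots$ appearing as sources or sinks and the few interior vertices ordered so that each precedes its out-neighbours. I expect the main obstacle to be the second step rather than this one: one must verify that the terminal placement on the boundary of the gadgets is globally consistent, so that the gluing really produces only rightward and downward arcs and never a leftward or upward one. This two-level argument is genuinely needed, since a single coordinate-linear potential cannot succeed: already $\textrm{YES}$ contains an interior arc pointing left and another pointing up, so no linear function of the drawing coordinates increases along all arcs, whereas the cell potential $\psi$ together with per-gadget acyclicity does the job.
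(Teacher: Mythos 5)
Your proof is correct and is essentially the paper's own argument in formal dress: the paper's proof is the one-line observation that in the grid all arcs go from left to right or from top to bottom while the special subgraphs are all acyclic, and your cell potential $\psi$ plus the finite check of the seven gadgets is exactly that observation made explicit. The extra care you take (the terminal values of $\psi$, the remark that no single linear potential works because of the leftward and upward arcs inside $\textrm{YES}$) is sound but does not change the route.
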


\begin{proof}
Observe that in the grid, all arcs are from left to right or from top to bottom, and the special graphs are all acyclic.
\end{proof}

\begin{myclaim}
There is exactly one path going from top to bottom in each column of $G$, this path never leaves the column. There is exactly one path going from left to right in each row of $G$, this path never leaves the row.
\end{myclaim}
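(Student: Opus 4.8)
The plan is to combine the monotone structure of the acyclic grid with a simple counting argument at the four terminals, and then a ``permutation $\geq$ identity'' trick; the interior cut structure is not needed. First I would record the key structural fact, which is essentially the content of the preceding acyclicity claim: every arc of $G$ points either to the right (increasing column index) or downward (increasing row index), and every special subgraph ($\textrm{YES}$, $\textrm{NO}$, $\textrm{ON}$, $\textrm{IF}$, $\textrm{LL}$, $\textrm{TT}$, $\textrm{VV}$) receives its arcs only on its top and left sides and emits them only on its bottom and right sides. Consequently, along any directed path of $G$ both the column index and the row index are non-decreasing; in particular, once a path leaves a column or a row it can never return to it.

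Next I would use the demands to pin down the endpoints. The terminal $t_1$ has out-degree exactly $2p+n$, one arc into the top of each column, and this equals the demand $r(t_1t_2)$; hence in any solution all these arcs are used and exactly one $(t_1,t_2)$-path enters each column from the top. Symmetrically, $t_2$ has in-degree $2p+n$, so exactly one $(t_1,t_2)$-path reaches $t_2$ from the bottom of each column. Writing $\sigma(c)$ for the index of the column whose bottom is used by the path entering at the top of column $c$, these two facts say that $\sigma$ is a permutation of $\{1,\ldots,2p+n\}$. By the monotonicity from the first step we have $\sigma(c)\geq c$ for every $c$, and a permutation that is pointwise at least the identity must equal the identity, so $\sigma=\mathrm{id}$. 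Thus every top-to-bottom path enters and leaves through the same column $c$; since it can never move left (its column index is non-decreasing), it stays inside column $c$. This gives exactly one top-to-bottom path per column, confined to it.

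The row statement is the exact dual. The terminal $s_1$ has out-degree $2p$ (one arc into each row), equal to the demand $r(s_1s_2)$, and $s_2$ has in-degree $2p$; so there is a bijection $\pi$ of $\{1,\ldots,2p\}$ sending the entry row of each $(s_1,s_2)$-path to its exit row. Monotonicity of the row index gives $\pi(r)\geq r$ for all $r$, whence $\pi=\mathrm{id}$, and, as the row index can never decrease, every left-to-right path is confined to the single row it enters. Hence there is exactly one left-to-right path per row, never leaving it.

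I expect the only delicate point to be the verification underlying the first paragraph: that each of the subgraphs $\textrm{YES}$, $\textrm{NO}$, $\textrm{ON}$, $\textrm{IF}$, $\textrm{LL}$, $\textrm{TT}$, $\textrm{VV}$ indeed has all of its inputs on the top/left boundary and all of its outputs on the bottom/right boundary, together with the count that the four terminal degrees are exactly $2p+n$, $2p+n$, $2p$ and $2p$. These two facts are precisely what make both the monotonicity of paths and the forced saturation of the terminal arcs hold. Once that bookkeeping is confirmed, the argument that $\sigma$ and $\pi$ are the identity is immediate and requires no appeal to the horizontal or vertical cuts in the interior of the grid.
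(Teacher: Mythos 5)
Your proof is correct and is essentially the paper's own argument, which is stated there in one line: the directedness of the vertical and horizontal cuts is exactly your monotonicity observation (all arcs go right or down, so column and row indices never decrease), and the tightness of the four singleton cuts $\{s_1\},\{s_2\},\{t_1\},\{t_2\}$ is exactly your degree count forcing one path per column top/bottom and per row end. Your permutation-pointwise-at-least-identity step is just the explicit unpacking of what the paper leaves implicit, so there is no substantive difference in approach.
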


\begin{proof}
Because vertical and horizontal cuts are directed, and $\{s_1\}$, $\{s_2\}$, $\{t_1\}$, $\{t_2\}$ are tight cuts.
\end{proof}

\begin{myclaim}\label{claimmain}
If a horizontal path leaves $G_1$ by the lower edge of its row in the $(n+p)^{\textrm{th}}$ vertical cut, then the same path enters $G_1$ by the lower edge of its row in the $p^{\textrm{th}}$ vertical cut.
\end{myclaim}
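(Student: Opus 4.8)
The plan is to reduce the statement to a property of a \emph{single} horizontal path and then track its ``level'' as it crosses $G_1$. First I would invoke the two preceding claims: because all vertical and horizontal cuts are directed, each row of $G$ carries exactly one left-to-right path that never leaves its row, and each column carries exactly one top-to-bottom path confined to its column. Fix the horizontal path $H$ whose row meets $G_1$; since two rows of $G_1$ make up one row of $G$, inside $G_1$ this path lives in a band consisting of an upper $G_1$-row $R_u$ and a lower $G_1$-row $R_\ell$, and at each vertical cut $V_k$ with $p \le k \le n+p$ it uses exactly one of the two edges available to its band, which I call its \emph{level} (upper or lower) at $V_k$. The hypothesis is that the level at $V_{n+p}$ (the right boundary of $G_1$) is \emph{lower}, and the goal is to show the level at $V_p$ (the left boundary) is also \emph{lower}; equivalently, $H$ must not \emph{descend} from $R_u$ to $R_\ell$ anywhere inside $G_1$.

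Next I would carry out the argument column by column, combining three ingredients: the acyclicity of $G$, the arc-disjointness of $H$ with the unique vertical path $P^{(j)}$ of each $G_1$-column $j$, and the local behaviour of $\textrm{YES}$ and $\textrm{NO}$. A change of level of $H$ in column $j$ must use one of the two inter-row arcs joining the upper gadget to the lower gadget; by arc-disjointness the other is left to $P^{(j)}$. In a $\textrm{NO}$ gadget there is no route crossing the gadget diagonally (there is no $c$-to-$b'$ path), which pins down how $H$ and $P^{(j)}$ may traverse the column and forbids the descent outright. For the columns carrying a $\textrm{YES}$ gadget the diagonal route does exist, so a descent of $H$ is not excluded locally; here I would show that a descent forces $P^{(j)}$ onto the complementary rail and hence into a diagonal of its own, and then propagate this constraint so that it collides with the confinement of $P^{(j)}$ to its column or with the directedness of a suitable cut. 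Concretely, I expect to build a directed staircase cut $C = \delta^+(U)$ separating the part of the lower track lying to the right of the descent column from the rest; since the terminal cuts $\{s_1\},\{s_2\},\{t_1\},\{t_2\}$ are tight and every internal cut is directed, $C$ is tight, every arc of $C$ is used exactly once, and $H$ must cross it a prescribed number of times. Matching this prescribed parity against a descent should yield the contradiction.

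The main obstacle is precisely the $\textrm{YES}$ case: locally a $\textrm{YES}$ gadget \emph{does} admit a level change of $H$, so the proof cannot be purely local and must exploit a global invariant. I would spend most of the effort constructing the staircase set $U$ correctly---so that $C=\delta^+(U)$ is genuinely directed in the acyclic graph $G$, so that $H$ crosses it exactly once when it does not descend and an even (hence wrong) number of times when it does, and so that the vertical paths $P^{(j)}$ contribute the expected saturating arcs---and in verifying that the interface between $G_1$ and the $\textrm{TT}/\textrm{VV}$ routers on the right is read off correctly, so that ``leaving $G_1$ by the lower edge'' has the intended meaning. Once the cut is in place, tightness together with Claims~\ref{claimrouteur} and~\ref{claimverifieur} should close the argument, and the same invariant will feed directly into the remaining claims that read the assignment off the routers.
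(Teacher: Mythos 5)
Your proposal locates the argument in the wrong part of the graph, and this is a fatal gap rather than a fixable detail. You try to forbid a descent of the horizontal path from the upper to the lower track by reasoning internal to $G_1$ (local gadget analysis plus a directed ``staircase'' tight cut), but no internal argument can succeed, because the routing you want to exclude is perfectly feasible inside $G_1$: in both $\textrm{YES}$ and $\textrm{NO}$ there are paths from $a$ to $b'$, from $a$ to $c'$, and from $b$ or $c$ to $a'$, so a horizontal path can drop from $G_1$-row $2i-1$ to $G_1$-row $2i$ through either gadget while the column's unique vertical path uses the other vertical connector, all arcs remaining disjoint. Your concrete assertion that the missing $c$-to-$b'$ route in $\textrm{NO}$ ``forbids the descent outright'' misreads the gadget: $c \to b'$ is the \emph{vertical} track switch, the clause-checking move behind Claim~\ref{claimfirst}, not the move a descending horizontal path needs. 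The constraint of Claim~\ref{claimmain} is created entirely \emph{outside} $G_1$, which is the whole reason the reduction surrounds $G_1$ with routers on the diagonals. The paper's proof chains the hypothesis around a feedback loop: a lower exit of row $i$ means entry at $b_2$ of the $\textrm{TT}$ at $G(i,2p+n+1-i)$; Claim~\ref{claimrouteur} pushes the vertical path of that column out at $a_2$, and since $\textrm{ON}$/$\textrm{NO}$ squares pass exactly one horizontal and one vertical path and preserve tracks, it reaches the $\textrm{VV}$ below at $a_1$; Claim~\ref{claimverifieur} then forces the horizontal path of row $2p+1-i$ onto the $b_1$ track; propagating leftward, Claim~\ref{claimrouteur} applied to the $\textrm{LL}$ at $G(2p+1-i,i)$ forces the vertical path of column $i$ onto track $1$, and applied once more to the $\textrm{IF}$ at $G(i,i)$ it forces the horizontal path of row $i$ onto the lower entry into $G_1$. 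Nothing in your outline engages this loop; treating the routers as a boundary interface to be ``read off correctly'' discards exactly the mechanism that makes the claim true.

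Your parity device is also unworkable on its own terms. In the acyclic graph $G$, a directed path meets a directed cut $\delta^+(U)$ at most once (it can never re-enter $U$), so there is no ``even number of crossings'' to contradict, and a staircase cut crossed once by a non-descending path would equally be crossed once by a descending one. The tight cuts $\{s_1\}$, $\{s_2\}$, $\{t_1\}$, $\{t_2\}$ together with the directed vertical and horizontal cuts already yield everything they can --- the confinement of each path to its row or column, established in the previous claim --- and they cannot distinguish the two tracks within a single row, since both tracks lie strictly between consecutive horizontal cuts of $G$.
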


\begin{proof}
In each square $\textrm{ON}$ or $\textrm{NO}$, there is exactly one path from $b$ to $b'$ or from $c$ to $c'$, and one path from $a$ to $a'$, because of properties of these gadgets. Consequently, a vertical path leaves an $\textrm{IF}$ gadget by the left if and only if it enters the $\textrm{LL}$ gadget of the same column by the left. Similarly for paths between $\textrm{LL}$ and $\textrm{VV}$, for paths between $\textrm{TT}$ and $\textrm{VV}$, for paths between $\textrm{IF}$ and $G_1$ and for paths between $G_1$ and $\textrm{TT}$. If a path leaves $G_1$ by the lower edges of its row, say row $i \in \ival{1}{p}$, then it enters $G(i,2p+n+1-i)$ by vertex $b_2$. The vertical path of column $2p+n+1-i$ leaves $G(i,2p+n+1-i)$ by vertex $a_2$ by Claim~\ref{claimrouteur}, and then enters $G(2p+1-i, 2p+n+1-i)$ by vertex $a_1$. Thus the horizontal path in row $2p+1-i$ goes in $G(2p+1-i, 2p+n+1-i)$ using vertex $b_1$ by Claim~\ref{claimverifieur}, and leaves $G(2p+1-i,i)$ by vertex $a_1$. Using Claim~\ref{claimrouteur}, the vertical path in column $i$ goes in $G(2p+1-i,i)$ by vertex $b_1$, thus goes out $G(i,i)$ by vertex $a_1$. By Claim~\ref{claimrouteur} again, the horizontal path of row $i$ leaves $G(i,i)$ by $b_1$, and then enters $G_1$ by the lower edge, proving the claim.
\end{proof}

\begin{proof}(of Theorem~\ref{th:diflot})
Claim~\ref{claimmain} proves that the path in $G_1$ satisfies the condition $(ii)$ of Claim~\ref{claimfirst}: if there is a solution to the arc disjoint path problem, the formula is satisfiable. The converse is also true, it is sufficient to extend the solution for $G_1$ naturally.\\
As the construction is obviously polynomial, we found a Karp reduction between the two problems. The arc-disjoint paths problem being in NP, Theorem~\ref{th:diflot} is proved. Note that we can identify $t_1$ with $s_2$, and $t_2$ with $s_1$, thus proving that it is still true with $G+H$ planar and only two terminals.
\end{proof}

\begin{corollary}
The arc-disjoint paths problem in planar graphs is NP-complete, even if the demand graph has only two arcs, with one of request $1$ (one flow plus one path).
\end{corollary}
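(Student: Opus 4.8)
The plan is to start from the acyclic instance $(G,H)$ produced by the reduction of Theorem~\ref{th:diflot}, in which $H$ consists of the two parallel classes $s_1s_2$, of request $2p$, and $t_1t_2$, of request $2p+n$, and to turn the second class into a single path by a standard \emph{serialization}. Write $k := 2p+n$ for the number of columns, let $\alpha_1,\dots,\alpha_k$ be the top entry vertices of the columns (currently the out-neighbours of $t_1$) and $\beta_1,\dots,\beta_k$ the bottom exit vertices (currently the in-neighbours of $t_2$), indexed from left to right. I would delete every arc incident to $t_1$ except $t_1\alpha_1$ and every arc incident to $t_2$ except $\beta_kt_2$, and add the chaining arcs $\beta_j\alpha_{j+1}$ for $j\in\ival{1}{k-1}$. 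The new demand graph $H'$ keeps the class $s_1s_2$ of request $2p$ (the \emph{flow}) and replaces $t_1t_2$ by a single arc of request $1$ (the \emph{path}); this is exactly the ``one flow plus one path'' situation. Since the arc-disjoint paths problem is in NP and the construction is clearly polynomial, it remains to check that the new graph $G'$ is planar and that the reduction is correct.

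Planarity is the first point to verify. All of $s_1,s_2,t_1,t_2$ lie on the infinite face of $G$, so after removing the two fans the vertices $\alpha_1,\dots,\alpha_k,\beta_k,\dots,\beta_1$ appear in this cyclic order along the outer boundary (the top from left to right, then the right side, then the bottom from right to left). Reading indices on this cycle, the arc $\beta_j\alpha_{j+1}$ joins the positions of $\alpha_{j+1}$ and $\beta_j$, and as $j$ increases these chords are nested rather than crossing; hence the whole family $\{\beta_j\alpha_{j+1}\}$ can be drawn in the outer face without intersections, and $G'$ is planar.

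For correctness I would re-use, essentially verbatim, the tight-cut arguments of Theorem~\ref{th:diflot}. The horizontal class $s_1s_2$ is untouched, so its $2p$ paths still saturate every (directed) vertical cut; consequently no subpath of the single $t_1t_2$-path may cross a vertical cut, i.e.\ it must stay within one column at a time. Starting at $\alpha_1$ it is therefore forced down column $1$ to $\beta_1$, then along the unique out-arc $\beta_1\alpha_2$, down column $2$, and so on, ending at $\beta_k$ and reaching $t_2$; cutting this path at the chaining arcs recovers exactly one top-to-bottom path per column, i.e.\ a feasible routing of the original request $2p+n$. Conversely, given a solution of the Theorem~\ref{th:diflot} instance, its $2p+n$ column paths concatenate through the chaining arcs into a single $t_1t_2$-path arc-disjoint from the horizontal paths. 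Thus the new instance is feasible if and only if the original one is, and Claims~\ref{claimfirst}--\ref{claimmain} then tie feasibility to satisfiability of the formula.

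The one place needing genuine care is the backward direction: I must make sure the single path cannot ``cheat'' by leaving a column through a horizontal arc or by failing to visit every column. Both are ruled out by the saturation of the vertical cuts (which forbids horizontal excursions) together with the fact that each $\beta_j$ now has out-degree one and each $\alpha_{j+1}$ in-degree one among the added arcs, so the serpentine route is the only possibility. This is the main obstacle, but it is exactly the content already established for Theorem~\ref{th:diflot}, so no new combinatorial work is required.
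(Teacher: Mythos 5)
Your serialization idea is essentially the paper's, but you orient the chaining arcs the wrong way, and this opens a genuine gap at exactly the step you flagged. With arcs $\beta_j\alpha_{j+1}$ pointing left-to-right, each vertical cut becomes a directed cut with $2p+1$ forward arcs ($2p$ grid arcs plus the chaining arc) and no backward arcs, crossed by $2p+1$ paths (the $2p$ horizontal ones and the single $t_1t_2$-path). Tightness then only says that the $2p+1$ paths use the $2p+1$ arcs bijectively; it does not say which path uses which arc. Nothing forbids the scenario in which some $s_1s_2$-path descends column $j$ and takes the chaining arc $\beta_j\alpha_{j+1}$ while the $t_1t_2$-path crosses that cut through a row arc instead --- and the gadgets permit such turns locally (in $\textrm{NO}$ and $\textrm{ON}$ one can go from the left entry to the bottom exit, in $\textrm{LL}$ from a top entry straight down to $b$, in $\textrm{VV}$ from $b_1$ to $a$). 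Your stated justifications do not close this: ``saturation of the vertical cuts'' forbids extra crossings, not a swap of which commodity uses the chaining arc; and the in/out-degree-one property of $\alpha_{j+1}$ and $\beta_j$ only shows that the chaining arcs concatenate into a single serpentine route and that whoever reaches $\beta_j$ must continue to $\alpha_{j+1}$ --- not that the $t_1t_2$-path is the one travelling it. (Your sentence ``no subpath of the single $t_1t_2$-path may cross a vertical cut'' even contradicts your own construction, where that path must cross every vertical cut exactly once.) Once horizontal paths may leave their rows, the horizontal cuts are no longer directed, Claims~\ref{claimrouteur}--\ref{claimmain} no longer apply, and the backward direction of the reduction is unproven.

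The paper avoids this precisely by orienting the chaining arcs against the flow: it adds arcs from the bottom of each column to the top of the next column \emph{to the left}, removes $t_1,t_2$, and demands one path from the top of the rightmost column to the bottom of the leftmost, keeping the demand arc $s_2s_1$ of request $2p$. Then in every vertical cut the chaining arc is the \emph{unique} arc entering the left shore, while the $2p$ forward grid arcs match the request crossing the cut, so the cut is tight in the directed sense ($c(\delta^+_G(U)) - r(\delta^-_H(U)) = 0$): each $s$-cycle leaves the shore once via a grid arc and re-enters only via its demand arc, and the unit cycle, which leaves via its own demand arc, can re-enter only through the chaining arc. Hence the unit path is forced through every chaining arc, the flow paths are forced off them, and the column/row structure of Theorem~\ref{th:diflot} (hence Claims~\ref{claimfirst}--\ref{claimmain}) survives intact. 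Your variant might be reparable by a separate argument excluding the cheating routings, but as written the key forcing step is missing; reversing the orientation as the paper does fixes it immediately.
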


\begin{proof}
We modify the preceding reduction. We remove $t_1$ and $t_2$, and add arcs from the bottom of a column, to the top of the next column to the left. This preserves planarity. Then we add a demand arc from the bottom of the leftmost column to the top of the rightmost column, with demand $1$. We keep the arc $s_2s_1$. The new demand must be routed through the new arcs because of the vertical tight cuts. Thus, this transformation preserves the property of the original reduction.
\end{proof}

\bigskip

The problem where the total amount of demands is fixed remains open in digraphs, and in particular, the special case where we want to find a cycle in a planar digraph, that goes through two specified vertices. This last problem is mentioned in~\cite{Schrijverbook}.

\end{document}